\renewcommand{\paragraph}{\roman{paragraph}}
\renewcommand\title[1]{\gdef\@title{\reset@font\Large\bfseries #1}}
\renewcommand\section{\@startsection {section}{1}{\z@}%
	{-3.5ex \@plus -1ex \@minus -.2ex}%
	{2.3ex \@plus.2ex}%
	{\normalfont\large\bfseries}}
\renewcommand\subsection{\@startsection{subsection}{2}{\z@}%
	{-3ex\@plus -1ex \@minus -.2ex}%
	{1.5ex \@plus .2ex}%
	{\normalfont\normalsize\bfseries}}
\renewcommand\subsubsection{\@startsection{subsubsection}{3}{\z@}%
	{-2.5ex\@plus -1ex \@minus -.2ex}%
	{1.5ex \@plus .2ex}%
	{\normalfont\normalsize\bfseries}}
\def\@runningauthor{}\newcommand{\runningauthor}[1]{\def\runningauthor{#1}}
\def\@runningtitle{}\newcommand{\runningtitle}[1]{\def\runningtitle{#1}}
\renewcommand{\ps@plain}{%
	\renewcommand{\@evenhead}{\footnotesize\scshape \hfill\runningauthor\hfill}
	\renewcommand{\@oddhead}{\footnotesize\scshape \hfill\runningtitle\hfill}}
\newcommand{\Z}{\mathbb{Z}}
\newcommand{\F}{\mathbb{F}}
\newcommand{\x}{\mathbf{x}}
\newcommand {\D}{\mathbf{D}}
\newcommand {\C}{{\mathcal{C}}}
\newcommand {\ccc}{{\mathbf{c}}}
\newcommand {\0}{\mathbf{0}}
\g@addto@macro\bfseries{\boldmath}
\theoremstyle{plain}
\newtheorem{theorem}{Theorem}
\newtheorem{lemma}[theorem]{Lemma}
\newtheorem{corollary}[theorem]{Corollary}
\newtheorem{proposition}[theorem]{Proposition}
\theoremstyle{definition}
\newtheorem{definition}[theorem]{Definition}
\newtheorem{example}[theorem]{Example}
\newtheorem{problem}[theorem]{Problem}
\theoremstyle{remark}
\newtheorem{remark}[theorem]{Remark}
\title{Some bounds on the cardinality of the $b$-symbol weight spectrum of codes
}
\runningtitle{Some bounds on the cardinality of the $b$-symbol weight spectrum of codes}
\author{Hongwei Zhu\thanks{Hongwei Zhu is with the Tsinghua Shenzhen International Graduate School, Tsinghua University, Shenzhen 518055, China. E-mail: zhwgood66@163.com, hongweizhu@sz.tsinghua.edu.cn.}
	\and Shitao Li\thanks{Shitao Li is with the School of Mathematical Sciences, Anhui University, Hefei 230601, China. E-mail: lishitao0216@163.com.}
	\and Minjia Shi\thanks{Minjia Shi is with the Key Laboratory of Intelligent Computing and Signal Processing, Ministry of Education, School of Mathematical Sciences, Anhui
		University, Hefei 230601, China, and also with the State Key Laboratory of Information Security, Institute of Information Engineering, Chinese Academy
		of Sciences, Beijing 100093, China. E-mail: smjwcl.good@163.com.}
	\and Shu-Tao Xia\thanks{Shu-Tao Xia is with the Tsinghua Shenzhen International Graduate School, Tsinghua University, Shenzhen 518055, China. E-mail: xiast@sz.tsinghua.edu.cn.}
	\and Patrick Sol\'e\thanks{Patrick Sol\'e is with the Centrale Marseille, CNRS, I2M, Aix Marseille University, Marseille 13009, France, E-mail: sole@enst.fr.}
}
\runningauthor{}
\date{}
\begin{document}

	\maketitle
	
	\thispagestyle{empty}
	
	\begin{abstract}
The size of the Hamming distance spectrum of a code has received great attention in recent research. The main objective of this paper is to extend these significant theories to the $b$-symbol distance spectrum. We examine this question for various types of codes, including unrestricted codes, additive codes, linear codes, and cyclic codes, successively.
For the first three cases, we determine the maximum size of the $b$-symbol distance spectra of these codes smoothly. For the case of cyclic codes, we introduce three approaches to characterize the upper bound for the cardinality of the $b$-symbol weight spectrum of cyclic codes, namely the period distribution approach, the primitive idempotent approach, and the $b$-symbol weight formula approach. As two by-products of this paper, the maximum number of symplectic weights of linear codes is determined, and a basic inequality among the parameters $[n,k,d_H(\C)]_q$ of cyclic codes is provided.
	\end{abstract}
	{\bf Keywords:} $b$-symbol distance spectrum, period distribution, primitive idempotent, $b$-symbol weight formula\\
	{\bf MSC(2010):} 94 B15, 94 B25, 05 E30

	\section{Introduction}	
	\hspace{0.5cm}One of the most important topics in coding theory is to characterize the parameters of codes with a given structure. The parameters of a code include its length, size, minimum distance, cover radius, and so on. In 1973, Delsarte \cite{DE} introduced four fundamental parameters of an unrestricted code $C$: minimum distance $d(C)$, the number $s(C)$ of distinct distances of $C$, $d(C')$, and $s(C')$, where $C'$ is the MacWilliams transform of $C$. In the linear case, these four fundamental parameters are reduced to the minimum weight and the weight spectrum of the given code and of its dual code. Delsarte \cite{DE} pointed out the combinatorial significance of these four fundamental parameters and showed that they have close connection with orthogonal arrays and combinatorial designs. He also showed that $s(C)$ is intimately related to other parameters of the given $C$. For instance, it can estimate the size $M$ of a code $C$ as follows:
	\begin{equation}\label{de}
	q^n\Bigg/\sum_{j=0}^{s'(C)}{n \choose j}(q-1)^j \leq M\leq \sum_{j=0}^{s(C)}{n \choose j}(q-1)^j,
	\end{equation}
where $n$ denotes the length of $C$ and $q$ denotes the size of the alphabet. It was also shown in the same paper \cite{DE} that the covering radius of $C$ is less than or equal to $s(C')$.

Based on these properties, it is always an interesting topic to study the four fundamental parameters of a code. Since the beginning of coding theory, most scholars tended to focus on the minimum distance of a code, but often ignored the other two parameters due to their difficulty. Recently, there has been an increasing interest in these two parameters of codes. Ezerman {\em et al.} \cite{MDS-IT-2011} determined the weight spectrum of linear maximum distance separable (MDS) codes with parameters given by the MDS conjecture. Alderson \cite{Ald1} developed this result to general MDS codes including non-linear MDS codes.
However, it is very hard to determine the exact value of $s(C)$ for a general code $C$. A simpler objective is to give some bounds on the exact value of $s(C)$.
For this topic, Shi {\it et al.} \cite{SZS} showed the maximum value of $s(C)$ is $\frac{M(M-1)}{2}$ when $C$ is an unrestricted code with size $M$. For a linear $[n,k]_q$ code $C$, the authors in the same paper \cite{SZS} provided an upper bound on $s(C)$, namely $s(C)\leq \frac{q^k-1}{q-1}$, and showed that this bound is achievable for $q=2$ or $k=2$. They also conjectured that this bound is achievable for general linear codes. Subsequently, this conjecture was proven in \cite{AN} by two different methods. The work of \cite{AN} and \cite{SZS} also raises a derivative problem, which is the existence of maximum weight spectrum codes \cite{Ald2}.

If the linear code $C$ is given some interesting structures, how will the size of the weight spectrum of $C$ change? For this problem, Shi {\em et al.} \cite{SLNS,SNS} obtained some lower and upper bounds on the size of the weight spectrum of cyclic codes and quasi-cyclic codes. Chen and Zhang \cite{ChenZhang} proposed a tight upper bound on the size of the weight spectrum of cyclic codes by utilizing the automorphism group of cyclic codes. Chen {\it et al.} \cite{ChenFu} further refined the results of \cite{ChenZhang} by using a larger automorphism group of cyclic codes than the group used in \cite{ChenZhang}. Very recently, the methods in \cite{ChenFu,ChenZhang} respectively extended to quasi-cyclic codes \cite{LSL-arxiv} and constacyclic codes \cite{ZC-FFA-1}. Luo {\em et al.} \cite{LCFL-IT} established two sharp upper bounds on the size of the weight spectrum of a linear code with prescribed automorphism. Note that the aforementioned works are all conducted under the Hamming metric.
Chen {\it et al.} in \cite{ChenFu} mentioned the following open problem:
\begin{problem}
It would be interesting to find tight upper bounds on the number of symbol-pair weights of cyclic codes.
\end{problem}
The main objective of this paper is to solve aforementioned problem, but we are no longer limited to symbol-pair weight and cyclic codes. Instead, we replace them with $b$-symbol weight and general codes respectively. It is not difficult to imagine that this will be more difficult but also more interesting.

 The $b$-symbol metric is a generalization of the Hamming metric, introduced by Cassuto and Blaum \cite{CB1,CB} and motivated by certain types of computer memories. Despite being proposed only a decade ago, extensive research has been conducted on the $b$-symbol metric. For instance, bounds on parameters of codes \cite{CB1,CB,CL,CLL,Eli}, algebraic constructions \cite{CL}, asymptotic performance \cite{CL}, the $b$-symbol weight distributions of cyclic codes \cite{SOS,Z,ZHW}, and decoding \cite{Yaa,Yaa1} have all been explored under the $b$-symbol metric. Notably, recent works on $b$-symbol MDS codes \cite{C+,DGZ,DTG,KZL} have demonstrated the advantages of the $b$-symbol metric over the Hamming metric, especially in encoding information with low redundancy. Additionally, Shi {\it et al.} \cite{BUG} investigated the connections between the $b$-symbol metric and the $r$-th generalized Hamming metric. Liu and Pan studied the generalized $b$-symbol weight in \cite{LiuPan1, LiuPan2}, which can be viewed as a fusion of the $r$-th generalized Hamming weight and $b$-symbol weight.

In this paper, we explore the size of the $b$-symbol weight spectra of unrestricted codes, additive codes, linear codes, and cyclic codes in sequence. For the first three cases, we determine the maximum size of the $b$-symbol distance spectra of these codes smoothly. When the research object is cyclic codes, some upper bounds on the $b$-symbol weight spectrum of cyclic codes are provided by using nice algebraic structures of cyclic codes and unifying the two approaches presented in \cite{SLNS,ChenZhang}. We also demonstrate that the approaches \cite{SLNS,ChenZhang} are equally effective for studying the $b$-symbol weight spectrum of cyclic codes. Additionally, we propose a third approach for investigating the $b$-symbol weight spectrum of cyclic codes, which involves analyzing the $b$-symbol weight of cyclic codewords. Through a performance analysis of the three approaches, we have found that the upper bounds provided by the first and second approaches are equivalent, while those given by the third approach are more accurate but have limited applicability. As two by-products of this paper, we answer how many symplectic weights can a linear code have at most (Corollary \ref{addbound}), and provide a fundamental lower bound on the minimum Hamming distance of cyclic codes (Theorem \ref{the9}). To the best of our knowledge, this lower bound has not been previously established. Although this lower bound is straightforward to derive, it holds significant importance. Previous studies on the minimum Hamming distance of cyclic codes have primarily relied on their defining sets, whereas the lower bound given by Theorem \ref{the9} is independent of defining sets.

The remainder of the paper is organized as follows: In Section II, we introduce the research strategy of this paper along with some notation and definitions. Section III delves into the study of the $b$-symbol weight spectrum of unrestricted codes. Section IV addresses the case of additive codes. We also examine the size of the symplectic weight spectrum of linear codes in Section IV, as the symplectic metric is closely related to additive codes. Section V focuses on exploring the $b$-symbol weight spectrum of linear codes. In Section VI, we further delve into the $b$-symbol weight spectrum of cyclic codes. Finally, Section VII concludes the paper. We have included some technical but significant results in the appendix to illustrate the advantages of the period distribution approach.

	\section{Preliminaries}

Our research strategy unfolds as follows: We initially explore codes with the weakest constraints, followed by codes with more constraints. In other words, we present the bounds on the cardinality of the $b$-symbol weight spectrum in the sequence of unrestricted codes, additive codes, linear codes, and cyclic codes. Their hierarchical relationship is depicted in Figure 1. Among these, we are particularly interested in the $b$-symbol weight spectrum of cyclic codes. It's important to note that the cyclic codes discussed in this paper are all linear codes.
	\begin{figure}[htbp]
		\centering
		\includegraphics[width=0.7\textwidth]{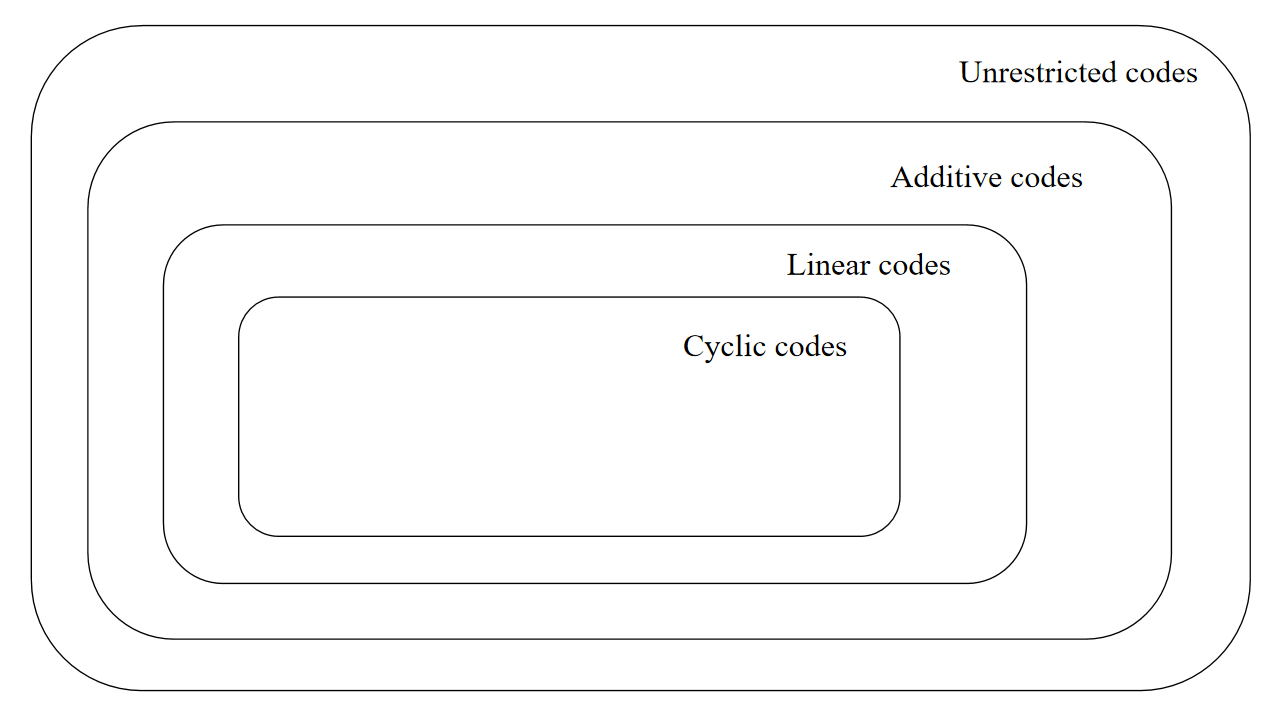}
		\caption{The Venn diagram of these codes}
		\label{fig1}
	\end{figure}
	\subsection{Some notation fixed throughout this paper}
	\hspace{0.5cm}We begin by introducing some notation and definitions:
	\begin{itemize}
  \setlength{\itemsep}{3pt}
\setlength{\parsep}{0pt}
\setlength{\parskip}{0pt}
		\item For any finite set $A$, $|A|$ denotes the cardinality of $A$, and $A^{*}$ represents $A$ without the element $0$.
		\item Let $m$ and $s$ be positive integers, and $p$ be a prime. We define $q:=p^s$ and $Q:=q^m$.
		\item $\mathbb{Z}_n=\{0,1,2,\ldots,n-1\}$ denotes the ring of integers modulo $n$.
		\item$\F_q$ denotes the finite field with $q$ elements, and $\F_q^n$ represents the vector space of all $n$-tuples over the finite field $\F_q$.
		\item $[i]$ denotes the set $\{1,2,\ldots,i\}$.
		\item $C$, $\mathfrak{C}$, and $\C$ denote the unrestricted code, the additive code, and the linear code, respectively.
		\item $\alpha$ denotes a primitive element of $\F_{Q}$.
		\item $M_a(x)$ denotes the minimal polynomial of $a\in \F_{Q}$ over $\F_q$.
		\item ${\rm{Tr}}_{Q/q}$ denotes the relative trace function from $\F_{Q}$ to $\F_q$.
		\item $C_i$ denotes the $q$-cyclotomic coset modulo $n$ containing $i$.
		\item $\Gamma$ is the set of all coset leaders of the $q$-cyclotomic cosets modulo $n$.
		\item Let  $\mathbf{\overline{a}}$ denote a sequence.
		\item Let ${\rm Aut}(\C)$ denote the automorphism group of $\C$.
		\item $\mu(\cdot)$ denotes the M$\ddot{o}$bius function.
		\item $\phi(\cdot)$ denotes Euler's totient function.
	\end{itemize}

	\subsection{The Hamming metric and the $b$-symbol metric}
	\hspace{0.5cm}The research objects in this paper fall into four categories:
	\begin{itemize}
  \setlength{\itemsep}{3pt}
\setlength{\parsep}{0pt}
\setlength{\parskip}{0pt}
		\item Unrestricted code with parameters $(n,M)_q$: An unrestricted code $C$ of length $n$ over $\F_q$ is a subset of $\F_q^n$, where $M=|C|$.
		\item Additive code with parameters $(n,q^{k_0})_{q^t}$: An additive code $\mathfrak{C}$ of length $n$ over $\F_{q^t}$ is a subset of $\F_{q^t}^n$ closed under $\F_q$-linear vector addition, where $q^{k_0}=|\mathfrak{C}|$ and $0\leq k_0\leq tn$. So $\mathfrak{C}$ is an $\F_q$-linear subgroup of $\F_{q^t}^n$.
		\item Linear code with parameters $[n,k]_q$: A linear code $\C$ of length $n$ over $\F_q$ is a $k$-dimensional vector subspace of $\F_q^n$.
		\item Cyclic code with parameters $[n,k]_q$: Cyclic codes are a very important subclass of linear codes, and their specific definition and algebraic properties will be introduced in detail in Section \ref{cycliccodes}.
	\end{itemize}

	
		The Hamming weight $w_H(\mathbf{x})$ of a vector $\mathbf{x}$ is defined as the number of non-zero coordinates in $\mathbf{x}$. The Hamming distance $d_H(\mathbf{x},\mathbf{y})$ between two vectors $\mathbf{x}$ and $\mathbf{y}$ is defined as the number of coordinates in which $\mathbf{x}$ and $\mathbf{y}$ differ.
	For a code $C$, we define the  Hamming distance spectrum and the Hamming weight spectrum of $C$ as $\mathbf{D}_H(C)=\{d_H(\ccc,\ccc^{\prime})|\ccc,\ccc^{\prime}\in C{\rm ~and~}\ccc\neq\ccc^{\prime}
	\}$ and $\mathbf{W}_H(C) = \{w_H(\ccc)|\ccc\in C\setminus\{\0\}\},$ respectively.

	The $b$-symbol metric is a generalization of the Hamming metric. Its definition is as follows:
		Let $b$ be a positive integer with $1\leq b\leq n.$ The $b$-symbol weight $w_b(\x)$ is the Hamming weight of $\pi_b(\x)$, where $\pi_b(\x)\in (\F_q^b)^n$ and
		$$\pi_b(\x)=((x_0,\ldots,x_{b-1}),(x_{1},\ldots,x_{b}),
		\cdots,(x_{n-1},\ldots,x_{b+n-2({\rm mod}~n)})).$$
		The $b$-symbol distance $d_b(\x,\mathbf{y})$ is the Hamming weight of $\pi_b(\x-\mathbf{y})$. When $b=1$, $w_1(\x)=w_H(\x)$ and $d_1(\x,\mathbf{y})=d_H(\x,\mathbf{y})$.
	Similarly, we define the $b$-symbol distance spectrum and the $b$-symbol weight spectrum of a code $C$ as $\mathbf{D}_b(C) = \{d_b(\ccc,\ccc^{\prime})|\ccc,\ccc^{\prime}\in C{\rm ~and~}\ccc\neq\ccc^{\prime}\}$ and $\mathbf{W}_b(C) =
	\{w_b(\ccc)|\ccc\in C\setminus\{\0\}\},$ respectively.

	\subsection{Group action and automorphism group}

Assume that a finite group acts on a finite set $A$. For each $a\in A$, $Ga=\{ga|g\in G\}$ is called an orbit of this group action containing $a$ ($G$-orbit for short). The set of all orbits of $G$ on $A$ is denoted as $G\backslash A=\{Ga|a\in A\}$.  A monomial matrix is a square matrix with exactly one nonzero entry in each row and column. A monomial matrix $M$ can be written either in the form $DP$ or $PD$, where $P$ is a permutation matrix and $D$ is an invertible diagonal matrix.
The automorphism group ${\rm Aut}(\C)$ of a $q$-ary linear code $\C$ of length $n$ is the set consisting of all $n\times n$ monomial matrices $A$ over $\F_q$ such that $\C A=\C$.

	Let $\mathsf{G}$ be a subgroup of ${\rm Aut}(\C)$. It can be observed that the number of orbits belonging to ${\rm Aut}(\C)$ on $\C^*$ is not greater than the number of orbits belonging to $\mathsf{G}$ on $\C^*$.
	
	Set $\tau$ and $\sigma_a$ be two $\F_q$-linear maps on $\frac{\F_q[x]}{(x^n-1)}$, respectively:
	\begin{eqnarray*}
		\tau: \frac{\F_q[x]}{(x^n-1)}&\longrightarrow&\frac{\F_q[x]}{(x^n-1)}\\
		\sum_{i=0}^{n-1}a_ix^i&\longmapsto&\tau\left(\sum_{i=0}^{n-1}a_ix^i\right)=\sum_{i=0}^{n-1}a_ix^{i+1}({\rm mod}~x^n-1),\\
		\sigma_a: \frac{\F_q[x]}{(x^n-1)}&\longrightarrow&\frac{\F_q[x]}{(x^n-1)}\\
		\sum_{i=0}^{n-1}a_ix^i&\longmapsto&\sigma_a\left(\sum_{i=0}^{n-1}a_ix^i\right)=\sum_{i=0}^{n-1}aa_ix^{i}({\rm mod}~x^n-1).
	\end{eqnarray*}
	It is not difficult to check that for any cyclic code $\C$, the group $\left\langle \tau, \{\sigma_a|a\in\F_q^*\}\right\rangle$ is a subgroup of ${\rm Aut}(\C)$.

\subsection{Some basics about  periodic sequences}
	\hspace{0.5cm}We introduce the definition of runs in sequences in the following.
	For $\lambda\in \F_q$ and $\eta, \xi\in \F_q\setminus\{\lambda\}$, if $\eta,\underbrace{\lambda,\ldots,\lambda}_m, \xi$ appears in the sequence $\mathbf{\overline{a}}$, then we say that $\underbrace{\lambda,\ldots,\lambda}_m$ is a run of $\lambda$'s of length $m$.
	In particular, we set $\mathbf{0}_i=(\alpha,\underbrace{0,\ldots,0}_i,\beta)$, where $\alpha, \beta\in\F_q^*$ since it plays a key role in computing the $b$-symbol weight of a vector later on.
	We generalize the definition of runs in sequences to the definition of runs in vectors. For any vector $\mathbf{a}=(a_0,a_1,\ldots,a_{n-1})\in \F_q^n$, we define a circumferential vector $cir(\mathbf{a})$ as follows:
	\vspace{0.5cm}
	\begin{center}\begin{tikzpicture}
			$\def \radius {1.8cm}
			\def\startDegree{50}
			\def\n{10}
			\foreach \s in {1,...,\n}
			{
				\draw[<-, >=latex] ({360/\n * (\s - 1)-\startDegree}:\radius)
				arc ({360/\n * (\s - 1)-\startDegree}:{360/\n * (\s - 1)-\startDegree+360/\n}:\radius);
			}
			$,
			\put(-12,56){\fontsize{12}{2.5}\selectfont $ a_0$}
			\put(25,50){\fontsize{12}{2.5}\selectfont $ a_1 $}
			\put(50,22){\fontsize{12}{2.5}\selectfont $ a_2 $}
			\put(-6,-42){\fontsize{12}{2.5}\selectfont $ \ldots $}
			\put(-80,12){\fontsize{12}{2.5}\selectfont $ a_{n-2} $}
			\put(-60,42){\fontsize{12}{2.5}\selectfont $ a_{n-1} $}
		\end{tikzpicture}
	\end{center}
	In fact, $cir(\mathbf{a})$ can be regarded as a periodic sequence with a period of $n$, so we can call $(a,\cdots,a)$ in $cir(\mathbf{a})$ that appears $m$ consecutive times as a run of $a$ in the vector with length $m$. Let
	$\Psi(\mathbf{a},\mathbf{0}_i)$ denote the number of occurrences of $\mathbf{0}_i$ on the circumferential vector $cir(\mathbf{a})$.
	\begin{definition}
		The $\mathbf{0}$'s run distribution of $\mathbf{a}$ is defined by
		$\{\Psi(\mathbf{a},\mathbf{0}_1), \Psi(\mathbf{a},\mathbf{0}_{2}),\ldots, \Psi(\mathbf{a},\mathbf{0}_{n})\}.$
	\end{definition}

	For any vector $\mathbf{a}=(a_0,a_1,\ldots,a_{n-1})\in\F_q^n,$ by the definition of $b$-symbol weight, then 
$w_b(\mathbf{a})=n-\left|\{i|a_i=a_{i+1}=\cdots=a_{i+b-1}=0,0\leq i\leq n-1\}\right|.$
	If the $\mathbf{0}$'s run distribution of $\mathbf{a}$ is given, we have the following formula to calculate the $b$-symbol weight of $\mathbf{a}$.
	
	\begin{lemma}\cite{BUG}\label{wbformula11}
		For any vector $\mathbf{a}=(a_0,a_1,\ldots,a_{n-1})\in\F_q^n,$ we have
		\begin{equation}\label{formula}
			w_b(\mathbf{a})=n-\sum\limits_{i=b}^{n-1}(i-b+1)\cdot \Psi(\mathbf{a},\mathbf{0}_i).
		\end{equation}
	\end{lemma}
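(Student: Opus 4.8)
The plan is to count, for a fixed vector $\mathbf{a}=(a_0,\ldots,a_{n-1})$, the cardinality of the index set $S=\{i\mid a_i=a_{i+1}=\cdots=a_{i+b-1}=0,\ 0\leq i\leq n-1\}$ appearing in the displayed identity $w_b(\mathbf{a})=n-|S|$ just above the lemma (all indices read modulo $n$), and to express $|S|$ in terms of the $\mathbf{0}$'s run distribution $\{\Psi(\mathbf{a},\mathbf{0}_1),\ldots,\Psi(\mathbf{a},\mathbf{0}_n)\}$. An index $i$ lies in $S$ precisely when $i$ is the starting position of a window of $b$ consecutive zeros in the circumferential vector $cir(\mathbf{a})$; such a window must sit inside a maximal run of zeros. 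So the first step is the dichotomy: either $\mathbf{a}=\mathbf{0}$, in which case $|S|=n$, both sides equal $0$, and we are done trivially; or $\mathbf{a}\neq\mathbf{0}$, in which case every maximal zero run in $cir(\mathbf{a})$ has a well-defined finite length, is flanked on both sides by nonzero entries, and is therefore counted by exactly one of the $\Psi(\mathbf{a},\mathbf{0}_i)$ with $1\leq i\leq n$ (here $\mathbf{0}_i=(\alpha,\underbrace{0,\ldots,0}_i,\beta)$ with $\alpha,\beta\in\F_q^*$, as defined before the lemma).

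Assuming $\mathbf{a}\neq\mathbf{0}$, the second step is a local count: a maximal zero run of length exactly $\ell$ contains a length-$b$ sub-window of zeros if and only if $\ell\geq b$, and in that case it contains exactly $\ell-b+1$ such windows (the starting positions are the first $\ell-b+1$ positions of the run). Since distinct maximal runs are disjoint and their union is the full zero set of $cir(\mathbf{a})$, summing this local count over all maximal zero runs gives
\begin{equation*}
|S|=\sum_{\ell=b}^{n-1}(\ell-b+1)\cdot\Psi(\mathbf{a},\mathbf{0}_\ell),
\end{equation*}
where we use that the maximal run length $\ell$ ranges over $1,\ldots,n-1$ when $\mathbf{a}\neq\mathbf{0}$ (length $n$ is impossible since some coordinate is nonzero), so no $\Psi(\mathbf{a},\mathbf{0}_n)$ term survives and the upper limit is $n-1$; runs of length $\ell<b$ contribute $0$, which is why the sum starts at $\ell=b$. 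Combining with $w_b(\mathbf{a})=n-|S|$ yields formula \eqref{formula}.

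The only point requiring a little care — the ``main obstacle,'' though it is minor — is the bookkeeping for the cyclic (wrap-around) structure: one must be sure that a zero run wrapping past index $n-1$ back to index $0$ is still counted once and only once by the $\Psi(\mathbf{a},\mathbf{0}_i)$ convention on $cir(\mathbf{a})$, and that when $\mathbf{a}$ has exactly one nonzero coordinate there is a single maximal zero run of length $n-1$ flanked by that one coordinate on both sides, consistent with $\mathbf{0}_{n-1}=(\alpha,0,\ldots,0,\beta)$ having $\alpha=\beta$ allowed. Since $cir(\mathbf{a})$ is by definition a periodic sequence of period $n$ and $\Psi$ counts occurrences of the pattern $\mathbf{0}_i$ on it, this is exactly the convention already fixed in the paper, so the argument goes through; I would simply remark on it rather than belabor it. One can alternatively sidestep wrap-around issues entirely by noting $\sum_{i=1}^{n}\Psi(\mathbf{a},\mathbf{0}_i)$ equals the number of maximal zero runs and $\sum_{i=1}^{n} i\cdot\Psi(\mathbf{a},\mathbf{0}_i)$ equals the total number of zero coordinates, and deriving $|S|$ from these two identities — but the direct local count above is cleaner.
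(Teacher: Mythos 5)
The paper does not prove this lemma at all --- it is imported verbatim from \cite{BUG} --- so there is no internal proof to compare against; your argument is the natural one and, for $\mathbf{a}\neq\mathbf{0}$, it is correct and complete: each window of $b$ consecutive zeros lies in a unique maximal zero run, a maximal run of length $\ell$ contributes exactly $\max(0,\ell-b+1)$ such windows, and $\Psi(\mathbf{a},\mathbf{0}_\ell)$ counts precisely the maximal runs of length $\ell$ because the flanking entries $\alpha,\beta$ are required to be nonzero.

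One correction to your degenerate case: for $\mathbf{a}=\mathbf{0}$ the two sides do \emph{not} both equal $0$. The left side is $w_b(\mathbf{0})=0$, but every $\Psi(\mathbf{0},\mathbf{0}_i)$ vanishes (there is no nonzero entry to play the role of $\alpha$ or $\beta$), so the right side equals $n$. The identity therefore fails for the zero vector, and the lemma should be read as implicitly restricted to $\mathbf{a}\neq\mathbf{0}$ --- which is harmless, since the paper only ever applies it to codewords in $C\setminus\{\mathbf{0}\}$. Your closing observation that $|S|$ can instead be derived from $\sum_i\Psi(\mathbf{a},\mathbf{0}_i)$ and $\sum_i i\cdot\Psi(\mathbf{a},\mathbf{0}_i)$ is a nice alternative but is not needed.
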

	
	\begin{example}
		Let $\mathbf{a}=(01001000100)$. Then the $\mathbf{0}$'s run distribution of $\mathbf{a}$ is
		$\{\Psi(\mathbf{a},\mathbf{0}_2)=1,
		\Psi(\mathbf{a},\mathbf{0}_3)=2,
		\Psi(\mathbf{a},\mathbf{0}_i)=0, i\neq 2,3\}.$ The $b$-symbol weight of $\mathbf{a}$ is
		\begin{equation*}
			w_b(\mathbf{a})=\left\{
			\begin{array}{ll}
				11-(2\time 1+3\time 2)=3, & \hbox{$b=1$;} \\
				11-(1\time 1+2\time 2)=6, & \hbox{$b=2$;} \\
				11-(1\time 2)=9, & \hbox{$b=3$;}\\
				11, & \hbox{$b=4$.}
			\end{array}
			\right.
		\end{equation*}
	\end{example}
	
	Assume that $\ccc$ is a codeword of $C$.
	Let $\vartheta(\ccc)$ be the maximum $\mathbf{0}$'s run length of $\ccc$ and $\theta=\max\{\vartheta(\ccc)|\ccc\in C\backslash\{\mathbf{0}\}\}$. The parameter $\theta$ is called the maximum $\mathbf{0}$'s run length of $C$. A partition of $C\backslash\{\mathbf{0}\}$ associated with $\theta$ can be expressed as follows:
$C\backslash\{\mathbf{0}\}=\bigsqcup_{i=0}^{\theta}C(i),$
	where $C(i)=\{\ccc|\ccc\in C\backslash\{\mathbf{0}\} {\rm ~and~} \vartheta(\ccc)=i\}$ and $\bigsqcup$ is a disjoint union.

	\section{Unrestricted codes}
	\hspace{0.5cm}In this section, our focus is on determining the maximum number of $b$-symbol distances that an unrestricted code can have. Let $\Sigma$ be a finite alphabet. We will begin by establishing a fundamental bound for $|\mathbf{D}_b(C)|$.
	\begin{proposition}
		Let $C$ be an unrestricted code with parameters $(n,M)$ over $\Sigma$. Then
		\begin{equation}\label{bde}
			M\leq\sum_{j=0}^{|\mathbf{D}_b(C)|}{n\choose j}(|\Sigma|^b-1)^j.
		\end{equation}
	\end{proposition}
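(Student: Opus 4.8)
The plan is to mimic the classical Singleton-type / sphere-covering argument that underlies Delsarte's inequality \eqref{de}, but carried out in the $b$-symbol metric. Fix a codeword $\ccc_0\in C$ (say, by translation-invariance of the $b$-symbol distance we may even think of $\ccc_0$ as playing the role of a reference point, although $C$ need not be linear, so no translation is actually needed — we just fix it). Let $s=|\mathbf{D}_b(C)|$ and write $\mathbf{D}_b(C)=\{\delta_1,\ldots,\delta_s\}$. The key observation is that $\pi_b$ is an injective map $\F_q^n\to(\F_q^b)^n$ (indeed each coordinate $x_i$ is recovered as the first entry of the $i$-th pair), and that $d_b(\x,\y)=d_H(\pi_b(\x),\pi_b(\y))$ by definition. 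Hence $\pi_b(C)$ is an (unrestricted) code of length $n$ over the alphabet $\F_q^b$ of size $|\Sigma|^b$, with the \emph{same} size $M$ and the \emph{same} distance spectrum $\mathbf{D}_b(C)=\mathbf{D}_H(\pi_b(C))$.

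So the whole statement reduces to the following Hamming-metric fact: if $C'$ is an unrestricted code of length $n$ over an alphabet of size $Q_0$ with $|\mathbf{D}_H(C')|=s$, then $|C'|\le\sum_{j=0}^{s}\binom{n}{j}(Q_0-1)^j$. This is exactly the left-hand bound implicit in Delsarte's \eqref{de} (the "$M\le\cdots$" side), and the standard proof runs as follows. Define a polynomial-free combinatorial packing argument: consider the $s$ distances $\delta_1<\cdots<\delta_s$ and the annulus/ball decomposition. Concretely, take any fixed $\ccc_0\in C'$; every other codeword $\ccc$ satisfies $d_H(\ccc_0,\ccc)\in\{\delta_1,\ldots,\delta_s\}$, so all codewords lie in the union of $s$ Hamming spheres around $\ccc_0$ — but that only gives $M\le\sum_{i=1}^{s}(\text{sphere size})$, which is weaker. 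The correct route is the "shortening/puncturing" or Plotkin-style double count, or more cleanly: by a theorem of Delsarte, a code whose distance set has size $s$ is a \emph{code with $s$ distances}, and such a code obeys $M\le\sum_{j=0}^s\binom nj(Q_0-1)^j$ — this is proved by exhibiting, for each codeword, an injection into the set of vectors of Hamming weight $\le s$ obtained by choosing a system of representatives; equivalently one invokes that the characteristic vectors (indexed by the "error coordinates" relative to $\ccc_0$) of the $M$ codewords, truncated appropriately, are distinct elements of the Hamming ball of radius $s$. I would cite \cite{DE} for this step rather than reprove it, or give the short inductive proof: puncture one coordinate; codewords collapsing together must have been at distance $\le 1$ hence... — actually the cleanest is the following. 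Order $\delta_1<\cdots<\delta_s$; the balls $B(\ccc_0,\delta_i)\setminus B(\ccc_0,\delta_i-1)$... I will instead phrase it as: apply the known bound $M\le\sum_{j=0}^{s}\binom nj(Q_0-1)^j$ for $s$-distance sets directly to $C'=\pi_b(C)$ with $Q_0=|\Sigma|^b$.

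Thus the proof I would write has three short steps. \textbf{Step 1:} Observe $\pi_b\colon\F_q^n\to(\F_q^b)^n$ is injective and an isometry from $(\F_q^n,d_b)$ into $((\F_q^b)^n,d_H)$; conclude $\pi_b(C)$ has parameters $(n,M)$ over an alphabet of size $|\Sigma|^b$ and $\mathbf{D}_H(\pi_b(C))=\mathbf{D}_b(C)$. \textbf{Step 2:} Recall (Delsarte \cite{DE}, or the $b=1$ case of the very statement being generalized, i.e. \eqref{de}) that for any unrestricted $(n,M)$ code $C'$ over an alphabet of size $Q_0$ one has $M\le\sum_{j=0}^{|\mathbf{D}_H(C')|}\binom nj(Q_0-1)^j$. \textbf{Step 3:} Apply Step 2 to $C'=\pi_b(C)$, $Q_0=|\Sigma|^b$, and use $|\mathbf{D}_H(\pi_b(C))|=|\mathbf{D}_b(C)|$ from Step 1 to obtain \eqref{bde}.

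The main obstacle is purely one of self-containedness: Step 2 is "classical" but a referee may want it proved, not merely attributed. If I need it from scratch, the cleanest self-contained argument is a puncturing induction on $n$ — delete the last coordinate of $C'$; two codewords map to the same punctured word only if they already agreed on the first $n-1$ coordinates, and a short case analysis on how the distance set can shrink under puncturing (it loses at most the distances realized \emph{only} across such pairs) lets one push through $|C'|\le Q_0\cdot(\text{bound for }n-1)$ in a way that telescopes to the claimed sum; alternatively one uses the anticode/linear-algebra bound on $s$-distance sets. I would keep the main text at the level of Steps 1–3 and, if pressed, relegate the proof of Step 2 to a remark citing \cite{DE}, since the whole point of the Proposition is that the $b$-symbol case follows from the Hamming case by the isometry $\pi_b$, with \emph{no} extra combinatorial work.
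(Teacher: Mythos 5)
Your proof is correct and is essentially the paper's own argument made explicit: the paper's proof consists of the single remark that one should rerun Delsarte's proof of \eqref{de} with the $b$-symbol distance in place of the Hamming distance, and your Steps 1--3 (the isometric injection $\pi_b$ of $(\Sigma^n,d_b)$ into the Hamming space of length $n$ over the alphabet $\Sigma^b$, followed by Delsarte's bound for a code with $s$ distinct distances) are the natural way to carry that out, as the right-hand side of \eqref{bde} is precisely the Delsarte bound in $H(n,|\Sigma|^b)$. The only caveat is that your optional fallback sketches for reproving the classical Hamming-metric ingredient from scratch (union of spheres, puncturing induction) are not viable as written, but since both you and the paper ultimately cite \cite{DE} for that step, this does not affect the validity of the proof.
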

\begin{proof}
	 Following the approach of the proof in \cite{DE} for Inequality (\ref{de}), we can readily derive the desired result. It's important to highlight that at this juncture, the distance between any two codewords of $C$ is considered to be the $b$-symbol distance rather than the Hamming distance.
\end{proof}
	Let $G_1$ be the set consisting of all unrestricted codes with size $M$ over $\Sigma$. We define $N(M,b)$ and $N(n,M,b)$ as follows:
	\begin{itemize}
  \setlength{\itemsep}{3pt}
\setlength{\parsep}{0pt}
\setlength{\parskip}{0pt}
		\item $N(M,b):=\max\{|\mathbf{D}_b(C)||C\in G_1\}.$
		\item $N(n,M,b):=\max\{|\mathbf{D}_b(C)||C\in G_1~{\rm of~length}~n\}.$
	\end{itemize}
	Note that we focus on $N(M,b)$ and $N(n,M,b)$ instead of $N(M,|\Sigma|,b)$ and $N(n,M,|\Sigma|,b)$ because the alphabet size $\Sigma$ does not affect these combinatorial functions. 
	\begin{theorem}
		For all integers $M\geq2$, we have
		$$N(M,b)={M\choose 2}.$$
	\end{theorem}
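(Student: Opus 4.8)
The plan is to prove the two inequalities $N(M,b) \le \binom{M}{2}$ and $N(M,b) \ge \binom{M}{2}$ separately. The upper bound is immediate: any code $C$ of size $M$ contains exactly $\binom{M}{2}$ unordered pairs of distinct codewords, and $\mathbf{D}_b(C)$ is by definition the set of $b$-symbol distances realized by these pairs, so $|\mathbf{D}_b(C)| \le \binom{M}{2}$ regardless of $n$ or $|\Sigma|$; hence $N(M,b) \le \binom{M}{2}$.

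For the lower bound I would exhibit, for every $M \ge 2$, a code $C$ of size $M$ (over a suitably large alphabet, and with $n$ chosen as large as needed) such that all $\binom{M}{2}$ pairwise $b$-symbol distances are pairwise distinct. The natural approach is to pick the length $n$ and alphabet generously and build the codewords greedily or by an explicit coordinate-by-coordinate construction: since the $b$-symbol distance $d_b(\x,\y)$ depends only on the pattern of positions where $\x-\y$ has a run of $b$ consecutive agreements, I have a great deal of freedom. Concretely, one clean way is to work over an alphabet large enough that we may treat codewords as sequences of "blocks"; assign to the $i$-th codeword a profile so that the support pattern of $\ccc_i - \ccc_j$ forces $w_b(\ccc_i-\ccc_j)$ to take a prescribed value $v_{ij}$, and choose the target values $v_{ij}$ to be $\binom{M}{2}$ distinct integers (this is possible once $n$ is large enough, since $w_b$ ranges over roughly $n$ values). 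One should double-check the edge behavior of the cyclic $b$-symbol read (the wrap-around in $\pi_b$) and the small cases $b=1$ and $b=n$, but these do not obstruct the construction: padding with extra all-equal coordinates changes $b$-symbol weights in a controlled, monotone way via Lemma \ref{wbformula11}.

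An even simpler route, which I would try first, is to reduce to the Hamming case $b=1$: by Shi \emph{et al.} the bound $\binom{M}{2}$ is attained in the Hamming metric, and one can attempt to "dilate" a Hamming-extremal code by replacing each coordinate with $b$ copies (or by inserting $b-1$ fixed separator coordinates between consecutive coordinates) so that each original Hamming distance $d$ becomes a $b$-symbol distance that is a strictly increasing function of $d$; distinctness of the Hamming distances then transfers to distinctness of the $b$-symbol distances. The main obstacle I anticipate is precisely controlling this transfer: naive coordinate-copying makes $w_b$ of the dilated word a piecewise-linear but not necessarily injective function of the original support structure (two different Hamming configurations with the same weight could dilate to the same $b$-symbol weight), so I would instead use Lemma \ref{wbformula11} to compute $w_b$ of the dilated codewords exactly in terms of their $\mathbf{0}$'s run distribution and verify injectivity directly; if that fails I fall back to the explicit large-alphabet, large-$n$ construction above, where enough degrees of freedom are available to separate all $\binom{M}{2}$ distances by hand. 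Combining the two bounds gives $N(M,b) = \binom{M}{2}$.
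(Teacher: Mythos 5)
Your ``simpler route'' is exactly the paper's proof: the upper bound is the trivial count of pairs, and for the lower bound the paper takes a code attaining $\binom{M}{2}$ distinct Hamming distances (from Shi \emph{et al.}) and inserts $b-1$ zero separator coordinates after each coordinate, so that every $b$-symbol distance in the new code equals exactly $b$ times the corresponding Hamming distance and distinctness transfers immediately. The injectivity worry you raise does not materialize for this separator construction, since consecutive nonzero entries of a difference vector are then at least $b$ positions apart and each contributes exactly $b$ disjoint nonzero windows, giving $d_b = b\cdot d_H$ on the nose.
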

	\begin{proof}
We just need to construct an unrestricted code $C$ with $|\D_b(C)|={M \choose 2}$ because $|\D_b(C)|\leq{M \choose 2}$ is obvious.
According to \cite[Theorem 6]{SZS}, there exists an unrestricted code $C$ with $|\D_{1}(C)|={M\choose 2}.$ Let $C^{\prime}=\{f(\ccc)|\ccc=(c_0,c_1,\ldots,c_{n-1})\in C\},$ where
		$$f(\ccc)=(c_0,\underbrace{0,\ldots,0}_{b-1},
		c_1,\underbrace{0,\ldots,0}_{b-1},\ldots,
		c_{n-1},\underbrace{0,\ldots,0}_{b-1}).$$
Assume that $\D_1(C)=\{d_{i,j}|1\leq i< j\leq M\}$.  Then we have
		$\D_b(C^{\prime})=\{b\cdot d_{i,j}|1\leq i<j\leq M\}$. The size of $\D_b(C^{\prime})$ equals ${M \choose 2}$ since $b\cdot d_{i_1,j_1}\neq b\cdot d_{i_2,j_2}$ with $(i_1,j_1)\neq (i_2,j_2).$ Therefore, we obtain the desired result.
	\end{proof}
	
	If there are no constraints on the length $n$, we can append consecutive zeros to ensure that the $b$-symbol distance between any two codewords in $C$ is distinct. To facilitate the proof of the subsequent results, we introduce the definition of difference sets with Singer parameters and Golomb ruler below \cite{BJ,D}.
	
	\begin{definition}
		A difference set in an abelian group $A$ is said to have Singer parameters if their parameters are of the form $\left(\frac{l^m-1}{l-1},\frac{l^{m-1}-1}{l-1},\frac{l^{m-2}-1}{l-1}\right)$ or $\left(\frac{l^{m}-1}{l-1},l^{m-1},l^{m-2}(l-1)\right)$,
		where $m$ and $l$ are positive integers.
	\end{definition}
	The first class of difference sets with Singer parameters were discovered by Singer in 1938, and are described in the following theorem.

	\begin{theorem}\cite{Singer1938}\label{Singer1938}
		Let $m\geq 3$ be a positive integer. Let $\alpha$ be a generator of $\F_{Q}^*$. Put $n=\frac{Q-1}{q-1}$. Define $D=\left\{0\leq i\leq n|{\rm Tr}_{Q/q}(\alpha^i)=0\right\}\subset \Z_n.$ Then $D$ is a difference set in $(\Z_n,+)$ with parameters $\left(\frac{Q-1}{q-1},\frac{q^{m-1}-1}{q-1},\frac{q^{m-2}-1}{q-1}\right)$.
	\end{theorem}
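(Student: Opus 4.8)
The statement to prove is Theorem~\ref{Singer1938}, the classical Singer difference set construction. The plan is to verify two things: first that $D$ has the right size, and second that every nonzero element of $\Z_n$ is represented the correct number of times as a difference $d_1-d_2$ with $d_1,d_2\in D$. The underlying algebraic engine is the fact that $\F_Q^*$ is cyclic of order $Q-1$ with generator $\alpha$, and that $\F_q^* = \langle \alpha^n\rangle$ is the unique subgroup of order $q-1$, so the quotient $\F_Q^*/\F_q^*$ is cyclic of order $n=\frac{Q-1}{q-1}$ and is naturally identified with $\Z_n$ via $\alpha^i\F_q^* \mapsto i \bmod n$. The key structural point is that the kernel of the trace, $H=\{x\in\F_Q : {\rm Tr}_{Q/q}(x)=0\}$, is an $\F_q$-hyperplane of $\F_Q$, hence has $q^{m-1}$ elements, of which $q^{m-1}-1$ are nonzero; and because ${\rm Tr}_{Q/q}(\lambda x)=\lambda\,{\rm Tr}_{Q/q}(x)$ for $\lambda\in\F_q$, the set $H^*=H\setminus\{0\}$ is a union of full cosets of $\F_q^*$. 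Therefore $D$, which records exactly which cosets $\alpha^i\F_q^*$ lie in $H$, has cardinality $\frac{q^{m-1}-1}{q-1}=k$, giving the first parameter.

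For the difference count, I would argue as follows. Fix a nonzero $t\in\Z_n$; I want to count pairs $(i,j)$ with $i,j\in D$ and $i-j\equiv t\pmod n$. Lifting to $\F_Q^*$, this is the number of pairs of nonzero trace-zero elements $x=\alpha^i$, $y=\alpha^j$ with $x/y$ lying in the fixed coset $\alpha^t\F_q^*$, but counted modulo the $\F_q^*$-action (i.e.\ we count cosets, not field elements). Equivalently: pick a representative $c\in\alpha^t\F_q^*$, say $c=\alpha^t$; then I need to count $\F_q^*$-orbits of pairs $(x,y)\in H^*\times H^*$ with $x=cy\lambda$ for some $\lambda\in\F_q^*$ — but once $y$ is chosen, $x$ ranges over the coset $cy\F_q^*$, which is a single coset of $\F_q^*$, so modulo the diagonal $\F_q^*$-action this pair is determined by the coset of $y$. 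So the count equals the number of cosets $y\F_q^*\subseteq H^*$ such that the coset $cy\F_q^*$ also lies inside $H^*$, i.e.\ such that $cy\in H$ for one (hence every, since $H^*$ is a union of cosets) representative. Writing $y$ over a basis and using linearity of the trace, the conditions ${\rm Tr}_{Q/q}(y)=0$ and ${\rm Tr}_{Q/q}(cy)=0$ are two $\F_q$-linear conditions on $y\in\F_Q\cong\F_q^m$; I must check that for $t\not\equiv 0$ (equivalently $c\notin\F_q$) these two linear functionals $y\mapsto{\rm Tr}_{Q/q}(y)$ and $y\mapsto{\rm Tr}_{Q/q}(cy)$ are linearly independent over $\F_q$. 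This is where the nondegeneracy of the trace form $(y,z)\mapsto{\rm Tr}_{Q/q}(yz)$ enters: the map $c\mapsto(y\mapsto{\rm Tr}_{Q/q}(cy))$ is an $\F_q$-linear isomorphism $\F_Q\to\F_Q^\vee$, so the functional attached to $c$ is a scalar multiple of the functional attached to $1$ precisely when $c\in\F_q$. Hence for $c\notin\F_q$ the solution space has dimension $m-2$, giving $q^{m-2}$ solutions $y\in\F_Q$, of which $q^{m-2}-1$ are nonzero; dividing out the $\F_q^*$-action (each surviving coset contains $q-1$ nonzero elements and each contributes exactly one to the coset count, while $y=0$ contributes nothing) yields $\frac{q^{m-2}-1}{q-1}=\lambda$ representations. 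That is exactly the third Singer parameter, so $D$ is a $\left(\frac{Q-1}{q-1},\frac{q^{m-1}-1}{q-1},\frac{q^{m-2}-1}{q-1}\right)$-difference set.

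The main obstacle — really the only nontrivial point — is the passage between "cosets of $\F_q^*$ in $H^*$" and "elements of $\F_Q$", i.e.\ making sure the $\F_q^*$-equivariance is handled cleanly so that the field-element counts ($q^{m-1}$, $q^{m-2}$) translate correctly into the $\Z_n$-counts ($k$, $\lambda$) without off-by-one errors, and verifying that $H^*$ and the "shifted" set $c^{-1}H^*\cap H^*$ are genuinely unions of $\F_q^*$-cosets so the division by $q-1$ is legitimate. The hypothesis $m\ge3$ is needed to guarantee $\lambda=\frac{q^{m-2}-1}{q-1}\ge1$ (so that differences are actually realized and $D$ is a bona fide difference set rather than a degenerate configuration) and to ensure $n\ge q+1>1$. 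Once these bookkeeping issues are pinned down, the argument is a short linear-algebra computation; I would present it in exactly the two-part structure above (size, then difference count), citing the nondegeneracy of the trace bilinear form as the single external fact.
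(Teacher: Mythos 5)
The paper does not prove this statement at all: it is quoted verbatim as a classical result and attributed to Singer (1938), so there is no internal proof to compare yours against. Your argument is the standard (and correct) one. The size count via the trace-zero hyperplane $H$ (with $|H^*|=q^{m-1}-1$ partitioned into $\F_q^*$-cosets because $\mathrm{Tr}_{Q/q}$ is $\F_q$-linear) gives $k=\frac{q^{m-1}-1}{q-1}$, and the difference count correctly reduces to counting cosets $y\F_q^*$ satisfying the two conditions $\mathrm{Tr}_{Q/q}(y)=0$ and $\mathrm{Tr}_{Q/q}(\alpha^t y)=0$; your appeal to nondegeneracy of the trace form to show these functionals are independent exactly when $\alpha^t\notin\F_q$ (equivalently $t\not\equiv 0\pmod n$) is the right justification, and the bookkeeping $\frac{q^{m-2}-1}{q-1}=\lambda$ is handled without off-by-one errors since the joint solution set minus the origin is again a union of full $\F_q^*$-cosets. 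This is a complete, self-contained proof of a result the paper merely imports by citation.
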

	With the help of Theorem \ref{Singer1938}, we can obtain the following result.
	\begin{theorem}\label{the20}
		For all positive integers $M$, $b$ and all nonnegative integers $m$, we have
		$N(M,b)=N(n_0(M,b)+m,M,b)$
		where $n_0(M,b)$ is a function about $M$ and $b$.
		Moreover, $n_0(M,b)\leq 2b{M\choose 2}+b$ if $M-1$ is a prime power.
	\end{theorem}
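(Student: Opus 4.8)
\medskip\noindent\textbf{Proof proposal.} The plan is to combine the value $N(M,b)=\binom{M}{2}$ from the previous theorem with a construction of an $(n,M)$ code that attains this maximum for \emph{every} length $n$ beyond a threshold depending only on $M$ and $b$; since $N(n,M,b)\le N(M,b)$ always holds, it suffices to exhibit, for all such $n$, a size-$M$ code whose $\binom{M}{2}$ pairwise $b$-symbol distances are distinct. I would first fix a Golomb ruler $A=\{0=a_1<a_2<\cdots<a_M=\ell\}$, i.e.\ a set of $M$ integers whose $\binom{M}{2}$ positive differences $a_j-a_i$ ($i<j$) are pairwise distinct; such rulers exist for every $M$ (for instance $a_k=2^{k-1}-1$, giving $\ell=2^{M-1}-1$). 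For each $n$ I would take the binary code $C_n=\{\ccc_1,\ldots,\ccc_M\}\subseteq\F_2^n$ with $\ccc_i=(\,\underbrace{1,\ldots,1}_{a_i},\underbrace{0,\ldots,0}_{\,n-a_i}\,)$.

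For $i<j$ the difference word $\ccc_j-\ccc_i$ is a single run of $a_j-a_i$ nonzero coordinates (at positions $a_i,\ldots,a_j-1$), not wrapping around because $a_j\le\ell<n$; hence its $\0$'s run distribution consists of the single value $\Psi=1$ at run length $n-(a_j-a_i)$, and Lemma \ref{wbformula11} yields $d_b(\ccc_i,\ccc_j)=w_b(\ccc_j-\ccc_i)=(a_j-a_i)+b-1$, valid precisely when $n\ge(a_j-a_i)+b-1$ — which holds for all pairs once $n\ge\ell+b-1$. For such $n$,
\[
\D_b(C_n)=\bigl\{\,(a_j-a_i)+b-1:1\le i<j\le M\,\bigr\},
\]
which has exactly $\binom{M}{2}$ elements by the Golomb-ruler property; hence $N(n,M,b)=\binom{M}{2}=N(M,b)$. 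Setting $n_0(M,b):=\ell(M)+b-1$ — a function of $M$ and $b$ alone — and taking $n=n_0(M,b)+m$, $m\ge 0$, proves the first assertion.

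To bound $\ell(M)$ when $M-1$ is a prime power, I would apply Theorem \ref{Singer1938} with the prime power $M-1$ in the role of its $q$ and with $m=3$: this produces a difference set $D\subseteq\Z_v$, $v=(M-1)^2+(M-1)+1=M^2-M+1$, with parameters $(M^2-M+1,\,M,\,1)$, i.e.\ a planar (Sidon) difference set of size $M$. Its $\binom{M}{2}$ positive differences, computed inside $\{0,1,\ldots,v-1\}$, lie in $\{1,\ldots,v-1\}$ and are distinct modulo $v$, hence distinct as integers, so — after translating $0$ into $D$ — the set $D$ is a Golomb ruler with largest mark $\le v-1=M^2-M$. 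Feeding $\ell(M)\le M^2-M$ into the previous step gives
\[
n_0(M,b)\le(M^2-M)+b-1\le 2\binom{M}{2}+b\le 2b\binom{M}{2}+b,
\]
which is the stated inequality (in fact with room to spare).

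Alternatively, to keep everything anchored to the previous theorem, one can start from the code $C'$ it constructs — choosing, for the Hamming-metric code there, the length-$v$ prefix code $\{\mathbf{1}_{[0,d)}:d\in D\}$ associated to $D$ — and append an arbitrary number of zeros; a short cyclic-run computation shows the $b$-symbol distance of each padded difference word equals $b$ times the corresponding Hamming distance, independently of the number of appended zeros, and this reproduces exactly the threshold $bv=2b\binom{M}{2}+b$. In either route the only genuinely delicate step is the $b$-symbol distance computation — controlling the cyclic wrap-around so that the window count is never clamped to $n$ (which is exactly why the hypothesis $n\ge\ell+b-1$ appears), respectively checking that zero-padding leaves the $b$-symbol weight of a single-run difference word unchanged. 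The remaining ingredients — existence of Golomb rulers, the Sidon property of Singer difference sets, and the trivial upper bound $N(n,M,b)\le N(M,b)$ — are routine.
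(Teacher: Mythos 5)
Your proposal is correct, and it reaches the theorem by a genuinely different construction than the paper. The paper spreads each codeword out by blocks of $b-1$ zeros, so that a difference word consists of isolated nonzero symbols spaced $b$ apart and every $b$-symbol distance becomes exactly $b$ times the corresponding Hamming distance; applied to the Singer difference set with parameters $(M^2-M+1,M,1)$ this forces the length $b(M^2-M+1)=2b\binom{M}{2}+b$. You instead use prefix-of-ones codewords whose pairwise differences are single runs of length $a_j-a_i$, so that $d_b=(a_j-a_i)+b-1$ once $n$ clears the threshold $\ell+b-1$; this is a clean and correct computation (your boundary case $n=\ell+b-1$ checks out against Lemma~\ref{wbformula11}), and it buys you two things the paper does not have. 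First, by running the argument with an arbitrary Golomb ruler you give an explicit witness for the first claim for \emph{every} $M$, whereas the paper simply declares that claim obvious and only constructs codes when $M-1$ is a prime power. Second, your threshold $n_0(M,b)\le M^2-M+b-1$ is strictly sharper than the paper's $2b\binom{M}{2}+b$ for all $b\ge 2$ (your closing ``alternative'' route, padding the $b$-spread code with zeros, is essentially the paper's proof and recovers only the weaker bound). One small caveat, which applies equally to the paper's ``the first claim is obvious'': the theorem as stated only asks for \emph{some} function $n_0(M,b)$ beyond which the maximum is attained, which your construction delivers; identifying $n_0$ with the \emph{minimal} such length, as the paper does in the remark following the theorem, would additionally require the monotonicity $N(n,M,b)\le N(n+1,M,b)$, which is not immediate for $b>1$ because appending a zero coordinate interacts with the cyclic window count, and neither you nor the paper addresses it.
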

	\begin{proof}
		The first claim is obvious. Assume that $M=q+1$, where $q$ is a prime power. By Theorem \ref{Singer1938}, there exists a Singer difference set $S=\{a_1,a_2,\ldots,a_{q+1}\}$ with parameters $(q^2+q+1,q+1,1)$.
		Let $C=\{\ccc_i|1\leq i\leq q+1\}$, where
		$$\ccc_i=(\underbrace{1\underbrace{00\cdots0}_{b-1}~1
			\underbrace{00\cdots0}_{b-1}
			~\cdots1\underbrace{00\cdots0}_{b-1}}_{a_i}~
		\underbrace{00\cdots0}_{b\cdot(q^2+q+1)-b\cdot a_i}) \hbox{~with $1\leq i\leq {q+1}$}.$$
		According to the definition of $b$-symbol distance, we obtain the $b$-symbol distance spectrum
		$\D_b(C)=\{b\cdot|{a_i-a_j}||1\leq i\neq j\leq q+1\}$ and $|\D_b(C)|={q+1\choose 2}.$ Therefore, $b(q^2+q+1)=2b{q+1\choose 2}+b$ is an upper bound for $n_0(q+1,b).$
	\end{proof}
	Indeed, $n_0(M,b)$ represents the minimum length $n$ for which $N(n,M,b)=N(M,b)$.
The value of $N(M,b)$ exhibits an intriguing connection with another significant combinatorial concept, known as the Golomb ruler.
	\begin{definition}
		A Golomb ruler $GR(M,v)$ is a set of $M$ integers
		$$\{a_1,a_2,\ldots,a_{M}\}\subseteq\{0,1,\ldots,v\}$$
		whose pairwise differences in absolute value take ${M\choose 2}$ distinct values exactly once. By convention, we assume that the sequence $i\mapsto a_i$ is increasing and $a_1=0$, $a_M=v.$
	\end{definition}

	The function $G(M)$, denoting the length of an optimal Golomb ruler, represents the smallest possible length of a ruler with $M$ marks.
As a consequence, we can establish an upper bound for $n_0(M,b)$.
	\begin{proposition}\label{pro21}
		Let notations be as above. Then $n_0(M,b)\leq b\cdot G(M).$
	\end{proposition}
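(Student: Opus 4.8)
The plan is to mimic the construction in the proof of Theorem~\ref{the20}, but starting from an optimal Golomb ruler instead of a Singer difference set. By definition, a Golomb ruler $GR(M,G(M))=\{a_1,\ldots,a_M\}\subseteq\{0,1,\ldots,G(M)\}$ with $a_1=0$, $a_M=G(M)$, has the property that the $\binom{M}{2}$ absolute differences $|a_i-a_j|$ are pairwise distinct. First I would set $n:=b\cdot G(M)$ and, for each $i\in[M]$, define a codeword $\ccc_i\in\F_q^n$ (over any alphabet of size at least $2$, say $q=2$) by placing a $1$ in position $b\cdot a_t$ for every $t$ with $1\le t\le i$ — more precisely, mirroring the $f$-style padding in the earlier proofs, $\ccc_i$ has support $\{b\cdot a_1, b\cdot a_2,\ldots,b\cdot a_i\}$ and zeros elsewhere. (Equivalently one can take $\ccc_i$ to be the length-$n$ vector whose nonzero entries spell out the indicator of $\{a_1,\ldots,a_i\}$ after inserting $b-1$ zeros between consecutive symbols, then zero-padding to length $n$.)

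Next I would compute the $b$-symbol distances. For $i<j$, the difference $\ccc_j-\ccc_i$ has support exactly $\{b\cdot a_{i+1},\ldots,b\cdot a_j\}$, i.e.\ it is (up to leading/trailing zeros) the length-$n$ word obtained from the Golomb-ruler sub-configuration $\{a_{i+1},\ldots,a_j\}$ by inserting $b-1$ zeros between consecutive marks. Because consecutive nonzero symbols are separated by exactly $b-1$ zeros and there are no short zero-runs "wrapping around" (the vector is zero-padded to length $b\cdot G(M)$ at the end, and one checks the cyclic wrap only creates one long zero-run which contributes nothing new), applying Lemma~\ref{wbformula11} — or just the direct count $w_b(\mathbf{a})=n-|\{i: a_i=\cdots=a_{i+b-1}=0\}|$ — gives $d_b(\ccc_i,\ccc_j)=b\cdot(a_j-a_{i+1})=b\cdot(\max - \min)$ of the chosen marks; more to the point, one shows $d_b(\ccc_i,\ccc_j)$ is a fixed strictly increasing function of the ``span'' $a_j-a_i$ of the pair. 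The key point is that distinct pairs $(i,j)$ give distinct spans $a_j-a_i$ by the Golomb property, hence distinct $b$-symbol distances, so $|\D_b(C)|=\binom{M}{2}=N(M,b)$ by Theorem~\ref{the20}. Since the total length used is exactly $b\cdot G(M)$, this shows $n_0(M,b)\le b\cdot G(M)$.

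The main obstacle is the bookkeeping in the $b$-symbol distance computation: I must make sure that the distance $d_b(\ccc_i,\ccc_j)$ depends only on the \emph{multiset of gaps} between consecutive chosen marks $a_{i+1}<\cdots<a_j$ in a way that is injective on pairs. Inserting exactly $b-1$ zeros between consecutive $1$'s is precisely what guarantees this: each internal gap of length $g\ge 1$ (in ruler units) becomes a zero-run of length $b(g-1)+(b-1)=bg-1\ge b-1$ in the codeword, so by Lemma~\ref{wbformula11} it contributes $\sum_{i=b}^{bg-1}(i-b+1)$ to the ``defect'' $n-w_b$; summing over the gaps and the two boundary zero-runs (including the cyclic wrap-around run) yields a clean closed form, and one verifies it is an increasing injective function of the total span $a_j-a_i$. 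Because the Golomb ruler is specified to have $a_1=0$ and $a_M=G(M)$, no ambiguity arises about the needed length. Alternatively, to avoid the gap-by-gap computation entirely, I would observe that $f(\cdot)$-padding (inserting $b-1$ zeros after every coordinate) turns a $1$-symbol distance $d$ into the $b$-symbol distance $b\cdot d$, exactly as in the proof of Theorem~\ref{the20}, so starting from the optimal-Golomb-ruler code of length $G(M)$ with all $\binom{M}{2}$ Hamming distances distinct and applying $f$ immediately gives a length-$b\cdot G(M)$ code with all $\binom{M}{2}$ $b$-symbol distances distinct; this is the cleanest route and sidesteps the obstacle.
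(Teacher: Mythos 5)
Your closing ``alternative'' route is the correct one, and it is exactly the paper's proof: by the definition of $G(M)$ there is a length-$G(M)$ code with all $\binom{M}{2}$ Hamming distances distinct (e.g.\ $\ccc_i=1^{a_i}0^{G(M)-a_i}$, with $d_H(\ccc_i,\ccc_j)=|a_i-a_j|$ pairwise distinct by the Golomb property), so $n_0(M,1)\leq G(M)$; applying the padding map $f$ that inserts $b-1$ zeros after every coordinate multiplies every distance by $b$ and the length by $b$, giving $n_0(M,b)\leq b\cdot G(M)$. Had you led with that, your argument would coincide with the paper's.

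Your primary construction, however, is genuinely broken, and the ``obstacle'' you flag is fatal rather than bookkeeping. If $\ccc_i$ has support $\{b a_1,\ldots,b a_i\}$, then $\ccc_j-\ccc_i$ has support $\{b a_{i+1},\ldots,b a_j\}$: a set of $j-i$ nonzero entries, pairwise at distance at least $b$. Each such entry is covered by exactly $b$ of the length-$b$ windows, and these window sets are pairwise disjoint, so $d_b(\ccc_i,\ccc_j)=b\,(j-i)$. The distance depends only on $j-i$ and not on the ruler at all, so this code has only $M-1$ distinct $b$-symbol distances, and no gap-by-gap analysis will turn $d_b$ into an injective function of the span $a_j-a_i$. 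Your claimed formula $d_b(\ccc_i,\ccc_j)=b\,(a_j-a_{i+1})$ is already false at $b=1$, where the Hamming distance of the difference is $j-i$; and your use of Lemma \ref{wbformula11} is also off, since a single maximal zero-run of length $\ell\geq b$ contributes exactly $\ell-b+1$ to the defect $n-w_b$, not $\sum_{i=b}^{\ell}(i-b+1)$ (the quantity $\Psi(\mathbf{a},\mathbf{0}_i)$ counts maximal runs of length exactly $i$). The construction you actually want is the one from Theorem \ref{the20}: let $\ccc_i$ consist of $a_i$ consecutive blocks $1\,0^{b-1}$ followed by zeros, so that $\ccc_i-\ccc_j$ has $|a_i-a_j|$ nonzero entries spaced $b$ apart and hence $b$-symbol weight $b\,|a_i-a_j|$, which the Golomb property makes pairwise distinct.
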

	\begin{proof}
		The inequality $n_0(M,1) \leq G(M)$ follows directly from the definition of $G(M)$. By mimicking the proof of Theorem \ref{the20}, we obtain the desired result.
	\end{proof}

	\section{Addtive codes}
	\hspace{0.5cm}In this section, we investigate the number of $b$-symbol distances that an additive code with parameters $(n,q^{k_0}){q^t}$ can have. Let $G_2$ denote the set comprising all additive codes with size $|\mathfrak{C}|=q^{k_0}$ over $\mathbb{F}{q^t}$. We define $A(q^{k_0},q^t,b)$ as follows:
	$$A(q^{k_0},q^t,b):=\max\left\{|\mathbf{D}_b(\mathfrak{C})||\mathfrak{C}\in G_2\right\}.$$
	
	\begin{theorem}\label{additivecase}
		Let $\mathfrak{C}$ be an additive code with parameters $(n,q^{k_0})_{q^t}$, then we have
		$A(q^{k_0},q^t,b)\\=\frac{q^{k_0}-1}{q-1}.$
	\end{theorem}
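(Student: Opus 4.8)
The plan is to prove the two inequalities $A(q^{k_0},q^t,b)\le \frac{q^{k_0}-1}{q-1}$ and $A(q^{k_0},q^t,b)\ge \frac{q^{k_0}-1}{q-1}$ separately. For the upper bound, the key observation is that since $\mathfrak{C}$ is an $\F_q$-linear subgroup of $\F_{q^t}^n$, scalar multiplication by elements of $\F_q^*$ acts on $\mathfrak{C}\setminus\{\0\}$, and this action preserves the $b$-symbol weight. Indeed, for $\lambda\in\F_q^*$ and $\ccc\in\mathfrak{C}$ one has $\pi_b(\lambda\ccc)=\lambda\pi_b(\ccc)$ coordinatewise (because $\pi_b$ is $\F_q$-linear on each block), so $w_b(\lambda\ccc)=w_b(\ccc)$; also $d_b(\ccc,\ccc')=w_b(\ccc-\ccc')$ and for an additive code $\ccc-\ccc'$ ranges over all of $\mathfrak{C}$, so $\mathbf{D}_b(\mathfrak{C})=\mathbf{W}_b(\mathfrak{C})$. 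Hence each $\F_q^*$-orbit on $\mathfrak{C}\setminus\{\0\}$ contributes at most one value to $\mathbf{W}_b(\mathfrak{C})$, and since all orbits have size exactly $q-1$ (the action is free, as $\lambda\ccc=\ccc$ forces $\lambda=1$ for $\ccc\ne\0$), the number of orbits is $\frac{q^{k_0}-1}{q-1}$, giving $|\mathbf{D}_b(\mathfrak{C})|=|\mathbf{W}_b(\mathfrak{C})|\le\frac{q^{k_0}-1}{q-1}$.

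For the lower bound, I would construct an additive code achieving equality. The idea is to take the $\frac{q^{k_0}-1}{q-1}$ many $\F_q^*$-orbit representatives and place them on disjoint blocks of coordinates, padded so that distinct representatives produce distinct $b$-symbol weights. Concretely, let $e=\frac{q^{k_0}-1}{q-1}$ and list the projective points $P_1,\dots,P_e$ of the $(k_0-1)$-dimensional projective space over $\F_q$; embedding $\F_q^{k_0}$ additively into $\F_{q^t}$ (possible since $k_0\le tn$ for suitable $n$, or by working over $\F_{q^t}$ directly with an $\F_q$-basis), one can design a generator-type description where the $i$-th orbit occupies a block long enough that its $b$-symbol weight is forced to be a prescribed value $w_i$ with $w_1<w_2<\dots<w_e$. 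Following the padding technique used in the proof of Theorem \ref{the20} (inserting runs of zeros of carefully chosen lengths and using Lemma \ref{wbformula11} to compute the resulting $b$-symbol weights), one arranges that the $e$ orbits yield $e$ distinct weights, so $|\mathbf{W}_b(\mathfrak{C})|=e$, hence $|\mathbf{D}_b(\mathfrak{C})|=e$.

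The main obstacle is the lower bound construction: one must simultaneously ensure that (i) the resulting set is genuinely $\F_q$-additively closed — not merely a union of orbits, since an arbitrary union of $\F_q^*$-orbits need not be a subgroup — and (ii) the $b$-symbol weights of the $q^{k_0}-1$ nonzero codewords realize exactly $e$ distinct values, not more. Point (i) suggests building the code as an $\F_q$-span and then controlling weights of \emph{all} nonzero combinations, which is more delicate than just controlling $e$ chosen representatives; a cleaner route may be to take a known maximum-Hamming-weight-spectrum additive (or linear) code attaining $s=\frac{q^{k_0}-1}{q-1}$ — which exists by the resolved conjecture of \cite{SZS} proved in \cite{AN}, applied in the $\F_q$-linear setting — and then apply the block-expansion map $\ccc\mapsto f(\ccc)$ from the proof of Theorem \ref{the20} that multiplies every $b$-symbol distance by $b$ while preserving the number of distinct distances. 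This reduces the lower bound entirely to the Hamming case plus the already-established injectivity of $d\mapsto b\cdot d$, and keeps additive closure automatic since $f$ is $\F_q$-linear.

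Finally I would note the byproduct: specializing this machinery to the symplectic metric (Corollary \ref{addbound}) follows because symplectic weight on $\F_q^{2n}$ corresponds to a particular additive structure over $\F_{q^2}$, so the same orbit-counting bound $\frac{q^{k_0}-1}{q-1}$ applies and is tight.
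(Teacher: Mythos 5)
Your proposal is correct and follows essentially the same route as the paper: the upper bound comes from the fact that multiplication by $\F_q^*$ preserves $b$-symbol weight (so each of the $\frac{q^{k_0}-1}{q-1}$ scalar orbits contributes at most one weight), and the lower bound comes from taking an additive code with the full Hamming weight spectrum guaranteed by \cite{AN} and applying the $\F_q$-linear block-expansion $f$ that inserts $b-1$ zeros after each coordinate, turning each Hamming weight $w$ into the $b$-symbol weight $bw$. The ``cleaner route'' you settle on at the end is exactly the paper's argument, so the extra worries about building a union of orbits that is additively closed are unnecessary.
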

	\begin{proof}
		Since $\mathfrak{C}$ is an $\F_q$-linear code, by mimicking the construction of \cite{AN}, we obtain
		$A(q^{k_0},q^t,1)=\frac{q^{k_0}-1}{q-1}.$ For $a\in \F_{q}^*$ and $\mathbf{c}=(c_0,c_1,\ldots,c_{n-1})\in \mathfrak{C}$, one has $w_b(\mathbf{c})=w_b(a\cdot{\mathbf{c}})$ since $\mathbf{c}$ and $a\cdot\mathbf{c}$ have the same $\0$'s run distribution. Then $$A(q^{k_0},q^t,b)\leq \frac{q^{k_0}-1}{q-1}=A(q^{k_0},q^t,1).$$ Let $\mathfrak{C}$ be an additive code such that $\mathfrak{C}$ has $\frac{q^{k_0}-1}{q-1}$ nonzero Hamming weights. For any $\ccc=(c_0,c_1,\ldots, c_{n-1})\in \mathfrak{C}$, define
		$$f(\mathbf{c})=(c_0,\underbrace{00\cdots0}_{b-1},c_1,
		\underbrace{00\cdots0}_{b-1},
		\ldots,\underbrace{00\cdots0}_{b-1},c_{n-1},
		\underbrace{00\cdots0}_{b-1}).$$
		Let $\mathfrak{C}^{\prime}=\{f(\mathbf{c})|\mathbf{c}\in \mathfrak{C}\}.$ It is easy to verify that $\mathfrak{C}^{\prime}$ is an additive code over $\F_q$ with $\frac{q^k-1}{q-1}$ nonzero $b$-symbol weights. This implies that $A(q^{k_0},q^t,b)\geq \frac{q^{k_0}-1}{q-1}.$ The result holds.
	\end{proof}
	Given the numerous similarities between additive codes and linear codes, we will focus our detailed study on linear codes. Indeed, examining the $b$-symbol weight of a linear code is equivalent to analyzing the Hamming weight of an additive code, for the following reasons.
	Set $\mathbf{a}:=(a_1,a_2,\ldots,a_n)$.
	Let $\phi$ be a map from $(\F_{q}^b)^n$ to $(\F_{q^b})^n$ as follows.
	\begin{eqnarray}\label{phi}
		\phi: (\F_{q}^b)^n&\longrightarrow&(\F_{q^b})^n \\
		\pi_b(\mathbf{a})&\longmapsto&\phi(\pi_b(\mathbf{a}))=\left(\sum_{i=1}^{b-1}\omega^{b-1}a_i,\sum_{i=1}^{b-1}\omega^{b-1}a_{i+1},\ldots,\sum_{i=1}^{b-1}\omega^{b-1}a_{b+i}\right),\nonumber
	\end{eqnarray}
	where the indices are taken modulo $n$, and $\omega$ is a primitive element of $\F_{q^b}$. Therefore, the $b$-symbol weight spectrum of a linear code $\C$ with parameters $[n,k]_q$ is equivalent to the Hamming weight spectrum of the additive code $\phi(\C)$ with parameters $(n,q^k)_{q^b}$, where $\phi(\C)$ is an $\F_{q}$-linear subgroup of $\F_{q^b}^n$.
As a by-product, Theorem \ref{additivecase} reveals the maximum number of symplectic weights that a linear code can have. Before that, we introduce the definition of symplectic metric as follows: For any vector $\x=(x_1,x_2,\ldots,x_{2n})$ and $\mathbf{y}=(y_1,y_2,\ldots,y_{2n})\in \F_q^{2n}$, the symplectic weight of $\x$ is defined by
		$w_{\rm sym}(\x)=|\{i|(x_i,x_{n+i})\neq(0,0),1\leq i\leq n\}|.$ The minimum symplectic distance of a linear code $\C$ is $d_{\rm sym}(\C)=\min\{w_{\rm sym}(\x-\mathbf{y})|\x,\mathbf{y}\in \C {\rm~ and~} \x \neq \mathbf{y}\}.$
	The symplectic metric has garnered significant attention from scholars due to its close relationship with the construction of quantum codes \cite{Cald,Ketk}.
	\begin{corollary}\label{addbound}
		Let $L^{\rm sym}(k,q)$ be the maximum number of nonzero symplectic weights a linear code of dimension $k$ over $\F_q$ may have. Then $L^{\rm sym}(k,q)=\frac{q^k-1}{q-1}.$
	\end{corollary}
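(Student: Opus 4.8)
The plan is to deduce Corollary \ref{addbound} from Theorem \ref{additivecase} by identifying the symplectic metric on $\F_q^{2n}$ with the $2$-symbol metric on $\F_q^n$ in the appropriate guise, via the map $\phi$ discussed just before the corollary. Concretely, a vector $\x=(x_1,\ldots,x_{2n})\in\F_q^{2n}$ should be viewed as the pair sequence $((x_1,x_{n+1}),(x_2,x_{n+2}),\ldots,(x_n,x_{2n}))\in(\F_q^2)^n$, whose Hamming weight is exactly $w_{\rm sym}(\x)$ by definition. Composing with a $\F_q$-linear bijection $\F_q^2\cong\F_{q^2}$ (the map $\phi$ of (\ref{phi}) with $b=2$, suitably reinterpreted since here the two coordinates are not cyclically adjacent but paired by the shift $i\mapsto n+i$) turns a linear code $\C$ of dimension $k$ over $\F_q$ into an additive code over $\F_{q^2}$ of length $n$ and size $q^k$, i.e.\ with parameters $(n,q^{k})_{q^2}$, in such a way that symplectic weights of $\C$ correspond bijectively to Hamming weights of the additive image. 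Thus $L^{\rm sym}(k,q)$ is bounded above by $A(q^k,q^2,2)=\frac{q^k-1}{q-1}$ by Theorem \ref{additivecase}.

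For the matching lower bound I would exhibit a linear code of dimension $k$ over $\F_q$ attaining $\frac{q^k-1}{q-1}$ nonzero symplectic weights. First note that if $\x$ and $a\x$ have the same symplectic weight for every $a\in\F_q^*$ (which is immediate, since $(x_i,x_{n+i})=(0,0)$ iff $(ax_i,ax_{n+i})=(0,0)$), then a linear code has at most $\frac{q^k-1}{q-1}$ nonzero symplectic weights, re-confirming the upper bound intrinsically. For the construction, take a linear $[n,k]_q$ code $\C_0$ having the maximum number $\frac{q^k-1}{q-1}$ of nonzero Hamming weights (such codes exist by \cite{AN}), and form $\C=\{(\ccc,\0)\mid\ccc\in\C_0\}\subseteq\F_q^{2n}$; then $w_{\rm sym}((\ccc,\0))=w_H(\ccc)$, so $\C$ has exactly $\frac{q^k-1}{q-1}$ nonzero symplectic weights. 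This gives $L^{\rm sym}(k,q)\geq\frac{q^k-1}{q-1}$, and combining with the upper bound yields equality.

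The main technical point — and the only place requiring care — is the identification of the symplectic metric with a Hamming-type metric on an additive code over $\F_{q^2}$: one must check that the pairing $i\mapsto(x_i,x_{n+i})$ followed by a fixed $\F_q$-linear isomorphism $\F_q^2\to\F_{q^2}$ sends $\C$ to an $\F_q$-linear subgroup of $\F_{q^2}^n$ of the correct size $q^k$ (the dimension is preserved because the map is an $\F_q$-linear injection), and that symplectic weight translates exactly to Hamming weight coordinatewise. Once this dictionary is in place, the upper bound is just Theorem \ref{additivecase} applied with $t=2$, $b=2$, $k_0=k$, and the lower bound is the trivial padding construction above; no further estimates are needed.
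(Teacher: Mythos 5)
Your proposal is correct and follows essentially the same route as the paper: both identify the symplectic weight of $\x\in\F_q^{2n}$ with the Hamming weight of its image in $\F_{q^2}^n$ under the $\F_q$-linear pairing $(x_i,x_{n+i})\mapsto x_i+w^{\prime}x_{n+i}$ and then invoke Theorem \ref{additivecase} (the paper reads off both bounds from this exact correspondence, while you add an explicit zero-padding construction for the lower bound, which also works). One small correction: the relevant quantity is $A(q^k,q^2,1)$ --- the additive image is measured in Hamming weight --- not $A(q^k,q^2,2)$ as you wrote twice; this is harmless here only because Theorem \ref{additivecase} gives the same value for every $b$.
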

	\begin{proof}
		Let  $\phi^{\prime}$ be a map from $\F_{q}^{2n}$ to $\F_{q^2}^n$ as follows.
		\begin{eqnarray}\label{phi}
			\phi^{\prime}: \F_{q}^{2n}&\longrightarrow& \F_{q^2}^n \\
			(x_1,x_2,\ldots,x_{2n})&\longmapsto&(x_1+w^{\prime}x_{n+1},\ldots,x_n+w^{\prime}x_{2n}),\nonumber
		\end{eqnarray}
		where  $\langle w^{\prime}\rangle=\F_{q^2}^*$. Therefore, the symplectic weight spectrum of a linear code with parameters $[2n,k]_q$ is the same as the Hamming weight spectrum of an additive code with parameters $(n,q^k)_{q^2}$. From Theorem \ref{additivecase}, we have
		$L^{\rm sym}(k,q)=A(q^{k},q^2,1)=\frac{q^k-1}{q-1}.$
		This completes the proof.
	\end{proof}

	\section{Linear codes}
	\hspace{0.5cm}The $b$-symbol distance spectrum $\mathbf{D}_b(\C)$ and the $b$-symbol weight spectrum $W_b(\C)$ coincide when $\C$ is linear. In this section, we investigate the inquiry regarding the maximum number of nonzero $b$-symbol weights a linear code can possess.	
Let $L(k,q,b)$ denote the maximum number of nonzero $b$-symbol weights that a code of dimension $k$ over $\F_q$ may possess.
		An $[n,k]_q$ code $\C$ is called a maximum $b$-symbol weight spectrum code (M$b$SW code) if $\C$ has $L(k,q,b)$ nonzero $b$-symbol weights.
	The subsequent result demonstrates that $L(k,q,b)$ remains independent of $b$ when there is no restriction on the length $n$ of $\C$.
	\begin{theorem}\label{thm8}
		For all prime powers $q$, and all integers $k\geq 1$, we have
		$L(k,q,b)=\frac{q^k-1}{q-1}.$
	\end{theorem}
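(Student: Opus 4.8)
The plan is to prove $L(k,q,b) = \frac{q^k-1}{q-1}$ by establishing both inequalities, leveraging the machinery already developed in the excerpt. For the upper bound $L(k,q,b) \leq \frac{q^k-1}{q-1}$, I would invoke the observation made just after Theorem \ref{additivecase}: the map $\phi$ of Equation (\ref{phi}) sends a linear $[n,k]_q$ code $\C$ to an additive code $\phi(\C)$ with parameters $(n,q^k)_{q^b}$ in such a way that the $b$-symbol weight spectrum of $\C$ equals the Hamming weight spectrum of $\phi(\C)$. Hence $|\mathbf{W}_b(\C)| = |\mathbf{W}_H(\phi(\C))| \leq A(q^k, q^b, 1) = \frac{q^k-1}{q-1}$ by Theorem \ref{additivecase}. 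Since this holds for every linear $[n,k]_q$ code and every length $n$, we get $L(k,q,b) \leq \frac{q^k-1}{q-1}$.

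For the lower bound, I would exhibit, for suitable length $n$, a linear $[n,k]_q$ code with exactly $\frac{q^k-1}{q-1}$ distinct nonzero $b$-symbol weights. The natural route, mirroring the proofs of Theorem \ref{additivecase} and Theorem \ref{thm8}'s companion results, is to start from a linear code $\C$ attaining $\frac{q^k-1}{q-1}$ nonzero \emph{Hamming} weights — such a code exists by \cite{AN} (the resolution of the Shi--Zhang--Solé conjecture) — and then apply the zero-padding map
$$f(\ccc) = (c_0,\underbrace{0,\ldots,0}_{b-1}, c_1, \underbrace{0,\ldots,0}_{b-1}, \ldots, c_{n-1}, \underbrace{0,\ldots,0}_{b-1}).$$
I would check that $f$ is $\F_q$-linear and injective, so $\C' = \{f(\ccc) \mid \ccc \in \C\}$ is a linear $[bn, k]_q$ code, and then verify that for each codeword, the $b$-symbol weight of $f(\ccc)$ equals $b \cdot w_H(\ccc)$: indeed, in $f(\ccc)$ every nonzero coordinate $c_i$ is followed by at least $b-1$ zeros and these blocks are disjoint, so each of the $w_H(\ccc)$ nonzero positions contributes exactly $b$ to the $b$-symbol weight via Lemma \ref{wbformula11} (or directly from the definition of $\pi_b$). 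Therefore $\mathbf{W}_b(\C') = \{b \cdot w \mid w \in \mathbf{W}_H(\C)\}$, which has the same cardinality $\frac{q^k-1}{q-1}$ as $\mathbf{W}_H(\C)$ since $w \mapsto bw$ is injective. This gives $L(k,q,b) \geq \frac{q^k-1}{q-1}$, and combining with the upper bound completes the proof.

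The only subtle point — and the one I would spell out carefully — is the claim $w_b(f(\ccc)) = b \cdot w_H(\ccc)$; one must be slightly careful about the \emph{cyclic} wraparound in the definition of $\pi_b$, but since $f(\ccc)$ ends with a block of $b-1$ zeros, the wraparound windows straddling position $bn-1$ and position $0$ behave exactly as the interior windows, so no boundary anomaly arises. Everything else is routine: linearity and injectivity of $f$ are immediate, and the cardinality count for $\{b \cdot w : w \in \mathbf{W}_H(\C)\}$ is trivial. I do not anticipate a genuine obstacle here — the result is essentially a corollary of Theorem \ref{additivecase} together with the $b$-fold zero-padding trick already used twice in the excerpt — so the main "work" is simply assembling these pieces cleanly and noting that the upper bound argument is valid uniformly over all $n$.
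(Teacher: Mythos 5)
Your proposal is correct and follows essentially the same route as the paper: the paper's own (two-line) proof gets the upper bound from the observation that all nonzero scalar multiples of a codeword share the same $\mathbf{0}$'s run distribution---which is exactly the mechanism underlying your detour through $\phi$ and Theorem \ref{additivecase}---and gets the lower bound by the same $b$-fold zero-padding of a maximum-weight-spectrum code from \cite{AN}. Your explicit verification that $w_b(f(\ccc))=b\cdot w_H(\ccc)$, including the cyclic wraparound check, correctly fills in a detail the paper leaves implicit.
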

	\begin{proof}
	Since all nonzero multiples of a given codeword share the same distribution of $\mathbf{0}$'s run, we have $L(k,q,b)\leq \frac{q^k-1}{q-1}$. By emulating the proof of Theorem \ref{additivecase}, linear codes can be constructed with precisely $\frac{q^k-1}{q-1}$ nonzero $b$-symbol weights. Hence the result holds.
	\end{proof}
	Let $L(n,k,q,b)$ represent the maximum number of nonzero $b$-symbol weights of a code $\C$ with parameters $[n,k]$ over $\F_q$. From Theorem \ref{thm8}, it's evident that when the length $n$ is added, we obtain $L(n,k,q,b)=L(k,q,b)$. Let $l_0(k,q,b)$ denote the smallest integer such that $L(l_0(k,q,b),k,q,b)=L(k,q,b)$. Then, we have
	\begin{proposition}\label{pro3}
		Let notations be as above. Then
		$l_0(k,q,b)\geq \left\lceil\frac{bq}{2}\cdot \frac{q^k-1}{q-1}\right\rceil.$
	\end{proposition}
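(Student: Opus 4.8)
The plan is to run, for the $b$-symbol metric, the counting argument that proves the analogous lower bound in the Hamming case, and then to recover the extra factor $b$ from the structure of zero runs. Write $N=\frac{q^{k}-1}{q-1}$; since $l_0(k,q,b)$ is an integer, it suffices to show that every $[n,k]_q$ code $\C$ with $|W_b(\C)|=N$ satisfies $n\ge\frac{bq}{2}N$. Fix a generator matrix $G$ of $\C$ with columns $g_0,\dots,g_{n-1}\in\F_q^{k}$, indices read modulo $n$, and set $V_w=\langle g_w,g_{w+1},\dots,g_{w+b-1}\rangle\subseteq\F_q^{k}$ for $w\in\Z_n$.

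First I would reduce to the case where $\C$ has length $l_0(k,q,b)$ and $G$ has no run of $b$ consecutive zero columns: if $g_i=\dots=g_{i+b-1}=\mathbf{0}$, then deleting column $i$ preserves the dimension and every $b$-symbol weight — each codeword already vanishes on positions $i,\dots,i+b-1$, so the maximal zero run through them, of some length $\ell\ge b$, loses exactly one all-zero length-$b$ window while the overall length drops by one — contradicting minimality. Hence $1\le\dim V_w\le b$ for every $w$. Next I would set up a double count. The $w$-th $b$-symbol coordinate of $c_x=xG$ vanishes exactly when $x\in V_w^{\perp}$, so $w_b(c_x)=n-\#\{w:x\in V_w^{\perp}\}$; summing over all $x\in\F_q^{k}$ and collecting nonzero vectors into the $N$ one-dimensional subspaces (on which $w_b$ is constant) gives, with $[x]$ ranging over those subspaces,
\[
\sum_{[x]}w_b(c_x)=\frac{1}{q-1}\Big(nq^{k}-\sum_{w\in\Z_n}q^{\,k-\dim V_w}\Big)\ \ge\ \frac{nq^{k}-nq^{\,k-1}}{q-1}=nq^{\,k-1},
\]
the inequality using $\dim V_w\ge 1$. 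On the other hand, the $N$ numbers $w_b(c_x)$ are pairwise distinct integers lying in $\{b,\dots,n\}$, so their sum is at most $n+(n-1)+\dots+(n-N+1)=Nn-\binom{N}{2}$. Combining the two estimates yields $nq^{k-1}\le Nn-\binom{N}{2}$, i.e.\ $\binom{N}{2}\le n\,\frac{q^{k-1}-1}{q-1}$; since $N-1=\frac{q(q^{k-1}-1)}{q-1}$, this simplifies to $n\ge\frac{q}{2}N$, which is exactly the claim for $b=1$.

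The remaining, and hardest, step is to promote the constant $\frac{q}{2}$ to $\frac{bq}{2}$. The mechanism I would try to establish is a rigidity statement for length-minimal M$b$SW codes: that between any two cyclically consecutive nonzero columns of $G$ there are exactly $b-1$ zero columns (the reduction above gives "at most $b-1$"; the content is that a shortest code cannot do with fewer), and that each window meets only a single projective column direction. Granting this, $\C$ is obtained from a code $\C_0$ of length $n/b$ by inserting $b-1$ zeros after every coordinate, so $W_b(\C)=\{b\,w:w\in\mathbf{W}_H(\C_0)\}$; hence $\C_0$ has $N$ distinct nonzero Hamming weights, so $n/b\ge l_0(k,q,1)$, and by the $b=1$ case $l_0(k,q,1)\ge\lceil\frac{q}{2}N\rceil$, whence $n\ge b\lceil\frac{q}{2}N\rceil\ge\lceil\frac{bq}{2}N\rceil$. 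I expect the proof of this rigidity — ruling out shortenings that trade a zero column in one gap against cleverly arranged nonzero directions elsewhere — to be the main obstacle; a fallback, should full rigidity be unavailable, is to keep the gap-and-run bookkeeping explicit throughout the double count, so that the surplus length forced by the $N$ distinct $b$-symbol weights is counted with multiplicity $b$.
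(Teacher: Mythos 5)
Your double count is a correct reconstruction of the $b=1$ bound $n\ge\frac{q}{2}\cdot\frac{q^k-1}{q-1}$, which is exactly what the paper itself leans on: its entire proof is a citation of Theorem \ref{thm8} together with Lemma 5.1 of \cite{AN}, with no argument supplied for where the factor $b$ comes from. You are right that promoting $\frac{q}{2}$ to $\frac{bq}{2}$ is the real content, and you are honest that your ``rigidity'' mechanism is unproven. That is the gap, and it cannot be closed: the rigidity claim is false, and so is the inequality of the proposition as stated. Take $q=2$, $k=2$, $b=2$, so $N=\frac{q^k-1}{q-1}=3$ and the claimed bound reads $l_0(2,2,2)\ge 6$. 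The $[4,2]_2$ code generated by $(1,0,0,0)$ and $(0,1,1,0)$ has nonzero codewords $(1,0,0,0)$, $(0,1,1,0)$, $(1,1,1,0)$ with $2$-symbol weights $2$, $3$, $4$, so it already attains $|W_2(\C)|=3=L(2,2,2)$ at length $4$; hence $l_0(2,2,2)\le 4<6$. Its generator matrix has a single isolated zero column (position $3$) and two adjacent nonzero columns, which is precisely the configuration your rigidity hypothesis must exclude; note that this zero column cannot be removed without changing the $2$-symbol weights, so it survives your reduction, and length-minimal M$b$SW codes need not be ``$b$-fold stretched'' codes.

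The failure is structural, not just an unlucky counterexample: with no run of $b$ consecutive zero columns one has $\dim V_w\ge1$ for every window, and your two estimates give $nq^{k-1}\le nN-\binom{N}{2}$, i.e.\ $n\ge\frac{q}{2}N$ --- and in the example above this chain is tight-ish ($8\le 9$) while the code comfortably beats $\frac{bq}{2}N$. No bookkeeping of gaps and runs inside this double count can manufacture an extra factor of $b$, because the lower bound $\sum_{[x]}w_b(c_x)\ge nq^{k-1}$ is already essentially sharp. So your proposal correctly isolates, and fails at, exactly the step that the paper's one-line proof glosses over; the conclusion to draw is that the stated lower bound on $l_0(k,q,b)$ is too strong for $b\ge2$ (the construction behind Theorem \ref{thm8} shows $l_0(k,q,b)\le b\,l_0(k,q,1)$, an upper bound, which appears to have been transposed into a lower bound here), and a correct general lower bound from this method is only $l_0(k,q,b)\ge\left\lceil\frac{q}{2}\cdot\frac{q^k-1}{q-1}\right\rceil$.
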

	\begin{proof}
		The desired result follows from Theorem \ref{thm8} and \cite[Lemma 5.1]{AN}.
	\end{proof}
	
	When $n< l_0(k,q,b)$, the function $L(n,k,q,b)$ depends on the maximum $\0$'s run $\theta$ of $\C$.
	A similar combinatorial function $L(n,k,q,b,\theta)$ can be defined as follows: the maximum size of $W_b(\C)$, where $\C$ is a linear code with parameters $n,k,q$, and $\theta$.

	Recall that $\C(i)=\{\ccc|\ccc\in \C^* {\rm ~and~} \vartheta(\ccc)=i\}$, then we have the following results.
	\begin{lemma}\label{range}
		For any positive integers $n$, $b$ and $m$, we have:
		\begin{itemize}
\setlength{\itemsep}{5pt}
\setlength{\parsep}{0pt}
\setlength{\parskip}{0pt}
			\item [{\rm (1)}] $W_{b+m}(\C(b))=\{n\};$
			\item [{\rm (2)}] $W_b(\C(b))\subseteq\left\{n-\left\lfloor\frac{n}{b+1}\right\rfloor,
			n-\left\lfloor\frac{n}{b+1}\right\rfloor+1,\ldots,n-1\right\};$
			\item [{\rm (3)}] $W_b(\C(b+m))\subseteq\left\{n-(m+1)\cdot\left\lfloor\frac{n}{b+m+1}\right\rfloor,
			n-(m+1)\cdot\left\lfloor\frac{n}{b+m+1}\right\rfloor+1,\ldots,n-(m+1)\right\};$
			\item [{\rm (4)}] $1\leq|W_b(\C)|\leq n+1-b\cdot\left\lfloor\frac{n}{\theta+1}\right\rfloor.$
		\end{itemize}
	\end{lemma}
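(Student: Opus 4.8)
The four items are really statements about how the maximum $\mathbf{0}$'s run length of a codeword controls its $b$-symbol weight, so the plan is to work entirely through Lemma~\ref{wbformula11}, i.e.\ through the formula $w_b(\mathbf{a})=n-\sum_{i=b}^{n-1}(i-b+1)\Psi(\mathbf{a},\mathbf{0}_i)$, together with elementary counting of how many disjoint runs of a given length can fit on a cycle of length $n$.

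\textbf{Item (1).} If $\ccc\in\C(b)$ then $\vartheta(\ccc)=b$, so $\Psi(\ccc,\mathbf{0}_i)=0$ for every $i\geq b+1$; hence the sum in the $(b+m)$-symbol formula, $\sum_{i=b+m}^{n-1}(i-b-m+1)\Psi(\ccc,\mathbf{0}_i)$, has only zero terms (its smallest index is $b+m>b$), so $w_{b+m}(\ccc)=n$. This gives $W_{b+m}(\C(b))=\{n\}$ (the set is nonempty since $\C(b)\neq\varnothing$ is implicit in the statement, or one notes that if $\C(b)=\varnothing$ the claim is vacuous — I would simply phrase it as "whenever $\C(b)\neq\varnothing$").

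\textbf{Items (2) and (3).} For $\ccc\in\C(b+m)$ we have $\vartheta(\ccc)=b+m$, so $\Psi(\ccc,\mathbf{0}_i)=0$ for $i>b+m$ and the formula collapses to $w_b(\ccc)=n-\sum_{i=b}^{b+m}(i-b+1)\Psi(\ccc,\mathbf{0}_i)$. The upper end of the claimed interval ($n-1$ in (2), $n-(m+1)$ in (3)) comes from the fact that $\ccc$ has at least one run of length exactly $b+m$, contributing at least $(b+m-b+1)=m+1$ to the sum; and the remaining terms are $\geq 0$. The lower end comes from bounding the sum above: since all the maximal $\mathbf{0}$-runs counted are pairwise disjoint arcs on $cir(\ccc)$, each of length $\leq b+m$, and the $i$-th type of run (length $i\geq b$) consumes at least $b+1$ cells when $i\geq b$ — more precisely, a run counted by $\Psi(\ccc,\mathbf{0}_i)$ occupies exactly $i$ zero-cells plus the two flanking nonzero cells but those nonzero cells may be shared — I would use the cruder bound that the total contribution $\sum_{i=b}^{b+m}(i-b+1)\Psi(\ccc,\mathbf{0}_i)$ is maximized by packing as many length-$(b+m)$ runs as possible, giving $\leq (m+1)\lfloor n/(b+m+1)\rfloor$. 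The cleanest route is: each maximal run of length $i\geq b$ together with one adjacent nonzero symbol occupies $i+1\geq b+1$ consecutive positions, and at most $\lfloor n/(b+m+1)\rfloor$ pairwise-disjoint blocks of length $b+m+1$ fit in the cycle, while each contributes at most $m+1$ to the sum. This yields (3), and (2) is the case $m=0$.

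\textbf{Item (4).} The lower bound $|W_b(\C)|\geq 1$ is immediate since $\C^*\neq\varnothing$. For the upper bound, apply the formula to an arbitrary nonzero codeword $\ccc$: writing $\vartheta(\ccc)=j\leq\theta$, we get $w_b(\ccc)=n-\sum_{i=b}^{j}(i-b+1)\Psi(\ccc,\mathbf{0}_i)\geq n-(\theta-b+1)\lfloor n/(\theta+1)\rfloor$ — here I would again bound $\sum_{i=b}^{j}(i-b+1)\Psi(\ccc,\mathbf{0}_i)\le (\theta-b+1)\lfloor n/(\theta+1)\rfloor$ by the packing argument (at most $\lfloor n/(\theta+1)\rfloor$ disjoint runs of length $\le\theta$, each contributing at most $\theta-b+1$). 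Hence every $b$-symbol weight lies in the integer interval $[\,n-(\theta-b+1)\lfloor n/(\theta+1)\rfloor,\ n\,]$, wait — I should double-check the constant: the stated bound is $n+1-b\lfloor n/(\theta+1)\rfloor$, which corresponds to the interval having length $b\lfloor n/(\theta+1)\rfloor$ rather than $(\theta-b+1)\lfloor n/(\theta+1)\rfloor$; the discrepancy means the intended packing is by runs of length exactly $\theta$ but counting only that a codeword with $\Psi(\ccc,\mathbf{0}_\theta)=t$ already "uses up" roughly $t(\theta+1)$ positions AND that increasing a run's length from $b-1$ to $\theta$ changes the weight by a multiple of a single step, so the attainable weights form an arithmetic-type progression with at most $b\lfloor n/(\theta+1)\rfloor$ gaps — I would reconcile this by showing $W_b(\C)\subseteq\{\,n - \sum(i-b+1)\Psi : \Psi\text{ admissible}\,\}$ and that this set of values, as the run-profile varies under the constraint $\sum i\cdot\Psi(\ccc,\mathbf{0}_i)\le n - (\text{number of nonzero entries})$, has cardinality at most $1 + b\lfloor n/(\theta+1)\rfloor$.

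\textbf{Main obstacle.} Items (1)–(3) are straightforward once the $\vartheta$-truncation of the weight formula is written down; the genuinely delicate point is pinning down the exact constant $b\lfloor n/(\theta+1)\rfloor$ in item (4). The hard part will be verifying that the range of $w_b$ over all codewords is an interval (or a set of consecutive integers of the claimed size) rather than merely contained in an interval of that length — this requires the packing estimate to be done at the level of the full run-distribution vector, not just the single maximal run, and I expect to spend most of the effort making that combinatorial bound tight, possibly by reducing to the extremal configuration where a codeword consists of $\lfloor n/(\theta+1)\rfloor$ equally-spaced blocks each of the form $\mathbf{0}_\theta$.
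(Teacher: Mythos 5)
Your items (1) and (2) are correct and coincide with the paper's own argument: for $\ccc\in\C(b)$ one has $\Psi(\ccc,\0_i)=0$ for $i>b$ and $1\le\Psi(\ccc,\0_b)\le\lfloor n/(b+1)\rfloor$, and Lemma \ref{wbformula11} then gives $n-\lfloor n/(b+1)\rfloor\le w_b(\ccc)\le n-1$. The genuine gap is in your item (3), for $m\ge 1$. The packing claim you make load-bearing --- that the maximal runs sit inside at most $\lfloor n/(b+m+1)\rfloor$ pairwise-disjoint blocks of length $b+m+1$, each contributing at most $m+1$ to $\sum_{i=b}^{b+m}(i-b+1)\Psi(\ccc,\0_i)$ --- does not hold: a maximal run of length $i<b+m$ occupies only $i+1<b+m+1$ positions yet still contributes $i-b+1>0$ once $i\ge b$, so the sum is not controlled by the number of length-$(b+m+1)$ blocks. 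Concretely, take $n=5$, $b=1$, $m=1$ and the vector $(0,0,1,0,1)$: it has $\Psi(\cdot,\0_2)=\Psi(\cdot,\0_1)=1$, so the sum equals $1\cdot 1+2\cdot 1=3$, whereas $(m+1)\lfloor n/(b+m+1)\rfloor=2$. The honest consequence of the disjointness constraint $\sum_i(i+1)\Psi(\ccc,\0_i)\le n$ is only $\sum_{i=b}^{b+m}(i-b+1)\Psi(\ccc,\0_i)\le\frac{(m+1)n}{b+m+1}$, which is weaker than the floor expression. (The paper disposes of (3) with ``a similar argument'' to (2) and so does not supply this step either; but your write-up states the invalid inequality explicitly, and you should not present it as proved.)

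For item (4) you do not actually produce a proof: you notice the constant $b\lfloor n/(\theta+1)\rfloor$ does not match your packing estimate and end with speculation about arithmetic progressions and extremal configurations. None of that is needed, because (4) asserts only an upper bound on $|W_b(\C)|$, not that the spectrum is an interval or that the bound is attained. The paper's route is to partition $\C^*=\bigsqcup_i\C(i)$ and take the union of the ranges from (1)--(3): each interval $\left[\,n-(m+1)\lfloor n/(b+m+1)\rfloor,\ n-(m+1)\,\right]$ lies inside $\left[\,b\lfloor n/(\theta+1)\rfloor,\ n\,\right]$ because $(m+1)\lfloor n/(b+m+1)\rfloor+b\lfloor n/(\theta+1)\rfloor\le\frac{(m+1)n}{b+m+1}+\frac{bn}{b+m+1}=n$ (using $\theta+1\ge b+m+1$), and that ambient interval contains exactly $n+1-b\lfloor n/(\theta+1)\rfloor$ integers. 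Better still, (4) can be proved in one line, independently of the problematic item (3): since $\frac{i-b+1}{i+1}\le\frac{\theta-b+1}{\theta+1}$ for $b\le i\le\theta$ and $\sum_i(i+1)\Psi(\ccc,\0_i)\le n$, Lemma \ref{wbformula11} gives $w_b(\ccc)\ge n-\frac{(\theta-b+1)n}{\theta+1}=\frac{bn}{\theta+1}\ge b\lfloor n/(\theta+1)\rfloor$, so $W_b(\C)\subseteq\left[\,b\lfloor n/(\theta+1)\rfloor,\ n\,\right]$ and the cardinality bound follows at once. I recommend restructuring your argument around this estimate rather than the run-packing count.
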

	\begin{proof}
		The first statement is trivial and omitted. For any codeword $\ccc\in C(b)$, we have $1\leq\Psi(\mathbf{0}_b)\leq \lfloor\frac{n}{b+1}\rfloor$. It follows from Lemma \ref{wbformula11} that $n-\lfloor\frac{n}{b+1}\rfloor\leq w_b(\ccc)\leq n-1.$ Hence, the second claim holds.
		A similar argument is used in the third statement.
		The last statement follows from the first three statements.
	\end{proof}
	When $b>\theta,$ $L(n,k,q,b,\theta)=L(n,k,q,b+m_0,\theta)=1$ for any positive integer $m_0$. Hence we need to focus on the case $b\leq \theta$. The subsequent results establish the monotonicity of the function $L(n,k,q,b,\theta)$ with respect to parameters $n, k,$ and $q$.
	\begin{theorem}\label{the6}
		For all positive integers $n$, $b$, $k$, $\theta$, $m$ and all prime powers $q$, we have:
		\begin{itemize}
\setlength{\itemsep}{5pt}
\setlength{\parsep}{0pt}
\setlength{\parskip}{0pt}
			\item [{\rm (1)}]$L(n,k,q,b,\theta) \leq L(n+1,k,q,b,\theta);$
			\item [{\rm (2)}]$L(n,k,q,b,\theta) \leq L(n,k+1,q,b,\theta);$
			\item [{\rm (3)}]$L(n,k,q,b,\theta) \leq L(n,k,q^m,b,\theta).$
			\item [{\rm (4)}]$L(n,1,q,b,\theta)=1;$
			\item [{\rm (5)}]$L(n,n,q,b,n-1)=n-b+1;$
			\item [{\rm (6)}]$L(n,k,q,b,\theta)\leq n+1-b\cdot\left\lfloor\frac{n}{\theta+1}\right\rfloor.$
		\end{itemize}
	\end{theorem}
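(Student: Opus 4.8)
The plan is to prove the six items of Theorem \ref{the6} by leveraging the combinatorial description of $b$-symbol weights via the $\mathbf{0}$'s run distribution (Lemma \ref{wbformula11}) together with simple embedding/lifting constructions that preserve the maximum $\mathbf{0}$'s run length $\theta$. The three monotonicity statements (1)--(3) should each follow from exhibiting, for any code $\C$ attaining $L$ on the left-hand side, a code on the right-hand side with at least as many distinct $b$-symbol weights and the same $\theta$. For (1), I would take a generator matrix of an optimal $[n,k]_q$ code with maximum run $\theta$ and append a single column equal to one of the existing columns (or, more carefully, duplicate a column adjacent to a nonzero position so as not to create a longer run of zeros); the map $\mathbf{a}\mapsto$ (insert the repeated coordinate) changes every $b$-symbol weight by a controlled amount and, crucially, does not lengthen the longest zero-run, so $\theta$ is unchanged while $|W_b(\cdot)|$ cannot decrease. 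For (2), adjoining a new information coordinate (a new row to the generator matrix whose support avoids creating long zero-runs) produces a $[n,k+1]_q$ code containing the old one as a subcode, hence with at least as many weights; care is needed to argue $\theta$ can be kept fixed, which is where I expect the only real friction in parts (1)--(3). For (3), base extension $\F_q\hookrightarrow\F_{q^m}$ applied coordinatewise to a generator matrix yields an $[n,k]_{q^m}$ code containing a copy of the $\F_q$-code with identical $\mathbf{0}$'s run behavior, so the inequality is immediate.

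For the closed-form items, (4) is essentially trivial: a one-dimensional code over $\F_q$ is $\{a\mathbf{g}\mid a\in\F_q\}$, and since all nonzero multiples of $\mathbf{g}$ share the same $\mathbf{0}$'s run distribution, Lemma \ref{wbformula11} forces a single nonzero $b$-symbol weight, so $L(n,1,q,b,\theta)=1$. Item (6) is exactly the upper bound $|W_b(\C)|\le n+1-b\lfloor n/(\theta+1)\rfloor$ already recorded as Lemma \ref{range}(4); I would simply cite it, noting the bound holds for every code with maximum run $\theta$ and hence for the extremal one. The substantive closed form is (5): for $\C=\F_q^n$ we have $\theta=n-1$ (the all-but-one-zero vectors give a run of $n-1$ zeros, and the zero word is excluded so no longer run occurs among codewords), and we must show $|W_b(\F_q^n)|=n-b+1$. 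The upper bound follows from Lemma \ref{range}(4) with $\theta=n-1$, since $\lfloor n/(\theta+1)\rfloor=\lfloor n/n\rfloor=1$, giving $|W_b(\F_q^n)|\le n+1-b$. For the matching lower bound I would exhibit, for each $j$ with $0\le j\le n-b$, a vector whose $b$-symbol weight equals $n-j$: take the vector $\mathbf{0}_{j}$-type pattern, i.e.\ a single run of exactly $j$ zeros and all other $n-j$ coordinates nonzero (so $\Psi(\mathbf{a},\mathbf{0}_j)=1$ and $\Psi(\mathbf{a},\mathbf{0}_i)=0$ for $i\ne j$ when $j\ge b$, and the word has full $b$-symbol weight $n$ when $j<b$); Lemma \ref{wbformula11} then gives $w_b(\mathbf{a})=n-(j-b+1)$ for $j\ge b$, sweeping out the values $n-1,n-2,\ldots,b$, and the value $n$ is realized by any word with no zero-run of length $\ge b$. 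Thus $W_b(\F_q^n)\supseteq\{b,b+1,\ldots,n\}$, a set of size $n-b+1$, completing (5).

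The main obstacle I anticipate is the bookkeeping in (1) and (2): one must perform the length-extension or dimension-extension in a way that provably keeps the longest zero-run equal to the prescribed $\theta$ rather than accidentally increasing it, since the statement fixes $\theta$ on both sides. The clean way around this is to insert the repeated coordinate (for (1)) immediately after a nonzero coordinate that is common to enough codewords, or more robustly to argue that there always exists such a placement because the extremal code is not contained in any coordinate hyperplane; similarly for (2) one chooses the new generator row to have a coordinate nonzero wherever needed. Everything else is a direct application of the run-distribution weight formula and the already-established Lemma \ref{range}. No step requires heavy computation; the whole theorem is a sequence of short structural arguments, with the monotonicity constructions being the only place demanding a little care.
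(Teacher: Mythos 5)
Your treatment of items (4), (5) and (6) is correct and follows the paper's own route: (4) holds because all nonzero scalar multiples of a generator share the same $\mathbf{0}$'s run distribution, (5) identifies the $n=k$ case as the universe code with $\theta=n-1$ and matches the upper bound of Lemma \ref{range}(4) by an explicit family of single-run vectors, and (6) is a direct citation of Lemma \ref{range}(4). Your base-extension argument for (3) is also sound: a coordinate of an $\F_{q^m}$-linear combination of the generators vanishes if and only if the corresponding coordinate of each of its $\F_q$-components vanishes, so the extended code has the same maximal zero-run $\theta$ and its $b$-symbol weight set contains $W_b(\C)$.

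The gap is in (1) and (2), and it is exactly the one you flag but do not close. For (1), ``duplicate a column adjacent to a nonzero position'' is not well defined across the whole code: once $k\geq 2$, every coordinate position is zero on some nonzero codeword (the subcode $\{\ccc\in\C : c_i=0\}$ has dimension at least $k-1\geq 1$), and for a cyclic code every position even lies on some maximal zero run, so there need not be any insertion point that preserves $\theta$. Moreover, inserting a coordinate changes $w_b$ by an amount depending on whether the duplicated coordinate is zero in the given codeword and on the length of the run it sits in, so distinct weights can collide and the claim that $|W_b(\cdot)|$ cannot decrease itself needs justification. For (2), the analogous issue is that the new codewords $\mathbf{a}+\lambda\mathbf{g}$ may acquire zero runs longer than $\theta$, and a counting argument over the $n$ circular windows of length $\theta+1$ only rules this out when roughly $nq^{k}<q^{\theta+1}$. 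To be fair, the paper dismisses statements (1)--(4) as ``obvious'' and supplies no construction at all, so you are not missing an argument that the paper actually gives; but as written, your monotonicity steps (1) and (2) are sketches with a real unresolved obstruction, not proofs.
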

	\begin{proof}
		The first four statements are obvious. If $n=k$, then $C$ has to be a universe code with parameters $[n,n,1]_q$, and $\theta=n-1$. By the definition of $b$-symbol metric, we have $L(n,n,q,b,n-1)=n-b+1.$
		The last statement follows from the fourth statement of Lemma \ref{range}.
	\end{proof}
	
	From the proof of Theorem \ref{the6}, we observe that the parameter $\theta$ relies on the dimension $k$, the length $n$, and the algebraic structure of $\C$. For instance, in the case of cyclic codes, we have $\theta=k-1$. We will delve into the specifics of cyclic codes in Section \ref{cycliccodes} and elaborate on why $\theta=k-1$ when $\C$ is cyclic.
	From the perspective of the automorphism group of linear codes, Chen and Zhang \cite{ChenZhang} provided an upper bound on the cardinality of the Hamming weight spectrum of linear codes.	
	However, their result may not hold true if the Hamming metric is replaced with the $b$-symbol metric. Under the $b$-symbol metric, more restrictive conditions for the subgroup $\mathsf{G}$ are required.

	\begin{proposition}\label{Prop28}
		Let $\C$ be a linear code of length $n$ over $\F_q$ with $l$ nonzero $b$-symbol weights and let ${\rm Aut}(\C)$ be the automorphism group of $\C$. Assume that $\mathsf{G}$ is a subgroup of ${\rm Aut}(\C)$ such that for any $\ccc\in \C$ and $A\in \mathsf{G}$, $\ccc$ and $\ccc A$ {\bf have the same $\0$'s run distribution}. If the number of orbits of $\mathsf{G}$ on $\C^*$ is equal to $s$, then $l\leq s$. Moreover, the equality holds if and only if for any two nonzero codewords $\ccc_1,\ccc_2\in \C$ with the same $b$-symbol weight, there exists an automorphism $A\in \mathsf{G}$ such that $\ccc_1A=\ccc_2$.
	\end{proposition}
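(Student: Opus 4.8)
The plan is to build a surjection from the set $\mathsf{G}\backslash\C^{*}$ of $\mathsf{G}$-orbits on $\C^{*}$ onto the $b$-symbol weight spectrum $W_{b}(\C)$, deduce $l\leq s$ by counting, and then read off the equality case from when this surjection is injective.

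First I would record the decisive consequence of the hypothesis on $\mathsf{G}$. By Lemma \ref{wbformula11}, the quantity $w_{b}(\ccc)$ is a function of the $\0$'s run distribution $\{\Psi(\ccc,\0_{1}),\ldots,\Psi(\ccc,\0_{n})\}$ of $\ccc$ alone. Hence, if $\ccc$ and $\ccc A$ have the same $\0$'s run distribution for every $A\in\mathsf{G}$, then $w_{b}(\ccc)=w_{b}(\ccc A)$, so $w_{b}$ is constant on each $\mathsf{G}$-orbit contained in $\C^{*}$. This is exactly the step where the stronger assumption on $\mathsf{G}$ is genuinely needed: in the Hamming case any $\ccc$ and $\ccc A$ automatically have equal Hamming weight for $A\in{\rm Aut}(\C)$, but a monomial matrix can permute coordinates in a way that alters the run structure (and thus $w_{b}$ for $b\geq 2$) while preserving $w_{H}$, which is why Chen and Zhang's bound does not transfer verbatim.

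Next I would define $\Phi\colon \mathsf{G}\backslash\C^{*}\longrightarrow W_{b}(\C)$ by $\Phi(\mathsf{G}\ccc)=w_{b}(\ccc)$; the previous paragraph shows $\Phi$ is well defined. It is surjective, since every element of $W_{b}(\C)$ equals $w_{b}(\ccc)$ for some $\ccc\in\C^{*}$, and $\ccc$ lies in the orbit $\mathsf{G}\ccc$. As $|\mathsf{G}\backslash\C^{*}|=s$ and $|W_{b}(\C)|=l$, surjectivity of $\Phi$ gives $l\leq s$.

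Finally, for the equality statement, note $l=s$ if and only if $\Phi$ is a bijection, i.e.\ if and only if $\Phi$ is injective. Injectivity says that whenever two orbits $\mathsf{G}\ccc_{1}$ and $\mathsf{G}\ccc_{2}$ satisfy $w_{b}(\ccc_{1})=w_{b}(\ccc_{2})$, they must coincide, i.e.\ $\ccc_{2}\in\mathsf{G}\ccc_{1}$, i.e.\ there is $A\in\mathsf{G}$ with $\ccc_{1}A=\ccc_{2}$; quantifying over all pairs of nonzero codewords of equal $b$-symbol weight yields precisely the asserted condition. I do not expect a real obstacle here; the only point demanding care is to make explicit that it is the ``same $\0$'s run distribution'' hypothesis, not merely ``same $b$-symbol weight'', that makes $\Phi$ well defined, and this is settled by Lemma \ref{wbformula11}.
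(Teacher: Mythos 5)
Your proposal is correct and follows essentially the same route as the paper: the key step in both is that preservation of the $\0$'s run distribution together with Lemma \ref{wbformula11} forces $w_b$ to be constant on each $\mathsf{G}$-orbit, after which the bound and the equality characterization are the standard counting argument (which the paper simply delegates to Chen and Zhang's Proposition II.2). Your explicit surjection $\Phi$ from orbits to $b$-symbol weights is just a cleanly written version of that same argument.
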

	\begin{proof}
		Since $\mathsf{G}$ always preserves the $\mathbf{0}$'s run weight distribution of the codewords in $\C$ without alteration, it can be deduced from Lemma \ref{wbformula11} that codewords belonging to the same orbit in $\C$ possess the same $b$-symbol weight. The proof of the second claim aligns with \cite[Proposition II.2]{ChenZhang}, hence we omit it here.
	\end{proof}
	
	\section{Cyclic codes}\label{cycliccodes}
	\hspace{0.5cm}In this section, we assume that $\gcd(n,q)=1$ and focus on the bounds of $b$-symbol weight spectrum of cyclic codes. Firstly, we recall some knowledge related to cyclic codes.
	\begin{definition}
		A $q$-ary linear code $\C$ of length $n$ is called cyclic if $\C$ remains invariant under the cyclic shift $\tau$, defined as:
		$$\tau(c_0,c_1,\ldots,c_{n-1})=(c_{n-1},c_0,\ldots, c_{n-2})\in \C,$$
		where $(c_0,c_1,\ldots,c_{n-1})\in \C.$
	\end{definition}

	By associating a vector $(c_0,c_1,\ldots,c_{n-1})\in\F_q^n$ with $\sum_{i=0}^{n-1}c_0x^i\in \frac{\F_q[x]}{(x^n-1)},$ a linear code $\C$ of length $n$ over $\F_q$ corresponds to a subset of the residue class $\frac{\F_q[x]}{(x^n-1)}.$ The linear code $\C$ is cyclic if and only if the corresponding subset in $\frac{\F_q[x]}{(x^n-1)}$ forms an ideal of the ring $\frac{\F_q[x]}{(x^n-1)}.$
It is well-known that every ideal of $\frac{\F_q[x]}{(x^n-1)}$ is principal. To distinguish the principal ideal $(f(x))$ of $\F_q[x]$ from that ideal in $\frac{\F_q[x]}{(x^n-1)}$, we use the notation $\langle f(x)\rangle$ for the principal ideal of $\frac{\F_q[x]}{(x^n-1)}$ generated by $f(x)$.
	Let $\C=\langle g(x)\rangle$ be a cyclic code, where $g(x)$ is monic and has the least degree. Let $h(x)=\frac{x^n-1}{g(x)}$. Then $g(x)$ and $h(x)$ are referred to as the generator polynomial and check polynomial, respectively. The dual code, denoted by $\C^{\perp}$, of $\C$ has generator polynomial $h^*(x)$, which is the reciprocal of $h(x),$ normalized to be monic.

	
	When $\C$ is cyclic, every nonzero codeword in $\C$ is generated by a recursion of degree $k$, and therefore, it has at most $k-1$ consecutive zeroes. In other words, $\theta=k-1$ if $\C$ is cyclic. Consequently, we obtain the following fundamental inequalities concerning the three critical parameters $n, k,$ and $d_H(\C)$ of cyclic codes, which are independent of any information other than the cyclic structure.
	\begin{theorem}\label{the9}
		Let $\C$ be a cyclic code with parameters $[n,k,d_H(\C)]$ over $\F_q$. Then
		$d_H(\C)\geq \left\lfloor\frac{n}{k}\right\rfloor.$
	\end{theorem}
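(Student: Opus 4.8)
The key structural fact to exploit is that every nonzero codeword of a cyclic code satisfies a linear recurrence of degree $k$ (given by the check polynomial $h(x)$ of degree $k$). I claim this forces the maximum $\mathbf{0}$'s run length $\theta$ of $\C$ to be at most $k-1$: if some nonzero codeword $\ccc$ had $k$ consecutive zero coordinates (reading cyclically), then the recurrence would propagate these zeros around the whole circle, forcing $\ccc = \mathbf{0}$, a contradiction. This is exactly the statement $\theta = k-1$ claimed in the paragraph preceding the theorem. So the first step is to make this precise: use the check polynomial $h(x) = h_0 + h_1 x + \cdots + h_k x^k$ with $h_0 h_k \neq 0$, note that $\ccc \in \C$ iff the cyclic convolution of $\ccc$ with $h$ vanishes, i.e. every window of $k+1$ consecutive (cyclic) coordinates of $\ccc$ is orthogonal to $(h_0, \ldots, h_k)$, and conclude that $k$ consecutive zeros determine the next coordinate to be zero, hence all of them.

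\textbf{From $\theta \le k-1$ to the distance bound.} Once we know $\vartheta(\ccc) \le k-1$ for every nonzero $\ccc \in \C$, apply the $b=1$ case of Lemma \ref{range}(4) (equivalently Theorem \ref{the6}(6) with $b=1$): for the Hamming weight we have $w_H(\ccc) = w_1(\ccc) \ge n - \lfloor \frac{n}{\theta+1}\rfloor \ge n - \lfloor \frac{n}{k}\rfloor$ for every nonzero codeword. Actually it is cleaner to argue directly: partition the $n$ cyclic positions into consecutive blocks; if $\ccc$ has Hamming weight $w$, its $n-w$ zero coordinates are split into at most $w$ maximal runs (the runs are separated by the $w$ nonzero entries, read cyclically), so the longest zero run has length at least $\lceil (n-w)/w \rceil$. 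Combining with $\vartheta(\ccc) \le k-1$ gives $\lceil (n-w)/w\rceil \le k-1$, hence $(n-w)/w \le k-1$, i.e. $n \le kw$, i.e. $w \ge n/k$. Since $w$ is an integer, $w \ge \lceil n/k \rceil \ge \lfloor n/k \rfloor$. Taking the minimum over nonzero codewords yields $d_H(\C) \ge \lceil n/k \rceil \ge \lfloor n/k\rfloor$.

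\textbf{Obstacle and care points.} There is essentially no hard obstacle here — the result is elementary, as the authors themselves note — but two small points need care. First, the recurrence argument: one must use that $h_0 \ne 0$ (so the recurrence can be run "backwards" as well, or equivalently that a window of $k$ zeros forces the adjacent coordinate to vanish on \emph{both} sides), which holds because $h(x) \mid x^n - 1$ and $\gcd(n,q)=1$ ensures $h(0) \ne 0$. Second, the counting step that a weight-$w$ vector has a zero run of length $\ge \lceil (n-w)/w\rceil$ should be stated cyclically: among the $n$ cyclic gaps, the $w$ nonzero positions partition the remaining $n-w$ zeros into exactly $w$ (cyclically) consecutive blocks, some possibly empty, and by pigeonhole one block has $\ge (n-w)/w$ zeros. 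I would present the proof in this order: (i) recall $\theta \le k-1$ via the check-polynomial recurrence; (ii) the pigeonhole bound relating weight to longest zero run; (iii) combine to get $w \ge n/k$ for all nonzero $\ccc$, hence the claimed bound on $d_H(\C)$.
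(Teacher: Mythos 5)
Your proposal is correct and follows essentially the same route as the paper: both arguments hinge on the fact that a nonzero codeword of a $k$-dimensional cyclic code has no run of $k$ consecutive zeros (via the degree-$k$ recurrence from the check polynomial), and then count zeros against the number of nonzero positions. Your write-up is in fact somewhat more careful than the paper's --- you justify $\theta\le k-1$ explicitly and your pigeonhole step even yields the slightly stronger bound $d_H(\C)\ge\lceil n/k\rceil$, whereas the paper's illustrative ``weight-$t$'' codeword is not quite consistent cyclically when $k\nmid n$ --- but the underlying idea is identical.
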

	\begin{proof}
		Assume that $n=kt+t_0$, where $t$ and $t_0$ are non-negative integers, and $0\leq t_0\leq k-1$.
		Since the maximum $\mathbf{0}$'s run length $\theta$ of a cyclic code equals $k-1$, for any nonzero codeword $\ccc\in \C$, $w_H(\ccc)$ at least $t$. The codeword with Hamming weight $t$ is of the form
		\begin{equation*}
			\ccc=(\mathbf{0}_{k-1}, a_1, \mathbf{0}_{k-1},
			a_2, \ldots, \mathbf{0}_{k-1}, a_t, \mathbf{0}_{t_0}).
		\end{equation*}
		Therefore, we obtain the desired result.
	\end{proof}
	\begin{remark}
		Reference \cite{Hend} is a thesis focusing on lower bounds for the minimum Hamming distance of cyclic codes. To our surprise, although the result of Theorem \ref{the9} is fundamental, it has not been previously provided to the best of our knowledge. Existing results on lower bounds for the minimum Hamming distance of cyclic codes $\C$ (such as the BCH Bound \cite{BCH,BCH1,Hoc}, Hartmann-Tzeng Bound \cite{HTz}, etc.) are predicated on their defining sets. The lower bound presented in Theorem \ref{the9} solely relies on the length $n$ and the dimension $k$. In other words, compared to the BCH bound and Hartmann-Tzeng bound, the lower bound offered by Theorem \ref{the9} necessitates less information about the cyclic code $\C$. However, this bound is relatively loose as a trade-off.
	\end{remark}
	The maximum length of $\mathbf{0}$'s runs in cyclic codes can also aid in characterizing the magnitude of the $b$-symbol weight spectrum of cyclic codes.

	\begin{theorem}\label{bound}
		Let $\C$ be a cyclic code with parameters $[n,k]_q$. Then
		\begin{equation*}
			1\leq|W_b(\C)|\leq n+1-b\cdot\left\lfloor\frac{n}{k}\right\rfloor.
		\end{equation*}
	\end{theorem}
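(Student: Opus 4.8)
The plan is to bound $|W_b(\C)|$ by noting that the $b$-symbol weights of a cyclic code are constrained by the maximum $\mathbf{0}$'s run length $\theta$, which for a cyclic code equals $k-1$ (as discussed just before Theorem~\ref{the9}). First I would invoke the fourth statement of Lemma~\ref{range}, namely $1\leq|W_b(\C)|\leq n+1-b\cdot\left\lfloor\frac{n}{\theta+1}\right\rfloor$, which holds for any linear code with maximum $\mathbf{0}$'s run length $\theta$. Then I would substitute $\theta=k-1$, so that $\theta+1=k$, giving immediately $1\leq|W_b(\C)|\leq n+1-b\cdot\left\lfloor\frac{n}{k}\right\rfloor$.

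The only thing requiring justification is why $\theta=k-1$ for a cyclic code, and this is essentially the same recursion argument used in the proof of Theorem~\ref{the9}: every nonzero codeword of $\C=\langle g(x)\rangle$ with check polynomial $h(x)$ of degree $k$ satisfies a linear recurrence of order $k$ determined by $h(x)$, so it cannot contain $k$ consecutive zeros (otherwise the whole codeword would vanish, since $\gcd(n,q)=1$ ensures $h(x)$ is not divisible by $x$). Hence $\vartheta(\ccc)\leq k-1$ for every $\ccc\in\C^*$. Conversely, a codeword achieving exactly $k-1$ consecutive zeros exists (for instance the one exhibited in the proof of Theorem~\ref{the9}, which has a run of $k-1$ zeros), so $\theta=k-1$ exactly. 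I would present this as a short paragraph citing the argument already given for Theorem~\ref{the9}.

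The lower bound $|W_b(\C)|\geq 1$ is trivial: if $k\geq 1$ then $\C^*$ is nonempty, so $W_b(\C)$ contains at least one element; this also follows from Lemma~\ref{range}(4). There is no real obstacle here — the result is an immediate corollary of Lemma~\ref{range} once one records $\theta=k-1$ for cyclic codes. If anything, the only subtlety worth a sentence is confirming that the bound in Lemma~\ref{range}(4) was stated for general linear codes in terms of $\theta$ and does not secretly assume anything incompatible with the cyclic setting; a quick check of that lemma's proof (which only uses the partition $\C^*=\bigsqcup_{i=0}^{\theta}\C(i)$ and Lemma~\ref{wbformula11}) confirms it applies verbatim. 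I would therefore keep the proof to three or four lines.

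\begin{proof}
	Since $\C=\langle g(x)\rangle$ is cyclic with check polynomial $h(x)=\frac{x^n-1}{g(x)}$ of degree $k$, every nonzero codeword $\ccc\in\C$ satisfies the linear recurrence associated with $h(x)$; as $\gcd(n,q)=1$, $x\nmid h(x)$, so $\ccc$ cannot have $k$ consecutive coordinates equal to $\mathbf{0}$ (otherwise $\ccc=\mathbf{0}$). Hence $\vartheta(\ccc)\leq k-1$ for all $\ccc\in\C^*$. On the other hand, the codeword
	$\ccc=(\mathbf{0}_{k-1}, a_1, \mathbf{0}_{k-1}, a_2, \ldots)$
	exhibited in the proof of Theorem~\ref{the9} attains a $\mathbf{0}$'s run of length $k-1$, so $\theta=\max\{\vartheta(\ccc)\mid\ccc\in\C^*\}=k-1$. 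Substituting $\theta=k-1$ into the fourth statement of Lemma~\ref{range} yields
	$$1\leq|W_b(\C)|\leq n+1-b\cdot\left\lfloor\frac{n}{k}\right\rfloor,$$
	as desired.
\end{proof}
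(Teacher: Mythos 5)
Your proof is correct and follows essentially the same route as the paper: the paper also derives the bound by combining Lemma~\ref{range}(4) with the fact that the maximum $\mathbf{0}$'s run length of a cyclic $[n,k]_q$ code is $k-1$ (which it establishes via the degree-$k$ recursion argument in the discussion preceding Theorem~\ref{the9}). Your write-up merely makes that recursion argument and the attainability of $\theta=k-1$ explicit, and note that only the inequality $\theta\leq k-1$ is actually needed for the upper bound, since $\left\lfloor\frac{n}{\theta+1}\right\rfloor\geq\left\lfloor\frac{n}{k}\right\rfloor$ whenever $\theta+1\leq k$.
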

	\begin{proof}
		Combining Lemma \ref{range} with the fact that the maximum length of $\mathbf{0}$'s runs in $\C$ equals $k-1$, we derive the desired result.
	\end{proof}
	When $\C$ is cyclic, the maximum length of $\mathbf{0}$'s runs, denoted as $\theta$, equals $k-1$. If $b\geq k$, then $W_b(\C)=\{n\}$. Hence, our focus lies on the case where $b<k$. It's important to note that we do not need to consider the parameter $\theta$ again when $\C$ is cyclic. We can define a combinatorial function $\Gamma(n,k,q,b)$ derived from $L(n,k,q,b,\theta)$ as:
	\begin{itemize}
		\item $\Gamma(n,k,q,b):$ the maximum size of $W_b(\C)$, where $\C$ is a cyclic code with parameters $[n,k]_q$.
	\end{itemize}
	The following proposition gives some bounds on the function $\Gamma(n,k,q,b)$.
	\begin{proposition}\label{prop13}
		For any positive $n,k,b$, and all prime power $b$, we have the following:
		\begin{itemize}
\setlength{\itemsep}{5pt}
\setlength{\parsep}{0pt}
\setlength{\parskip}{0pt}
			\item[{\rm (1)}]$\Gamma(n,n,q,b)=n-b+1;$
			\item[{\rm (2)}]$\Gamma(n,1,q,b)=1;$
			\item[{\rm (3)}]$\Gamma(2^{k_0}-1,k_0,2,b)=1$ for any prime $k_0;$
			\item[{\rm (4)}]$\Gamma(n,k,q,b)\geq k-b+1.$
		\end{itemize}
	\end{proposition}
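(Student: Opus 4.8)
The plan is to verify each of the four items by elementary arguments that leverage the structure of cyclic codes, in particular the facts established earlier: for a cyclic code $\C$ with parameters $[n,k]_q$ the maximum $\mathbf{0}$'s run length is $\theta = k-1$ (Section \ref{cycliccodes}), and the general bounds of Lemma \ref{range} and Theorem \ref{bound}. Items (1) and (2) should follow from the corresponding entries of Theorem \ref{the6}: when $k=n$ the cyclic code must be the whole space $\F_q^n$, which in particular realizes the $b$-symbol weights $n-b+1, n-b+2, \ldots, n$ (taking codewords with a single run of zeros of each possible length $0,1,\ldots,n-1$), so $\Gamma(n,n,q,b) = L(n,n,q,b,n-1) = n-b+1$; and when $k=1$ the code $\C$ consists of scalar multiples of a single codeword, all of which share the same $\mathbf{0}$'s run distribution, hence $|W_b(\C)| = 1$.

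For item (3), let $k_0$ be a prime and consider a binary cyclic code of length $n = 2^{k_0}-1$ and dimension $k_0$. The key observation is that for such parameters, every nonzero codeword has the same $\mathbf{0}$'s run distribution: since $x^n - 1$ factors over $\F_2$ into $x-1$ times a product of irreducible polynomials each of degree equal to the multiplicative order of $2$ modulo $n$, and that order is $k_0$ (as $2$ is a primitive root or at least has order $k_0$ when $2^{k_0}-1$ is prime, i.e.\ a Mersenne prime — more carefully, the cyclotomic cosets modulo $n$ other than $\{0\}$ all have size exactly $k_0$ because $k_0$ is prime and $2^{k_0} \equiv 1 \pmod n$), a $k_0$-dimensional cyclic code must be the irreducible cyclic code generated by one such degree-$(n-k_0)$ factor. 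Then all nonzero codewords lie in a single orbit under the group $\langle \tau, \{\sigma_a \mid a \in \F_2^*\}\rangle \leq {\rm Aut}(\C)$ — indeed under the cyclic shift alone, since the nonzero codewords form the multiplicative structure of the field $\F_{2^{k_0}}$ minus zero, acted on transitively by multiplication by $\alpha$. By Proposition \ref{Prop28} (the shift preserves the $\mathbf{0}$'s run distribution), $|W_b(\C)| = 1$, giving $\Gamma(2^{k_0}-1, k_0, 2, b) = 1$.

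For item (4), the plan is to exhibit a cyclic code $\C$ with parameters $[n,k]_q$ having at least $k-b+1$ distinct nonzero $b$-symbol weights. Take $\C = \langle g(x) \rangle$ with $g(x) = \frac{x^n-1}{x^{n-k}+\cdots}$ chosen so that $\C$ contains the information-set codewords corresponding to polynomials of the form $1 + x + \cdots + x^{j-1}$ times suitable shifts — more directly, one exploits that a cyclic code of dimension $k$ contains, for each $j$ with $1 \le j \le k$, a codeword whose support, read cyclically, consists of a single block of $j$ consecutive nonzeros followed by $n-j$ zeros (this is the codeword of minimal weight achieving the structure described in the proof of Theorem \ref{the9}, suitably truncated), and such a codeword has $b$-symbol weight $j + b - 1$ when $j \le n-b$ (a block of $j$ nonzeros contributes $j+b-1$ to the $b$-symbol weight via Lemma \ref{wbformula11}). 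Letting $j$ range over $1, 2, \ldots, k-b+1$ — still within the valid range since $k \le n$ forces $k-b+1 \le n-b+1$ — yields $b$-symbol weights $b, b+1, \ldots, k$, which are $k-b+1$ distinct values. Hence $\Gamma(n,k,q,b) \ge k-b+1$.

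The main obstacle is item (3): one must argue carefully that the hypothesis "$k_0$ prime" (together with $n = 2^{k_0}-1$) forces the $k_0$-dimensional cyclic code to be irreducible and its nonzero codewords to form a single cyclic orbit. This hinges on the fact that the order of $2$ modulo $2^{k_0}-1$ is exactly $k_0$, so that all nontrivial $2$-cyclotomic cosets modulo $n$ have cardinality $k_0$; since $\dim \C = k_0$ and $\dim$ equals the sum of the sizes of the cosets not in the defining set's complement, $\C$ must correspond to exactly one such coset, making $\C$ an irreducible (minimal) cyclic code isomorphic as a module to $\F_{2^{k_0}}$, on which $\tau$ acts as multiplication by a primitive element — hence transitively on nonzero codewords. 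Items (1), (2), (4) are routine given the earlier results and the construction in the proof of Theorem \ref{the9}.
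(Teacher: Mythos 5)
Your treatment of items (1) and (2) is fine and matches the paper, which dismisses them as self-evident; the difficulties are in items (3) and (4).

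In (3) your reduction to an irreducible cyclic code is correct (every nonzero $2$-cyclotomic coset modulo $n=2^{k_0}-1$ has size $k_0$ when $k_0$ is prime, so the check polynomial is a single irreducible factor of degree $k_0$), but the final step fails. Under the module identification $\C\cong\F_{2^{k_0}}$ the shift $\tau$ acts as multiplication by a root $\gamma$ of the check polynomial, and $\gamma$ is primitive --- hence the action transitive on nonzero codewords --- only when the corresponding coset representative is coprime to $n$. When $2^{k_0}-1$ is composite this can fail: for $k_0=11$, $n=2047=23\cdot 89$, the coset of $89$ yields an irreducible $[2047,11]_2$ code whose shift orbits have size $23$; it is the $89$-fold repetition of the $[23,11]_2$ irreducible cyclic code (the even-weight subcode of the Golay code), which has three distinct nonzero Hamming weights. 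So transitivity is not salvageable, and this example in fact contradicts statement (3) itself for $b=1$. The paper's own proof (``simplex or replicated simplex'', then cite Remark 4 of \cite{BUG}) has exactly the same blind spot.

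In (4) you assert that an $[n,k]_q$ cyclic code contains, for each $1\le j\le k-b+1$, a codeword supported on a single cyclic block of $j$ consecutive positions. Such a codeword is $x^if(x)$ with $\deg f=j-1$, and since $x$ is a unit modulo $x^n-1$ it lies in $\langle g(x)\rangle$ only if $g(x)\mid f(x)$, forcing $j\ge n-k+1$; for $k\le n/2$ no codeword of this shape exists for any $j\le k$. Because $\Gamma$ is a maximum you only need one witness code, but you have not produced one, and in general none exists: the only $[7,3]_2$ cyclic codes are the two simplex codes, all of whose nonzero words have weight $4$, so $\Gamma(7,3,2,1)=1<3=k-b+1$. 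Note that this also means items (3) and (4) contradict each other at $(n,k,q,b)=(7,3,2,1)$. The paper's route for (4) --- monotonicity of $\Gamma$ in $n$ applied to $\Gamma(k,k,q,b)=k-b+1$ --- is different from yours, but it rests on an unproved lengthening step for cyclic codes that the same example refutes.
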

	\begin{proof}
		The first two statements are self-evident. If $k_0$ is prime, then the cyclic code with parameters $[2^{k_0}-1,k_0]_2$ will either be the binary simplex code $S_{k_0}$ or replicated versions of $S_{{k_0^{\prime}}}$ for $k_0$ being a proper divisor of $k_0^{\prime}$. According to Remark 4 in \cite{BUG}, $S_{k_0}$ is a single $b$-symbol weight code. Hence, the third statement holds.

By the monotonicity of $\Gamma(n,k,q,b)$ with respect to $n$, we have $$\Gamma(n,k,q,b)\geq \Gamma(k,k,q,b)=k-b+1.$$ Thus, we complete the proof.
	\end{proof}
	The subsequent result demonstrates an interesting relationship between $\Gamma(n,k,q,b)$ and $L(k,q)$:
	\begin{theorem}Let $\C$ be a cyclic code with parameters $[n,k]$ over $\F_q$.
		If $\gcd(n,q-1)=1$ and there exists a codeword $\ccc$ in $\C$ such that $per(\ccc)=n$, then $$\Gamma(n,k,q,b)\leq L(k,q)+1-b\cdot\left\lceil\frac{L(k,q)}{k}\right\rceil.$$
	\end{theorem}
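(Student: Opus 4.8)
The plan is to bound $|W_b(\C)|$ by the minimum of two upper bounds — one produced by the group of cyclic‑shift‑and‑scalar automorphisms, one by the bounded $\0$'s run length of cyclic codes — and then to check that this minimum never exceeds $L(k,q)+1-b\lceil L(k,q)/k\rceil$.

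The crucial input is that the hypotheses force $n\le L(k,q)$, with the sharper statement $|W_b(\C)|\le L(k,q)-n+1$. Set $\mathsf{G}=\langle\tau,\{\sigma_a:a\in\F_q^*\}\rangle\le{\rm Aut}(\C)$. Since $\tau$ and the $\sigma_a$ commute and $\langle\tau\rangle\cap\langle\sigma_a:a\rangle=\{{\rm id}\}$ we have $|\mathsf{G}|=n(q-1)$, and $\langle\sigma_a:a\rangle$ acts freely on $\C^*$, so every $\mathsf{G}$-orbit in $\C^*$ has size divisible by $q-1$. For the codeword $\ccc$ with $per(\ccc)=n$: if $\sigma_a\tau^{i}(\ccc)=\ccc$, then raising this element of $\mathsf G$ to the power $n/\gcd(i,n)$ gives $a^{n/\gcd(i,n)}\ccc=\ccc$, so the order of $a$ divides both $n/\gcd(i,n)\mid n$ and $q-1$; as $\gcd(n,q-1)=1$ this forces $a=1$, whence $\tau^{i}(\ccc)=\ccc$ and then $n\mid i$ because $per(\ccc)=n$. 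Thus $\ccc$ has trivial $\mathsf{G}$-stabilizer and its orbit has size $n(q-1)$; since codewords in a common $\mathsf{G}$-orbit share a $\0$'s run distribution, hence a $b$-symbol weight, $|W_b(\C)|$ is at most the number of $\mathsf{G}$-orbits on $\C^*$, namely $1+\dfrac{(q^{k}-1)-n(q-1)}{q-1}=L(k,q)-n+1$, and in particular $n\le L(k,q)$.

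The second bound uses only that $\C$ is cyclic: its maximal $\0$'s run length is $k-1$, so by Lemma~\ref{range} a codeword which is not of full $b$-symbol weight must carry a $\0$-run of length $\ge b$, and the number of all-zero $b$-windows it can carry is at most $(k-b)\lfloor n/k\rfloor+\max(n\bmod k-b,0)$; hence $|W_b(\C)|\le(k-b)\lfloor n/k\rfloor+\max(n\bmod k-b,0)+1$. Both displayed bounds in fact apply to every cyclic $[n,k]_q$ code: one without a full-period codeword is a replication of a cyclic $[e,k]_q$ code with $e\mid n$, $e<n$, which still satisfies $\gcd(e,q-1)=1$, so that case reduces to a strictly shorter length with the \emph{same} target $L(k,q)+1-b\lceil L(k,q)/k\rceil$. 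Taking the minimum of the two bounds and feeding in $k\le n\le L(k,q)$, the orbit bound $L(k,q)-n+1$ already lies below the target as soon as $n\ge b\lceil L(k,q)/k\rceil$, while for smaller $n$ the run-length bound does (after clearing denominators the required inequality reduces to $(k-b)^{2}\ge0$); this shows $|W_b(\C)|\le L(k,q)+1-b\lceil L(k,q)/k\rceil$, i.e. $\Gamma(n,k,q,b)\le L(k,q)+1-b\lceil L(k,q)/k\rceil$.

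I expect the final comparison to be the main obstacle: neither bound alone covers the whole range $n\in[k,L(k,q)]$, so one must locate the crossover $n\approx b\lceil L(k,q)/k\rceil$ and verify that both bounds are below the target there. For the few small $(q,k,b)$ not settled by the asymptotic estimate one checks directly, using that under the hypotheses the only cyclic $[n,k]_q$ codes with $n$ close to $L(k,q)$ are (replicated) simplex-type codes, whose $b$-symbol weight spectrum is a single value — which is also exactly what the orbit computation predicts, since there $\C^*$ is a single $\mathsf{G}$-orbit.
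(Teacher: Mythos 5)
Your opening step is sound and coincides with the paper's: the full-period codeword together with $\gcd(n,q-1)=1$ forces its orbit under $\left\langle \tau,\{\sigma_a|a\in\F_q^*\}\right\rangle$ to have size $n(q-1)$, hence $n\leq L(k,q)=\frac{q^k-1}{q-1}$ (the paper obtains the trivial-stabilizer statement by citing Lemma~1 of \cite{SLNS} rather than by your direct computation, but it is the same fact). Where you diverge is the finish: the paper reduces to the single length $n=L(k,q)$ via the monotonicity of $\Gamma(\cdot,k,q,b)$ in the length and then applies the run-length bound there, whereas you try to cover every $n\leq L(k,q)$ by taking the minimum of the orbit bound $L(k,q)+1-n$ and a refined run-length bound.

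That final comparison is where your argument breaks, and the claim that the required inequality ``reduces to $(k-b)^2\geq 0$'' is false. Write $L=L(k,q)$ and $m=\lceil L/k\rceil$. Your orbit bound beats the target $L+1-bm$ only when $n\geq bm$, and for $n$ just below $bm$ the run-length bound can strictly exceed the target when $b$ is close to $k$ and $k\nmid L$. Concretely, take $q=2$, $k=5$, $b=4$: then $L=31$, $m=7$, the target is $4$; at $n=27$ the orbit bound is $32-27=5$ and your run bound is $(k-b)\lfloor 27/5\rfloor+\max(27\bmod 5-4,0)+1=6$, so the minimum is $5>4$. Nothing in your numerical case analysis excludes such $n$ (no binary cyclic $[27,5]$ code happens to exist, but you never use that). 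Two further soft spots: the stated target carries $\lceil L/k\rceil$ while Theorem~\ref{bound} only yields $L+1-b\lfloor L/k\rfloor$ at $n=L$, so the sharper gap-counting form of the run bound is genuinely needed even on the paper's route; and your descent for same-length codes lacking a full-period codeword (replication of a shorter cyclic code) is asserted rather than proved --- the $b$-symbol weights of a replicated code are not simply those of the base code --- whereas the paper's monotonicity-in-$n$ reduction sidesteps this entirely.
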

	\begin{proof}
		Since $per(\ccc)=n$, then $|\{\tau^i(\ccc)|0\leq i\leq n-1\}|=n$. Assume that there exists an element $\beta\in\F_q^*\setminus\{1\}$ and an integer $i$ with $1\leq i\leq n-1$ such that $\beta\cdot\ccc=\tau^{i}(\ccc)$. According to \cite[Lemma 1]{SLNS}, then $\beta$ belongs to the unique cyclic subgroup of $\F_q^*$ of order $\gcd(per(\ccc),q-1)=1$, contradicting our assumptions that $\beta\in \F_q^*\setminus\{1\}$ and $1\leq i\leq n-1$. Then we have
		$$\left|\{\alpha\cdot\tau^i(\ccc)|0\leq i\leq n-1{\rm ~and~}\alpha\in \F_q^*\}\right|=n(q-1).$$
		Since the number of nonzero codewords in $\C$ is $q^k-1$, $n\leq \frac{q^k-1}{q-1}.$ Using the third statement of Proposition \ref{prop13} and the fact that $L(k,q)=\frac{q^k-1}{q-1}$, we have
		$$\Gamma(n,k,q,b)\leq \Gamma\left(\frac{q^k-1}{q-1},k,q,b\right)\leq L(k,q)+1-b\cdot\left\lceil\frac{L(k,q)}{k}\right\rceil.$$
		This completes the proof.
	\end{proof}
	
	To accurately determine the bounds of the $b$-symbol weight spectrum of cyclic codes, it is crucial to have relevant information about cyclic codes in advance. We can derive the bounds of the $b$-symbol weight spectrum of cyclic codes from the following three perspectives:
	\begin{itemize}
\setlength{\itemsep}{5pt}
\setlength{\parsep}{0pt}
\setlength{\parskip}{0pt}
		\item Obtain the bounds of the $b$-symbol weight spectrum of cyclic codes from its period distribution;
		\item Obtain the bounds of the $b$-symbol weight spectrum of cyclic codes from its primitive idempotents;
		\item Obtain the bounds of the $b$-symbol weight spectrum from the $b$-symbol weight calculation formula of $\ccc$, where $\ccc$ is a codeword of cyclic codes.
	\end{itemize}
	We refer to these approaches as the period distribution approach, the primitive idempotent approach, and the $b$-symbol weight formula approach.
	
	\subsection{The period distribution approach}

For a polynomial $f(x)$ over $\F_q$, the  period (or  order) of $f(x)$ is the least positive integer $t$ such that $f(x)|(x^t-1)$, denoted by $per(f)=t.$
	
	For a vector $\mathbf{c}$ in $\F_q^n$, it can be extended by repetition into a bi-infinite periodic sequence $\overline{{\mathbf{c}}}=(c_1,c_2,\ldots)$. If there exists the smallest positive integer $t$ such that $c_{i+t}=c_i$ for all integers $i$, then $t$ is called the period of $\mathbf{c}$, denoted by $per(\mathbf{c})=t$.
Let $\{B_1,B_2,\ldots,B_n\}$ be the period distribution of $\C$, where $B_i$ is the number of nonzero codewords of period $i$ with $1\leq i<n$. It's worth noting that the period mentioned in this context refers to the least period.
	The following lemma, derived from \cite{SLNS}, is quite useful.
	\begin{lemma}\cite{SLNS}\label{SLNSlemma2}
		If $\C$ is an $[n,k]_q$ cyclic code, then
		\begin{equation}\label{1Inequality}
			|W_1(\C)|\leq \sum_{t|n}\frac{B_t}{lcm(t,q-1)}\leq1+\sum_{1<t|n}\frac{B_t}{lcm(t,q-1)}.
		\end{equation}
	\end{lemma}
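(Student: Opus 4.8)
The plan is to bound $|W_1(\C)|$ by the number of orbits of a suitable group acting on $\C^*$. I would use $\mathsf{G}=\langle\tau,\{\sigma_a\mid a\in\F_q^*\}\rangle$, which, as noted in Section~\ref{cycliccodes}, is a subgroup of ${\rm Aut}(\C)$. Each element of $\mathsf{G}$ is a cyclic coordinate shift followed by a nonzero scalar multiplication, hence preserves the Hamming weight, so all codewords in one $\mathsf{G}$-orbit of $\C^*$ share a Hamming weight; therefore $|W_1(\C)|$ is at most the number $s$ of $\mathsf{G}$-orbits on $\C^*$. Since $per(\tau(\ccc))=per(\sigma_a(\ccc))=per(\ccc)$, the least period is constant along each orbit, so I would split the orbits according to the common least period $t$ of their elements (necessarily $t\mid n$); it then suffices to show that every such orbit has size at least $lcm(t,q-1)$, because then the number of period-$t$ orbits is at most $B_t/lcm(t,q-1)$ and summing over $t\mid n$ gives the first inequality.

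The heart of the argument is the orbit-size estimate. Fix $\ccc\in\C^*$ with $per(\ccc)=t$, so $t\mid n$ and $\tau^t(\ccc)=\ccc$. Writing a general element of $\mathsf{G}$ as $\sigma_a\tau^i$ with $a\in\F_q^*$ and $0\le i<n$, one checks $|\mathsf{G}|=n(q-1)$, so the orbit of $\ccc$ has size $n(q-1)/|{\rm Stab}_{\mathsf G}(\ccc)|$ by orbit--stabilizer. Since $\sigma_a\tau^i$ fixes $\ccc$ precisely when $\tau^i(\ccc)=a^{-1}\ccc$, I would introduce the subgroup $S=\{\beta\in\F_q^*\mid\tau^i(\ccc)=\beta\ccc\text{ for some }i\in\Z_n\}$ of $\F_q^*$ and prove the key divisibility $|S|\mid\gcd(t,q-1)$: clearly $|S|\mid q-1$, and if $\tau^i(\ccc)=\beta\ccc$ then iterating yields $\tau^{it}(\ccc)=\beta^t\ccc$ while $\tau^{it}(\ccc)=(\tau^t)^i(\ccc)=\ccc$, so $\beta^t=1$ and every element of $S$ also has order dividing $t$ (this is \cite[Lemma~1]{SLNS}). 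Because $per(\ccc)=t$, the set $\{i\in\Z_n\mid\tau^i(\ccc)=\ccc\}$ equals $t\Z_n$, of size $n/t$, so for each $\beta\in S$ there are exactly $n/t$ indices $i$ with $\tau^i(\ccc)=\beta\ccc$; hence $|{\rm Stab}_{\mathsf G}(\ccc)|=|S|\cdot(n/t)$ and the orbit has size $t(q-1)/|S|\ge t(q-1)/\gcd(t,q-1)=lcm(t,q-1)$.

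For the sharper second inequality I would isolate the $t=1$ summand. The period-$1$ codewords in $\C$ are the nonzero constant vectors $(c,\dots,c)$, and if $\C$ contains one of them then it contains all $q-1$, while otherwise it contains none, so $B_1\in\{0,q-1\}$ and $B_1/lcm(1,q-1)=B_1/(q-1)\le1$. Peeling this term off $\sum_{t\mid n}B_t/lcm(t,q-1)$ gives $\sum_{t\mid n}\frac{B_t}{lcm(t,q-1)}\le1+\sum_{1<t\mid n}\frac{B_t}{lcm(t,q-1)}$, finishing the proof.

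I expect the main obstacle to be the orbit-size estimate in the second paragraph, and within it the divisibility $|S|\mid\gcd(t,q-1)$ --- that is, pinning down exactly when a cyclic shift of a codeword can equal a nonzero scalar multiple of it, and converting that into a clean lower bound $lcm(t,q-1)$ on the orbit length. The rest is routine: orbit--stabilizer counting together with the observation that both Hamming weight and least period are invariant under $\mathsf{G}$.
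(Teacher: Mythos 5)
Your proof is correct and complete, and it follows the approach the paper attributes to \cite{SLNS}: the paper itself does not reprove this lemma (it only cites it, and remarks in the proof of Theorem \ref{thm35} that the inequality comes from the fact that $\ccc$, $\alpha\ccc$, and $\tau^i(\ccc)$ share the same weight). Your orbit--stabilizer computation for $\left\langle \tau, \{\sigma_a\,|\,a\in\F_q^*\}\right\rangle$ acting on $\C^*$, with the divisibility $|S|\mid\gcd(t,q-1)$ giving the orbit-size lower bound $t(q-1)/\gcd(t,q-1)=lcm(t,q-1)$, and the observation $B_1\in\{0,q-1\}$ for the second inequality, is exactly that argument carried out in full.
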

	For convenience, let $\Delta=n+1-b\cdot\left\lfloor\frac{n}{k}\right\rfloor$. The upper bound provided by Lemma \ref{SLNSlemma2} relies on the period distribution of cyclic codes. The subsequent result is contingent on the check polynomial $h(x)$ and necessitates the condition $\gcd(n,q)=1$.

	\begin{theorem}\label{thm35}
		If $\C$ is an $[n,k]_q$ cyclic code with check polynomial $h(x)$.
		Assume that $\gcd(n,q)=1$. The factorization of $h(x)$ is
		\begin{equation*}\label{Inequality}
			h(x)=\prod_{i=1}^mh_i(x),	
		\end{equation*}
		where $h_i(x)$ are irreducible factors of $h(x)$ over $\F_q$ with $1\leq i\leq m$. Assume $\deg(h_i(x))=s_i$ and $per(h_i(x))=t_i$.
		Then we have
		\begin{equation}\label{sbt}
			|W_b(\C)|
			\left\{
			\begin{array}{lll}
				=1,& \hbox{if  $b\geq k$;}\\
				\leq\min\left\{\Delta,\sum\limits_{b<t|n}\frac{\sum\limits_{j|t}
					\mu\left(\frac{t}{j}\right)\cdot q^{\sum_{t_i|j}s_i}}{lcm(t,q-1)}\right\}, & \hbox{if  $t_i>b$ for all $i$;} \\
				\leq\min\left\{\Delta,1+\sum\limits_{b<t|n}\frac{\sum\limits_{j|t}
					\mu\left(\frac{t}{j}\right)\cdot q^{\sum_{t_i|j}s_i}}{lcm(t,q-1)}\right\}, & \hbox{otherwise;}
			\end{array}
			\right.
		\end{equation}
		where $\mu(\cdot)$ denotes the M$\ddot{o}$bius function.
	\end{theorem}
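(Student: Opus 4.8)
The plan is to reduce the computation of $|W_b(\C)|$ to a counting of $b$-symbol-weight-preserving orbits on $\C^*$, in the spirit of Lemma~\ref{SLNSlemma2}, and then to count codewords of each least period via a M\"obius inversion driven by the factorization of $h(x)$. First I would dispose of the trivial regime: if $b\geq k=\deg h$, then since $\theta=k-1<b$ for cyclic codes (as established before Theorem~\ref{the9}), every nonzero codeword has $b$-symbol weight exactly $n$ by part~(1) of Lemma~\ref{range}, so $|W_b(\C)|=1$. This is the first line of \eqref{sbt}. For the remaining cases $b<k$, the upper bound $\Delta=n+1-b\lfloor n/k\rfloor$ is already supplied by Theorem~\ref{bound}, so it suffices to establish the second term inside each $\min$.

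Next I would set up the orbit argument. Consider the subgroup $\mathsf{G}=\langle\tau,\{\sigma_a\mid a\in\F_q^*\}\rangle\leq\mathrm{Aut}(\C)$. Both $\tau$ (a cyclic shift) and each $\sigma_a$ (scalar multiplication by $a\in\F_q^*$) preserve the $\mathbf{0}$'s run distribution of any codeword: a cyclic shift permutes the runs on $cir(\mathbf{c})$, and scaling by a nonzero constant fixes the positions of zeros. Hence by Proposition~\ref{Prop28}, $|W_b(\C)|$ is at most the number of $\mathsf{G}$-orbits on $\C^*$. The key point — already implicit in \cite{SLNS} and recorded as Lemma~\ref{SLNSlemma2} for $b=1$ — is that a codeword $\ccc$ of least period $t=per(\ccc)$ has $\mathsf{G}$-orbit of size exactly $\mathrm{lcm}(t,q-1)$: the cyclic-shift part contributes $t$ distinct shifts, the scalar part contributes $q-1$, and by \cite[Lemma~1]{SLNS} the two actions overlap only in the cyclic subgroup of $\F_q^*$ of order $\gcd(t,q-1)$, giving orbit size $t(q-1)/\gcd(t,q-1)=\mathrm{lcm}(t,q-1)$. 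Therefore, grouping the $B_t$ nonzero codewords of least period $t$,
\[
|W_b(\C)|\ \leq\ \sum_{t\mid n}\frac{B_t}{\mathrm{lcm}(t,q-1)}.
\]
Now the refinement over Lemma~\ref{SLNSlemma2}: when $b<k$ but $t\le b$, a codeword of least period $t$ has a run of $\mathbf{0}$'s of length $\ge t-1$ occurring around the period boundary only if it is the zero-ish pattern; more to the point, one shows that codewords of very small period contribute $b$-symbol weight $n$ and hence collapse to a single weight value, so the sum can be restricted to $b<t\mid n$ (with a possible additive $1$ accounting for the all-equal-weight-$n$ class when some $t_i\le b$, i.e.\ when there exist short-period codewords at all). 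This explains the split between the second and third lines of \eqref{sbt}.

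Finally I would compute $B_t$ from the check polynomial. A nonzero codeword lies in the cyclic subcode with check polynomial dividing $h$; the codewords whose period divides $j$ are exactly those annihilated by $x^j-1$, i.e.\ those in the subcode whose check polynomial is $\prod_{t_i\mid j}h_i(x)$, a code of dimension $\sum_{t_i\mid j}s_i$ and hence of size $q^{\sum_{t_i\mid j}s_i}$. By M\"obius inversion over the divisor lattice, the number of codewords of least period exactly $t$ is
\[
B_t=\sum_{j\mid t}\mu\!\left(\tfrac{t}{j}\right)q^{\sum_{t_i\mid j}s_i}.
\]
Substituting this into the orbit bound and restricting the range of $t$ as above yields precisely the right-hand sides of the second and third cases of \eqref{sbt}; intersecting with the bound $\Delta$ from Theorem~\ref{bound} gives the stated $\min$. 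The main obstacle I anticipate is the careful justification of the range restriction to $b<t\mid n$ and the precise bookkeeping of the additive constant $1$: one must verify that every codeword of least period $t\le b$ indeed has $b$-symbol weight $n$ (so that all such codewords, together with nothing else, account for at most one new weight value), and that no double counting occurs between this degenerate class and the genuine-period-$t$ orbits with $t>b$. The period-size/orbit-size identity and the M\"obius count are then routine.
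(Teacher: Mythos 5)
Your proposal is correct and follows essentially the same route as the paper: bound $|W_b(\C)|$ by the number of orbits of $\left\langle \tau, \{\sigma_a\mid a\in\F_q^*\}\right\rangle$ on $\C^*$ (orbit size $\mathrm{lcm}(t,q-1)$ for least period $t$, as in Lemma~\ref{SLNSlemma2}), observe that every codeword of least period $t\leq b$ has $b$-symbol weight $n$ so the sum restricts to $b<t\mid n$ with an extra $+1$ exactly when some $t_i\leq b$, substitute the period-distribution formula $B_t=\sum_{j\mid t}\mu(t/j)q^{\sum_{t_i\mid j}s_i}$, and intersect with $\Delta$ from Theorem~\ref{bound}. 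The only cosmetic difference is that you derive the $B_t$ formula by M\"obius inversion over the subcodes with check polynomial $\prod_{t_i\mid j}h_i(x)$, whereas the paper simply cites \cite{FFS} for it.
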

	\begin{proof}
		If $\ccc$ is a nonzero codeword of $\C$, then $\ccc$, $\alpha\ccc$, and $\tau^i(\ccc)$ have the same $\mathbf{0}$'s run distribution for any $\alpha\in\F_q^*$ and positive integer $i$. In \cite{SLNS} the authors proved the inequality (\ref{1Inequality}) by the same idea. Then we obtain directly
		\begin{equation}\label{bb}
			|W_b(\C)|\leq \sum_{t|n}\frac{B_t}{lcm(t,q-1)}.
		\end{equation}
		Let $\{\Psi(\0_i)|1\leq i\leq n\}$ be the $\0$'s run distribution of a codeword $\ccc$. If the $per(\ccc)\leq b$, then $\Psi(\0_i)=0$ for all $i\geq b$. Using Lemma \ref{wbformula11}, we have $w_b(\ccc)=n$.
		If there exists $t_i\leq b$, then Inequality (\ref{bb}) can be further reduced to
		\begin{equation*}\label{bb1}
			|W_b(\C)|\leq 1+\sum_{b<t|n}\frac{B_t}{lcm(t,q-1)}.
		\end{equation*}
		If all $t_i>b$, we have
		\begin{equation*}\label{bb1}
			|W_b(\C)|\leq \sum_{b<t|n}\frac{B_t}{lcm(t,q-1)}.
		\end{equation*}
		If $n$ is coprime to $q$, the formula of $B_t$ was given in \cite{FFS} says
		$$B_t=\sum_{j|t}\mu\left(\frac{t}{j}\right)\cdot q^{\sum_{t_i|j}s_i}~\hbox{with $1<t$.}$$
		Combining Theorem \ref{bound} and the value of $B_t$, we get the desired result.
	\end{proof}
	
	If $\C$ is an irreducible cyclic code, then a nice upper bound of $W_b(\C)$ is given in the following.
	\begin{corollary}\label{cor36}
		Assume that $\gcd(n,q)=1$. Let $\C$ be an irreducible cyclic code with parameters $[n,k]_q$ and $h(x)$ be the parity-check polynomial of $\C$. Then
		\begin{equation*}
			|W_b(\C)|
			\left\{
			\begin{array}{ll}
				=1, & \hbox{if $per(h(x))\leq b$;} \\
				\leq
				\min\left\{\Delta,\frac{q^k-1}{lcm(per(h(x)),q-1)}\right\}, & \hbox{if $per(h(x))> b$.}
			\end{array}
			\right.
		\end{equation*}
	\end{corollary}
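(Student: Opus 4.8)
The plan is to derive Corollary \ref{cor36} as the special case of Theorem \ref{thm35} in which the check polynomial $h(x)$ is irreducible over $\F_q$. When $\C$ is an irreducible cyclic code, by definition its check polynomial $h(x)$ has a single irreducible factor, so in the notation of Theorem \ref{thm35} we have $m=1$, $h_1(x)=h(x)$, $s_1=\deg(h(x))=k$, and $t_1=per(h(x))$. First I would dispose of the case $per(h(x))\le b$: here the unique value $t_1$ of the period satisfies $t_1\le b$, so every nonzero codeword has period dividing $t_1\le b$ (since the period of any nonzero codeword of an irreducible cyclic code equals $per(h(x))$), hence by Lemma \ref{wbformula11} its $\0$'s run distribution has $\Psi(\0_i)=0$ for $i\ge b$ and thus $w_b(\ccc)=n$. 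Therefore $|W_b(\C)|=1$, which is exactly the first line. Equivalently this is the $b\ge k$ branch or the ``otherwise'' branch of \eqref{sbt} collapsing, but the direct argument is cleanest.

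Next I would handle the case $per(h(x))>b$, i.e. $t_1>b$, which puts us in the middle branch of \eqref{sbt} with only the single index $i=1$ to sum over. Theorem \ref{bound} already gives $|W_b(\C)|\le\Delta$, so it remains to verify the second term inside the minimum. I would specialize the sum $\sum_{b<t\mid n}\tfrac{1}{lcm(t,q-1)}\sum_{j\mid t}\mu(t/j)q^{\sum_{t_i\mid j}s_i}$. Because the only irreducible factor has period $t_1=per(h(x))$, the inner exponent $\sum_{t_i\mid j}s_i$ equals $k$ if $t_1\mid j$ and $0$ otherwise; in particular the outer sum receives a contribution only from those $t$ with $t_1\mid t$, and for such $t$ one computes $\sum_{j\mid t}\mu(t/j)q^{[t_1\mid j]\,k}=q^k B_{t}'$-type Möbius sum that, because the exponent is a $0/1$-valued step function, telescopes: $\sum_{j\mid t,\ t_1\mid j}\mu(t/j)\,q^{k}+\sum_{j\mid t,\ t_1\nmid j}\mu(t/j)=q^k\!\!\sum_{j\mid t,\ t_1\mid j}\!\!\mu(t/j)+\bigl(\textstyle\sum_{j\mid t}\mu(t/j)-\sum_{j\mid t,\ t_1\mid j}\mu(t/j)\bigr)$. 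The plain sum $\sum_{j\mid t}\mu(t/j)$ vanishes for $t>1$, and $\sum_{j\mid t,\ t_1\mid j}\mu(t/j)=\sum_{j'\mid t/t_1}\mu((t/t_1)/j')$ which vanishes unless $t=t_1$. Hence the whole double sum collapses to the single term $t=t_1$, giving $\tfrac{q^k-1}{lcm(t_1,q-1)}=\tfrac{q^k-1}{lcm(per(h(x)),q-1)}$. Combining with $\Delta$ yields the stated minimum.

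The main obstacle — really the only nontrivial point — is this collapse of the Möbius double sum to a single term; it is a routine but slightly fiddly identity, and the cleanest route may in fact be to bypass it entirely and argue directly that in an irreducible cyclic code with $per(h(x))>b$, every nonzero codeword has period exactly $t_1=per(h(x))$, so the period distribution is $B_{t_1}=q^k-1$ and $B_t=0$ for all other $t$; then \eqref{bb} from the proof of Theorem \ref{thm35} reads $|W_b(\C)|\le\sum_{t\mid n}B_t/lcm(t,q-1)=(q^k-1)/lcm(t_1,q-1)$, and intersecting with Theorem \ref{bound}'s bound $\Delta$ finishes it. I would present the proof in this second, more transparent form, noting parenthetically that it also falls out of Theorem \ref{thm35} by the Möbius computation above.
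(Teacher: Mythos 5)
Your proposal is correct and, in the form you say you would actually present it, coincides with the paper's own proof: the paper simply observes that the period distribution of an irreducible cyclic code is $B_{per(h(x))}=q^k-1$ and $B_i=0$ otherwise, and then invokes the bounds of Theorem \ref{thm35}. The Möbius-sum collapse you sketch is a valid alternative derivation but is not needed, exactly as you conclude.
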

	\begin{proof}
		The period distribution of $\C$ is
		\begin{equation*}
			\{ B_{per(h(x))}=q^k-1 \hbox{ and $B_i=0$ with $i\neq per(h(x))$}\}.
		\end{equation*}
		Therefore, we get the desired result.
	\end{proof}
	
	The period distributions of several well-known cyclic codes are either determined or easily calculated. For simplicity, we provide the bounds of the $b$-symbol weight spectrum size of some special types of cyclic codes in the Appendix. These results rely on a wealth of known information on the period distributions of cyclic codes.
	\subsection{The primitive idempotent approach}
	
	\hspace{0.5cm}An element $e$ of $\frac{\F_q[x]}{(x^n-1)}$ satisfying $e^2=e$ is called an idempotent. When $\gcd(n,q)=1$, the ring $\frac{\F_q[x]}{(x^n-1)}$ is semi-simple. Then each cyclic code in $\frac{\F_q[x]}{(x^n-1)}$ contains a unique idempotent that generates the ideal. This idempotent is called the generating idempotent of the cyclic code. The irreducible cyclic codes (or minimal cyclic codes) of length $n$ over $\F_q$ are viewed as minimal ideals of the semi-simple algebra $\frac{\F_q[x]}{(x^n-1)}$. Every cyclic code of length $n$ over $\F_q$ is a direct sum of some irreducible cyclic codes. The polynomial $x^n-1$ can be factored into a unique product of monic irreducible polynomials in $\frac{\F_q[x]}{(x^n-1)}$, i.e.,
\begin{equation}
x^n-1=\prod_{i=1}^{s}f_i(x),
\end{equation}
where $f_i(x)$ is irreducible over $\F_q$ for $1\leq i\leq s$. Set $\hat{f_i}(x)=\frac{x^n-1}{f_i(x)}$. The generating idempotent of $\langle \hat{f_i}(x)\rangle$ is denoted by $\hat{e_i}(x)$. The idempotents $\hat{e_i}(x)$ for $1\leq i\leq s$ are called the primitive idempotents of $\frac{\F_q[x]}{(x^n-1)}$. Every irreducible cyclic code of length $n$ over $\F_q$ is generated by a primitive idempotent of $\frac{\F_q[x]}{(x^n-1)}$. The irreducible cyclic codes can also be described using the trace function.
	
	Let $f(x)$ be an irreducible factor of $x^n-1$ over $\F_q$. Suppose $f(x)$ has degree $m$, and let $\gamma$ be a root of $f(x)$. Then the set
	\begin{equation}\label{equation4}
		\C(Q,{\gamma})=\left\{\left.\sum_{i=0}^{n-1}{\rm Tr}_{Q/q}(\xi\gamma^i)x^i\right|\xi\in\F_{Q}\right\}
	\end{equation}
	represents the $[n,m]_q$ irreducible cyclic code with non-zeroes $\{\gamma^{-q^i}|0\leq i<m\}$. Moreover, if we set $n=\frac{Q-1}{N}$, where $N$ is an integer dividing $Q-1$, and recall that $\alpha$ is a primitive element of $\F_Q$, letting $\theta=\alpha^N$, then the set
	\begin{equation}\label{trace2}
		\C(Q,\theta)=\{\ccc(\beta)=({\rm Tr}_{Q/q}(\beta),{\rm Tr}_{Q/q}(\beta\theta),\ldots,{\rm Tr}_{Q/q}(\beta)\theta^{n-1})|\beta\in\F_Q\}
	\end{equation}
	also represents an irreducible cyclic code with parameters $[n,m_0]_q$, where $m_0$ is the multiplicative order of $q$ modulo $n$, and $m_0$ divides $m$.

	
	\begin{definition}
		The $q$-cyclotomic coset of $s$ modulo $q^k-1$ is defined to be the set
		$$C_s=\{s,sq,\ldots,sq^{r-1}\}({\rm mod~} q^k-1),$$
		where $r$ is the smallest positive integer such that $sq^r\equiv s({\rm mod~} q^k-1)$.  Let $C_i^{(k,Q)}=\alpha^i\langle\alpha^k\rangle$ for $i\in\Z_k$, where $\langle\alpha^k\rangle$ denotes the subgroup of $\F_Q^*$ generated by $\alpha^k$. The cosets $C_i^{(k,Q)}$ are called the cyclotomic classes of order $k$ in $\F_Q$.
	\end{definition}
	The set $C_s$ partitions the ring $\Z_{q^k-1}$ of integers into disjoint sets. There is a one-to-one correspondence between the primitive idempotents of $\frac{\F_q[x]}{(x^n-1)}$ and the $q$-cyclotomic cosets modulo $n$. Assume that $\varsigma$ is a primitive $n$-th root of unity in $\F_Q$. Then the ring $\frac{\F_Q[x]}{(x^n-1)}$ has $n$ primitive idempotents given by
	$$e_i=\frac{1}{n}=\sum_{j=0}^{n-1}\varsigma^{-ij}x^j {\rm ~~for~} 0\leq i\leq n-1.$$ Moreover, the ring $\frac{\F_q[x]}{(x^n-1)}$ has $s+1$ primitive idempotents given by
	$$\varepsilon_t=\sum_{j\in C_t}e_j {\rm ~~for~} 0\leq t\leq s.$$
	The ring $\frac{\F_q[x]}{(x^n-1)}$ can be factored into a direct sum of the minimal ideals $\frac{\F_q[x]}{(x^n-1)}\varepsilon_t$ for $0\leq t\leq s$ as follows:
	$$\frac{\F_q[x]}{(x^n-1)}=\bigoplus_{i=0}^{s}\frac{\F_q[x]}{(x^n-1)}\varepsilon_i.$$ The minimal ideal $\frac{\F_q[x]}{(x^n-1)}\varepsilon_i$ has the check polynomial $\prod_{j\in C_i}(x-\varsigma^j)$ since it can be viewed as an irreducible cyclic code. For more details on primitive idempotents and cyclic codes, please refer to \cite{ChenLiu} and Chapter 3 and Chapter 4 of \cite{Huffman}.

	Considering that a cyclic code without multiple roots can always be decomposed into the direct sum of some irreducible cyclic codes, we initially focus on the case of irreducible cyclic codes. 
	The subsequent lemma is quite useful as it aids us in obtaining bounds on the $b$-symbol weight spectrum size of irreducible cyclic codes.
	
	\begin{lemma}
		For any cyclic code $\C$ over $\F_q$, the group $\left\langle \tau, \{\sigma_a|a\in\F_q^*\}\right\rangle$ acting on $\C$ never changes its $\0$'s run distribution.
	\end{lemma}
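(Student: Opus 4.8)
The plan is to reduce the statement to the two generating families and then invoke closure under composition. Every element of $\left\langle \tau, \{\sigma_a\mid a\in\F_q^*\}\right\rangle$ is a finite product of maps of the form $\tau$ and $\sigma_a$, and ``preserving the $\mathbf{0}$'s run distribution of every codeword'' is clearly a property closed under composition of $\F_q$-linear maps on $\frac{\F_q[x]}{(x^n-1)}$ that map $\C$ into $\C$. Hence it suffices to show that each $\tau$ and each $\sigma_a$ leaves the tuple $\{\Psi(\ccc,\mathbf{0}_1),\ldots,\Psi(\ccc,\mathbf{0}_n)\}$ unchanged for every $\ccc\in\C$.

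For $\sigma_a$ with $a\in\F_q^*$, I would argue as follows: $\sigma_a$ sends $\ccc=(c_0,\ldots,c_{n-1})$ to $(ac_0,\ldots,ac_{n-1})$, and since $a$ is invertible, $ac_j=0$ if and only if $c_j=0$. Thus $\ccc$ and $\sigma_a(\ccc)$ have exactly the same set of zero coordinates; moreover, any occurrence of the pattern $\mathbf{0}_i=(\alpha,\underbrace{0,\ldots,0}_i,\beta)$ in $cir(\ccc)$ is carried to an occurrence of $(a\alpha,\underbrace{0,\ldots,0}_i,a\beta)$ in $cir(\sigma_a(\ccc))$, with $a\alpha,a\beta\in\F_q^*$. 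This gives a bijection between occurrences, so $\Psi(\sigma_a(\ccc),\mathbf{0}_i)=\Psi(\ccc,\mathbf{0}_i)$ for all $i$. For $\tau$, I would use that $cir(\tau(\ccc))$ is obtained from $cir(\ccc)$ by a rotation of the $n$ marked positions on the circle; a rotation is a bijection of the circumferential vector that carries runs of $0$'s to runs of $0$'s of the same length and preserves the non-vanishing of the flanking entries (in particular a run that wraps around from position $n-1$ to position $0$ is sent to another such run of equal length). Hence the number of occurrences of each $\mathbf{0}_i$ is invariant, i.e.\ $\Psi(\tau(\ccc),\mathbf{0}_i)=\Psi(\ccc,\mathbf{0}_i)$ for all $i$; the all-zero codeword is trivial, all counts being $0$. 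Combining the two cases with the composition remark yields the lemma.

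The argument is essentially a definitional bookkeeping check, so I do not expect a genuine obstacle; the only point that deserves a sentence of care is the wrap-around behaviour of runs under the shift $\tau$, which is precisely the reason one phrases the $\mathbf{0}$'s run distribution through the circumferential vector $cir(\cdot)$ rather than through the linear coordinate sequence. One could alternatively package both cases uniformly by noting that the support (zero set) of a codeword, viewed cyclically, determines the $\mathbf{0}$'s run distribution, and that $\tau$ merely rotates this cyclic support while $\sigma_a$ fixes it pointwise.
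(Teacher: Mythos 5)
Your proposal is correct and follows essentially the same route as the paper, whose own proof is little more than the observation that each element of the group fixes the $\mathbf{0}$'s run distribution of every codeword; you simply carry out the generator-by-generator verification (rotation for $\tau$, preservation of the zero set for $\sigma_a$) that the paper leaves implicit.
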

	\begin{proof}
		For any $\mathbf{c}\in \C$ and $f\in \left\langle \tau, \{\sigma_a|a\in\F_q^*\}\right\rangle$, we have $f(\mathbf{c})\in\C$, and the two codewords $\mathbf{c}$ and $f(\mathbf{c})$ have the same run distribution of $\mathbf{0}$. Therefore, the group $\left\langle \tau, \{\sigma_a|a\in\F_q^*\}\right\rangle$ acting on $\C$ preserves its $\mathbf{0}$'s run distribution.
	\end{proof}
	\begin{theorem}\label{thm40}
		Let $\C$ be an $[n,k]$ irreducible cyclic code over $\F_q$. Suppose that $\C$ is generated by $\varepsilon_t$, where the primitive idempotent $\varepsilon_t$ corresponds to the $q$-cyclotomic coset $\{i_t,i_tq,\ldots,i_tq^{k-1}\}$. Then
		\begin{equation*}
			|W_b(\C)|
			\left\{
			\begin{array}{ll}
				=1, & \hbox{if $\frac{n}{\gcd(n,i_t)}\leq b$;} \\
				\leq
				\min\left\{\Delta,\frac{\gcd(n,(q-1)i_t)(q^k-1)}{n(q-1)}\right\}, & \hbox{if $\frac{n}{\gcd(n,i_t)}> b$.}
			\end{array}
			\right.
		\end{equation*}
	\end{theorem}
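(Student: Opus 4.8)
The plan is to reduce Theorem \ref{thm40} to Corollary \ref{cor36} by computing the period of the check polynomial of $\C$ in terms of $i_t$. Since $\C$ is the irreducible cyclic code generated by the primitive idempotent $\varepsilon_t$, its check polynomial is $h(x)=\prod_{j\in\{i_t,i_tq,\ldots,i_tq^{k-1}\}}(x-\varsigma^j)$, where $\varsigma$ is a primitive $n$-th root of unity in $\F_Q$. Because $\gcd(n,q)=1$, the integer $q$ is a unit modulo $n$, so every root $\varsigma^{i_tq^{\ell}}=(\varsigma^{i_t})^{q^{\ell}}$ has the same multiplicative order as $\varsigma^{i_t}$, namely $n/\gcd(n,i_t)$; moreover $\gcd(n,q)=1$ makes $x^n-1$ separable, so $h(x)$ factors into distinct linear terms over its splitting field, and therefore $h(x)\mid x^m-1$ exactly when $m$ is a common multiple of the orders of all its roots. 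Hence $per(h(x))=n/\gcd(n,i_t)$.

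Feeding this value into Corollary \ref{cor36} disposes of the first branch at once: if $n/\gcd(n,i_t)=per(h(x))\le b$, then $|W_b(\C)|=1$. In the remaining case $per(h(x))>b$, Corollary \ref{cor36} yields
\[
|W_b(\C)|\le\min\left\{\Delta,\ \frac{q^k-1}{lcm\!\left(n/\gcd(n,i_t),\,q-1\right)}\right\},
\]
so it only remains to rewrite the denominator of the second term. Setting $d=\gcd(n,i_t)$ and writing $n=dn'$, $i_t=di'$ with $\gcd(n',i')=1$, we have $\gcd(n',(q-1)i')=\gcd(n',q-1)$, hence
\[
lcm\!\left(\frac{n}{d},\,q-1\right)=\frac{n'(q-1)}{\gcd(n',q-1)}=\frac{n(q-1)}{d\,\gcd(n',(q-1)i')}=\frac{n(q-1)}{\gcd(n,(q-1)i_t)},
\]
where the last equality uses $d\,\gcd(n',(q-1)i')=\gcd(dn',\,d(q-1)i')=\gcd(n,(q-1)i_t)$. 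Substituting this back gives $|W_b(\C)|\le\min\{\Delta,\ \gcd(n,(q-1)i_t)(q^k-1)/(n(q-1))\}$, which is the second branch.

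Once Corollary \ref{cor36} is in hand the whole argument is bookkeeping, so I do not anticipate a genuine obstacle; the one step that deserves a line of justification is the elementary identity $\gcd(n,(q-1)i_t)=\gcd(n,i_t)\cdot\gcd\!\left(n/\gcd(n,i_t),\,q-1\right)$, which is exactly the coprimality reduction carried out above (after extracting $d=\gcd(n,i_t)$, the surviving factor $i'$ is coprime to $n'$ and drops out of the gcd with $n'$). Conceptually, all of the substance lives in Corollary \ref{cor36}, and Theorem \ref{thm40} merely re-expresses the period invariant $per(h(x))$ of an irreducible cyclic code through the numerical label $i_t$ of its defining $q$-cyclotomic coset, which is convenient because $i_t$ is read off directly from the primitive idempotent $\varepsilon_t$.
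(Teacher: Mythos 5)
Your argument is correct, but it takes a different route from the paper's. The paper proves the second branch of Theorem \ref{thm40} directly in the idempotent language: it computes $per(h(x))=\frac{n}{\gcd(n,i_t)}$ from the roots $\varsigma^{i_tq^j}$ exactly as you do, but then invokes Proposition \ref{Prop28} to bound $|W_b(\C)|$ by the number of orbits of $\left\langle \tau, \{\sigma_a|a\in\F_q^*\}\right\rangle$ on $\C^*$, obtaining $\frac{\gcd(n,(q-1)i_t)(q^k-1)}{n(q-1)}$ as an orbit count, and combines this with the $\Delta$ bound. You instead reduce everything to Corollary \ref{cor36} (the period-distribution bound for irreducible cyclic codes) and then verify the arithmetic identity
\[
lcm\!\left(\tfrac{n}{\gcd(n,i_t)},\,q-1\right)=\frac{n(q-1)}{\gcd(n,(q-1)i_t)},
\]
which is exactly the computation the paper defers to its later proposition asserting that the period-distribution and primitive-idempotent bounds coincide. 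Your gcd manipulation is sound (the key point $\gcd(n',(q-1)i')=\gcd(n',q-1)$ for $\gcd(n',i')=1$ checks out prime by prime), and the first branch follows from Corollary \ref{cor36} since $h(x)$ is separable with all roots of order $n/\gcd(n,i_t)$. What your route buys is economy: Theorem \ref{thm40} becomes a corollary of results already proved, and the equivalence of the two approaches is established in passing rather than as a separate claim. What it gives up is the independent derivation via the automorphism-group orbit count, which is the conceptual point the paper wants to showcase as a second "approach"; in particular, your proof does not generalize on its own to the reducible case treated in Theorem \ref{thm42}, where the orbit-counting machinery is doing real work.
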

	\begin{proof}
		Since the primitive idempotent $\varepsilon_t$ corresponds to the $q$-cyclotomic coset $\{i_t,i_tq,\ldots,\\i_tq^{k-1}\},$ the check polynomial $h(x)$ of $\C$ has the form
		$$h(x)=\prod_{j=0}^{k-1}\left(x-\varsigma^{i_tq^j}\right),$$
		where $\varsigma$ is a primitive $n$-th root of unity.
		Then the period of $h(x)$ equals $\frac{n}{\gcd(n,i_t)}$. If the period of $h(x)\leq b$, then for any nonzero codeword $\mathbf{c}\in\C$, there must be no $\mathbf{0}$ in $\mathbf{c}$ with a length of $\geq b$. In this case, the $b$-symbol weight of each nonzero codeword is $n$.

If the period of $h(x)> b$, from Proposition \ref{Prop28}, we know that the number of orbits of the group $\left\langle \tau, \{\sigma_a|a\in\F_q^*\}\right\rangle$ serves as a nice upper bound of $|W_b(\C)|$. By combining this with Proposition \ref{prop13}, we derive the desired result.
	\end{proof}
	In the case of general cyclic codes, we can decompose them into a series of irreducible cyclic codes. However, since Chen and Zhang \cite{ChenZhang} have already obtained the upper bound on $W_H(\C)$ using the decomposition technique, we do not introduce this technique in this paper. Interested readers can refer to \cite{ChenZhang,Huffman} for more information on this technique.

	\begin{theorem}\label{thm42}
		Let $\mathcal{R}_n=\frac{\F_q[x]}{(x^n-1)}$ and let $\C$ be a cyclic code of length $n$ over $\F_q$. Suppose that
		\begin{equation*}
			\C=\bigoplus_{i=1}^v\mathcal{R}_ne_{t_i},
		\end{equation*}
		where $0\leq t_1<t_2<\cdots<t_v\leq s$, and that the primitive idempotent $e_{t_j}$ corresponds to the $q$-cyclotomic coset $\{i_{t_j},i_{t_j}q,\cdots,i_{t_j}q^{k_{t_j}-1}\}$ for each $1\leq j\leq v$. Then we have	the following.
		\begin{itemize}
			\item If  $\frac{n}{\gcd(n,i_{t_l})}>b$ for all $l$, then we have
			\begin{eqnarray*}
				|W_b(\C)|&\leq&
				\min\Bigg\{\Delta,
				\sum_{\substack{\{j_1,j_2,\ldots,j_u\}\subset [v]\\1\leq j_1< j_2< \cdots < j_u \leq v}}
				\frac{\gcd\left(n,i_{t_{j_1}},i_{t_{j_2}},\ldots,i_{t_{j_u}}\right)\cdot\prod\limits_{l=1}^u\left(q^{k_{t_{j_l}}}-1\right)}{n(q-1)}\cdot\\
				~&~&\gcd\left(q-1,\frac{n}{\gcd\left(n,i_{t_{j_1}}\right)},\ldots, \frac{n}{\gcd\left(n,i_{t_{j_u}}\right)}\right)
				\Bigg\}
			\end{eqnarray*}
			\item Otherwise, we have
			\begin{eqnarray*}
				|W_b(\C)|&\leq&
				\min\Bigg\{\Delta,
				1+\sum_{\substack{\{j_1,j_2,\ldots,j_u\}\\ {\rm ~satisfies~ {\bf Condition~ I}}}}
				\frac{\gcd\left(n,i_{t_{j_1}},i_{t_{j_2}},\ldots,i_{t_{j_u}}\right)\cdot\prod\limits_{l=1}^u\left(q^{k_{t_{j_l}}}-1\right)}{n(q-1)}\cdot\\
				~&~&\gcd\left(q-1,\frac{n}{\gcd\left(n,i_{t_{j_1}}\right)},\ldots, \frac{n}{\gcd\left(n,i_{t_{j_u}}\right)}\right)
				\Bigg\}
			\end{eqnarray*}
		\end{itemize}
		where {\bf Condition~I} is that $\{j_1,j_2,\ldots,j_u\}\subset [v], 1\leq j_1< j_2< \cdots < j_u \leq v$ and $\frac{n}{\gcd(n,i_{t_l})}>b$.
	\end{theorem}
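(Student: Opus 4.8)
The plan is to combine the decomposition $\C=\bigoplus_{i=1}^v\mathcal{R}_ne_{t_i}$ with the orbit-counting philosophy already used in Theorem \ref{thm40}, applied now to a group acting on the whole code rather than on a single irreducible component. First I would identify the natural subgroup $\mathsf{G}=\left\langle \tau,\{\sigma_a\mid a\in\F_q^*\}\right\rangle$ of $\mathrm{Aut}(\C)$; by the lemma preceding Theorem \ref{thm40}, $\mathsf{G}$ preserves the $\mathbf{0}$'s run distribution of every codeword, so Proposition \ref{Prop28} applies and $|W_b(\C)|$ is bounded by the number of $\mathsf{G}$-orbits on $\C^*$ (together with $\Delta$ from Theorem \ref{bound}). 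The remaining task is therefore purely enumerative: count the $\mathsf{G}$-orbits on $\C^*$.

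To count the orbits I would stratify $\C^*$ according to the \emph{support} of a codeword in the decomposition, i.e.\ the subset $\{j_1<\cdots<j_u\}\subseteq[v]$ of indices $j$ for which the $\mathcal{R}_ne_{t_j}$-component of $\ccc$ is nonzero. Each such stratum is a $\mathsf{G}$-stable set consisting of $\prod_{l=1}^u(q^{k_{t_{j_l}}}-1)$ codewords (a nonzero element in each selected irreducible component, zero elsewhere). On this stratum the $\tau$-orbit of $\ccc$ has size equal to $\mathrm{per}(\ccc)=\mathrm{lcm}$ of the periods of the selected components $=n/\gcd(n,i_{t_{j_1}},\ldots,i_{t_{j_u}})$, using that the period of the component generated by $e_{t_j}$ is $n/\gcd(n,i_{t_j})$ (as computed in the proof of Theorem \ref{thm40}) and that these are all divisors of $n$. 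Then, exactly as in \cite[Lemma 1]{SLNS} and the proof of Theorem \ref{thm35}, I would show that the stabilizer-type correction coming from the scalars $\sigma_a$ contributes a factor $\gcd\left(q-1,\ n/\gcd(n,i_{t_{j_1}}),\ldots,n/\gcd(n,i_{t_{j_u}})\right)$: a scalar $\beta\in\F_q^*$ can coincide with a cyclic shift on $\ccc$ only if $\beta$ lies in the cyclic subgroup of $\F_q^*$ of order $\gcd(\mathrm{per}(\ccc),q-1)$. Dividing the stratum size by the orbit size $n(q-1)/\big(\text{that }\gcd\big)$ gives the number of orbits in the stratum as
\[
\frac{\gcd(n,i_{t_{j_1}},\ldots,i_{t_{j_u}})\cdot\prod_{l=1}^u(q^{k_{t_{j_l}}}-1)}{n(q-1)}\cdot\gcd\!\left(q-1,\tfrac{n}{\gcd(n,i_{t_{j_1}})},\ldots,\tfrac{n}{\gcd(n,i_{t_{j_u}})}\right),
\]
and summing over all nonempty strata yields the stated bound.

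For the dichotomy in the theorem: a stratum indexed by $\{j_1,\ldots,j_u\}$ in which every selected component has period $>b$ produces codewords whose smallest $\mathbf{0}$-run-free period exceeds $b$, so by Lemma \ref{wbformula11} these codewords can have $b$-symbol weight less than $n$ and must be counted; a stratum all of whose contributing periods are $\le b$ — which, since $\mathrm{per}(\ccc)=n/\gcd(n,i_{t_{j_1}},\ldots,i_{t_{j_u}})$ is a multiple of each individual $n/\gcd(n,i_{t_j})$, happens precisely when $\mathrm{per}(\ccc)\le b$ — forces $w_b(\ccc)=n$ by the argument in the proof of Theorem \ref{thm35}. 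Hence when some $i_{t_l}$ has $n/\gcd(n,i_{t_l})\le b$ all the corresponding weight-$n$ strata collapse to the single value $n$, giving the ``$1+\sum$'' form restricted to \textbf{Condition I}; when all periods exceed $b$ the cleaner ``$\sum$'' form applies. Finally the $\min$ with $\Delta$ is just Theorem \ref{bound}.

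\textbf{Main obstacle.} The delicate point is the scalar–shift interaction: I must verify carefully that the orbit of $\ccc$ under $\left\langle\tau,\{\sigma_a\}\right\rangle$ has size exactly $n(q-1)$ divided by that $\gcd$ factor, i.e.\ that the only coincidences $\beta\ccc=\tau^i(\ccc)$ are governed by $\gcd(\mathrm{per}(\ccc),q-1)$ and that this interacts correctly with the multi-component structure (one needs that a scalar acting on all components simultaneously matching a single shift $\tau^i$ imposes the same congruence on each component, so the relevant modulus is the gcd of the component periods with $q-1$, which equals $\gcd(q-1,n/\gcd(n,i_{t_{j_1}}),\ldots)$). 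This is exactly the step where \cite[Lemma 1]{SLNS} is invoked, and getting the bookkeeping of stabilizers right across the direct sum — rather than any single estimate — is where the real work lies.
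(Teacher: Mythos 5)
Your proposal follows essentially the same route as the paper: bound $|W_b(\C)|$ by the number of orbits of $\left\langle \tau, \{\sigma_a|a\in\F_q^*\}\right\rangle$ on $\C^*$ via Proposition \ref{Prop28} (since this group preserves the $\0$'s run distribution), obtain the orbit count from the direct-sum decomposition into irreducible components as in the proof of Theorem \ref{thm35}, and use Condition I together with the period criterion to collapse the codewords of $b$-symbol weight $n$ into a single ``$+1$'' term, finally taking the minimum with $\Delta$ from Theorem \ref{bound}. The paper's own proof is only a two-sentence sketch deferring to the known orbit count, so your stratification by support and the stabilizer bookkeeping for the scalar--shift interaction actually supply details the paper leaves implicit, but the underlying argument is identical.
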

	\begin{proof}
		Similar to the proof of Theorem \ref{thm35}, combined with the result of Theorem \ref{thm42} regarding the number of the orbits of $\left\langle \tau, \{\sigma_a|a\in\F_q^*\}\right\rangle$ on $\C^*$, we can easily obtain the expected result. It is worth noting that Condition I is designed to eliminate the codewords that do not have $\0$ with length exceeding $b-1$. The $b$-symbol weight of these excluded codewords is $n$, so we need to add another $1$.
	\end{proof}

	\subsection{The $b$-symbol weight formula approach}
	\hspace{0.5cm}In this subsection, we introduce how to calculate the $b$-symbol weight of an irreducible cyclic code. This will significantly contribute to improving the upper bound of $w_b(\C)$, where $\C$ refers to an irreducible cyclic code $\C(Q,\theta)$ in the form of (\ref{trace2}). Before proceeding, we need to introduce two related definitions.
	\begin{definition}
		An additive character of $\F_q$ is a nonzero function $\chi$ from $\F_q$ to the set of complex numbers such that $\chi(x+y)=\chi(x)\chi(y)$ for any $(x,y)\in\F_q^2.$ For each $b\in\F_q$, the function
		$$\chi_b(c)=e^{2\pi\sqrt{-1}{\rm Tr}_{q/p}(bc)/p}, ~~{\hbox{for all $c\in\F_q$}}$$
		defines an additive character of $\F_q.$ $\chi_1$ is called the canonical additive character of $\F_q$. The Gaussian periods are defined by
		$$\eta_i^{(k,Q)}=\sum_{x\in C_{i}^{(k,Q)}}\chi_1(x), ~~~i\in\{0,1,\ldots,k-1\}.$$
	\end{definition}
	
	We adopt the definition of (\ref{trace2}) instead of (\ref{equation4}) to describe irreducible cyclic codes, as Ding and Yang \cite{DY} used the definition of (\ref{trace2}) to provide the Hamming weight calculation formula for the codewords of irreducible cyclic codes, which depends on the values of some Gaussian periods. Zhu et al. \cite{ZHW2} further provided a formula for calculating the $b$-symbol weight of the codewords of irreducible cyclic codes based on \cite{DY}. Before introducing this calculation formula, we need to provide a definition of the set $U(b,i,N_1)$.

	\begin{definition}\label{UB}
		Let $N_1=\gcd\left(\frac{Q-1}{q-1},N\right)$. Define $U(b,i,N_1)$ be the set
		\begin{equation*}\label{UBIN}
			U(b,i,N_1)=\left\{(u_1,\ldots,u_b)\left|
			\sum_{j=1}^{b}u_j\theta^{j-1}\in C_i^{(N_1,Q)}\right. {\hbox{~and~}} (u_1,\ldots,u_b)\in\F_q^b\setminus\{\mathbf{0}\}\right\},
		\end{equation*}
		where $\theta=\alpha^N.$
	\end{definition}
	
	According to the definition of $\C(Q,\theta)$, it is an irreducible cyclic code with parameters $[n, m_0]$. Zhu {\it et al.} \cite{ZHW2} provided the $b$-symbol weight formula of irreducible cyclic codes.
	
	\begin{theorem}\cite{ZHW2}\label{coro12}
		Let $1\leq b\leq m_0-1$, and let $\ccc(\beta)$ be a codeword of the irreducible cyclic code $\C(Q,\theta)$ as in (\ref{trace2}).
		Let $N_1=\gcd\left(\frac{Q-1}{q-1},N\right)$.
		If $0\neq \beta\in C_i^{(N_1,Q)}$, then the $b$-symbol weight of $\ccc(\beta)$ is
		\begin{equation*}
			w_b(\ccc(\beta))=\frac{(q^b-1)(Q-1)}{q^bN}-
			\frac{N_1}{q^bN}\sum_{j=0}^{N_1-1}\#U(b,j,N_1)\eta_{i+j ({\rm mod~}N_1)}^{(N_1,Q)}.
		\end{equation*}
	\end{theorem}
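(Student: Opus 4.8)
The plan is to reduce the $b$-symbol weight of $\ccc(\beta)$ to a count of coordinates where the $b$-window is nonzero, then express that count via Gaussian periods. First I would use the identity $w_b(\ccc(\beta)) = n - |\{i : (\ccc(\beta))_i = (\ccc(\beta))_{i+1} = \cdots = (\ccc(\beta))_{i+b-1} = 0\}|$, which rewrites $w_b$ as $n$ minus the number of starting positions $i$ for which the length-$b$ window is entirely zero. Equivalently, $w_b(\ccc(\beta))$ counts the positions $i \in \Z_n$ for which the vector $(\mathrm{Tr}_{Q/q}(\beta\theta^i), \mathrm{Tr}_{Q/q}(\beta\theta^{i+1}), \ldots, \mathrm{Tr}_{Q/q}(\beta\theta^{i+b-1})) \in \F_q^b$ is nonzero. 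Since $\mathrm{Tr}_{Q/q}(\beta\theta^{i+j-1})$ is $\F_q$-linear in the window, I would detect the zero-window condition through the standard orthogonality relation: $\frac{1}{q^b}\sum_{(u_1,\ldots,u_b)\in\F_q^b} \chi_1'\big(\sum_{j=1}^b u_j \mathrm{Tr}_{Q/q}(\beta\theta^{i+j-1})\big)$ equals $1$ when the window is zero and $0$ otherwise, where $\chi_1'$ is the canonical additive character of $\F_q$.

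Next I would swap the order of summation. Using $\chi_1'(\mathrm{Tr}_{Q/q}(z)) = \chi_1(z)$ (the canonical additive character of $\F_Q$), the inner sum becomes $\chi_1\big(\beta\theta^i \sum_{j=1}^b u_j \theta^{j-1}\big)$. Summing over $i \in \Z_n$ and recalling $n = (Q-1)/N$ and $\theta = \alpha^N$, the set $\{\beta\theta^i : i \in \Z_n\}$ is a coset of $\langle \theta\rangle = \langle\alpha^N\rangle$ in $\F_Q^*$; combined with the factor $\sum_{j=1}^b u_j\theta^{j-1}$, and after grouping the $(u_1,\ldots,u_b)$ according to which cyclotomic class $C_j^{(N_1,Q)}$ the element $\sum u_j\theta^{j-1}$ lands in — which is exactly the data packaged in $\#U(b,j,N_1)$ — the sum over $i$ collapses to a Gaussian period $\eta^{(N_1,Q)}_{\star}$ with index shifted by $i$ (the class of $\beta$) plus $j$. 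The term $(u_1,\ldots,u_b) = \mathbf{0}$ contributes the main term $n$, and after subtracting, the remaining sum is $-\frac{N_1}{q^b N}\sum_{j=0}^{N_1-1}\#U(b,j,N_1)\,\eta^{(N_1,Q)}_{i+j\,(\mathrm{mod}~N_1)}$. Collecting constants (the $N_1/N$ appears because $\langle\theta\rangle$ has index $N$ but cyclotomic classes of order $N_1$ must be used, and $(q^b-1)(Q-1)/(q^bN)$ is the main term $n$ times $(q^b-1)/q^b$) gives the stated formula.

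The main obstacle I anticipate is the bookkeeping around the two moduli $N$ and $N_1 = \gcd((Q-1)/(q-1), N)$. The elements $\sum_{j=1}^b u_j\theta^{j-1}$ with $(u_1,\ldots,u_b)\in\F_q^b$ need not range over a union of full cosets of $\langle\theta\rangle$, so one must pass to the coarser cyclotomic classes of order $N_1$ in which the relevant character sums $\sum_{i\in\Z_n}\chi_1(\beta\theta^i \cdot c)$ are constant as $c$ ranges over a class — verifying that this is the right $N_1$, and that the character sum over $\{\beta\theta^i\}$ is genuinely expressible in terms of $\eta^{(N_1,Q)}$, is the delicate point. This is precisely where Ding and Yang's framework \cite{DY} for the Hamming case ($b=1$) enters; I would cite their computation of $\sum_{i}\chi_1(\beta\theta^i)$ in terms of Gaussian periods and extend it coordinate-window-wise. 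The remaining steps — interchanging finite sums, isolating the $\mathbf{0}$ term, and simplifying constants — are routine once the cyclotomic-class structure is correctly set up, and since this theorem is quoted from \cite{ZHW2}, I would present the argument at the level of a proof sketch and refer to \cite{ZHW2} for the full details.
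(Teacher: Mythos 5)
The paper states this theorem as a quoted result from \cite{ZHW2} and gives no proof of its own, so the only comparison available is with the argument in that reference; your sketch follows essentially the same route (zero-window counting, character orthogonality over $\F_q^b$, reduction to Gaussian periods of order $N_1$ via the Ding--Yang framework), and it is correct. The one step worth making explicit is that the hypothesis $b\leq m_0-1$ guarantees $1,\theta,\ldots,\theta^{b-1}$ are $\F_q$-linearly independent, so every nonzero tuple $(u_1,\ldots,u_b)$ yields $\sum_j u_j\theta^{j-1}\neq 0$ and lands in exactly one class $C_j^{(N_1,Q)}$, and that averaging over the $\F_q^*$-action on tuples is what converts the coset sums over $\langle\theta\rangle$ into Gaussian periods of order $N_1$ with the factor $N_1/N$.
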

	
	At this juncture, this calculation formula relies on the values of two invariants: one is the value of the Gaussian period, and the other is the cardinality of the set $U(b,i,N_1)$. By scrutinizing the calculation formula provided in Theorem 45, we can derive the upper bound on the cardinality of the $b$-symbol weight spectrum of irreducible cyclic codes.
	
	\begin{theorem}
		Let $\C(Q,\theta)$ be an irreducible cyclic code defined as in (\ref{trace2}). Then we have
		\begin{equation*}
			|W_b(\C(Q,\theta))|
			\left\{
			\begin{array}{ll}
				=1, & \hbox{if $b\geq m_0$;} \\
				\leq
				\min\left\{\Delta,\gcd\left(\frac{Q-1}{q-1},N\right)\right\}, & \hbox{if $1\leq b\leq m_0-1$.}
			\end{array}
			\right.
		\end{equation*}
		Moreover, $\C(Q,\theta)$ is an irreducible cyclic code with at most $\gcd\left(\frac{Q-1}{q-1},N\right)$ $b$-symbol weights.
	\end{theorem}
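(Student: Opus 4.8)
The plan is to treat the two ranges of $b$ appearing in the statement separately. Throughout, observe that by the construction preceding (\ref{trace2}) the code $\C(Q,\theta)$ is cyclic of dimension $k=m_0$, so by the discussion in Section~\ref{cycliccodes} its maximum $\0$'s run length equals $m_0-1$; in particular the quantity $\Delta$ is to be read here with $k=m_0$.

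First I would dispose of the case $b\ge m_0$. Since the maximum $\0$'s run length of $\C(Q,\theta)$ equals $m_0-1<b$, no nonzero codeword $\ccc$ contains a run of zeros of length $\ge b$; hence $\Psi(\ccc,\0_i)=0$ for every $i\ge b$, and Lemma~\ref{wbformula11} gives $w_b(\ccc)=n$ for every $\ccc\in\C(Q,\theta)\setminus\{\0\}$. Thus $W_b(\C(Q,\theta))=\{n\}$ and $|W_b(\C(Q,\theta))|=1$. This part is entirely routine given the identity $\theta=k-1$ for cyclic codes.

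For the case $1\le b\le m_0-1$ I would establish the two bounds and intersect them. The bound $|W_b(\C(Q,\theta))|\le\Delta$ is immediate from Theorem~\ref{bound}. For the bound $|W_b(\C(Q,\theta))|\le\gcd\!\left(\frac{Q-1}{q-1},N\right)$, set $N_1=\gcd\!\left(\frac{Q-1}{q-1},N\right)$ and split $\F_Q^{*}$ into the $N_1$ cyclotomic classes $C_0^{(N_1,Q)},\dots,C_{N_1-1}^{(N_1,Q)}$. Every nonzero codeword is $\ccc(\beta)$ for some $\beta\in\F_Q^{*}$ (those $\beta$ that map to $\0$ simply drop out of $W_b$), and by Theorem~\ref{coro12}, for $0\neq\beta\in C_i^{(N_1,Q)}$,
\begin{equation*}
w_b(\ccc(\beta))=\frac{(q^b-1)(Q-1)}{q^bN}-\frac{N_1}{q^bN}\sum_{j=0}^{N_1-1}\#U(b,j,N_1)\,\eta_{i+j\,({\rm mod}~N_1)}^{(N_1,Q)}.
\end{equation*}
Once $b$ and $N_1$ are fixed, the integers $\#U(b,j,N_1)$ (see Definition~\ref{UB}) and the Gaussian periods $\eta_{\ell}^{(N_1,Q)}$ are constants, so the right-hand side depends on $\beta$ only through the index $i$ of its cyclotomic class. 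Therefore at most $N_1$ distinct nonzero $b$-symbol weights can occur, which is exactly the desired bound; the final ``moreover'' assertion is merely a restatement of this.

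I do not expect a genuine obstacle here: once Theorem~\ref{coro12} is available, the argument reduces to the observation that its weight formula is insensitive to $\beta$ beyond the cyclotomic class $C_i^{(N_1,Q)}$ containing it. The only points needing care are (i) confirming that the working range $1\le b\le m_0-1$ is exactly the hypothesis of Theorem~\ref{coro12}, and (ii) checking that the $\beta\in\F_Q^{*}$ which map to the zero codeword (these exist when $m_0<m$) contribute nothing to $W_b$, so that the $N_1$ cyclotomic classes of $\F_Q^{*}$ do account for all nonzero codewords.
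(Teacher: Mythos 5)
Your proposal is correct and follows essentially the same route as the paper: the key step in both is that the weight formula of Theorem \ref{coro12} depends on $\beta$ only through the index of its cyclotomic class $C_i^{(N_1,Q)}$, so at most $N_1=\gcd\left(\frac{Q-1}{q-1},N\right)$ distinct nonzero $b$-symbol weights occur, and this is intersected with the bound $\Delta$. Your explicit treatment of the case $b\geq m_0$ via the maximum $\0$'s run length and your remark about non-injectivity of $\beta\mapsto\ccc(\beta)$ when $m_0<m$ are careful additions the paper leaves implicit, but they do not change the argument.
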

	\begin{proof}Let $\mathbf{c}(\beta_1)$ and $\mathbf{c}(\beta_2)$ be two nonzero codewords in $\C(Q,\theta)$.
From Theorem \ref{coro12}, $w_b(\mathbf{c}(\beta_1))=w_b(\mathbf{c}(\beta_2))$ if $\beta_1$ and $\beta_2$ belong to the same cyclotomic class $C_{i}^{(N_1,Q)}$, where $N_1=\gcd\left(\frac{Q-1}{q-1},N\right)$. Therefore, the cardinality of $W_b(\C,(Q,\theta))$ must not exceed the minimum value between $\Delta$ and $\gcd\left(\frac{Q-1}{q-1},N\right)$.
	\end{proof}
	For reducible cyclic codes, calculating the Hamming weight of their nonzero codewords becomes quite complex. It's evident that computing the $b$-symbol weight of nonzero codewords in reducible cyclic codes will be even more challenging. The difficulty of this task can be understood from \cite{ZHW3}. As is well-known, Kasami codes represent a class of reducible cyclic codes with the simplest structure. Zhu and Shi \cite{ZHW3} studied the $b$-symbol weight distribution of Kasami codes. It's apparent that even for the simplest reducible cyclic codes, extremely complex calculation formulas are required for their nonzero codewords. We encourage interested readers to explore the determination of the $b$-symbol weight distribution of reducible cyclic codes.
	\subsection{Performance analysis of three approaches}
	\hspace{0.5cm}The above three approaches have the following characteristics:
	\begin{itemize}
\setlength{\itemsep}{3pt}
\setlength{\parsep}{0pt}
\setlength{\parskip}{0pt}
		\item  The period distributions approach: (1) There are existing results on the period distribution of cyclic codes that can be leveraged to reduce computational complexity; (2) Suitable for reducible cyclic codes; (3) The results obtained by the period distributions approach and the primitive idempotent approach are equivalent.
		\item The primitive idempotent approach: The proposed upper bound can be applied to reducible cyclic codes.
		\item  The $b$-symbol weight formula approach: It is more accurate than  the period distributions approach and the primitive idempotent approach, but it is difficult to apply to the case of reducible cyclic codes.
	\end{itemize}
	
	\begin{proposition}
		The upper bounds provided by the period distributions approach and the primitive idempotent approach are identical.
	\end{proposition}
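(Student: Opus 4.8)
The plan is to show that the two upper bounds coincide by exhibiting a precise dictionary between the data entering the period-distribution bound (Theorem \ref{thm35}, via the formula $B_t=\sum_{j\mid t}\mu(t/j)q^{\sum_{t_i\mid j}s_i}$) and the data entering the primitive-idempotent bound (Theorems \ref{thm40} and \ref{thm42}, via the $q$-cyclotomic cosets $\{i_{t_j},i_{t_j}q,\ldots\}$ and the quantities $\gcd(n,i_{t_j})$). Both bounds are, at bottom, a count of the number of orbits of the group $\langle\tau,\{\sigma_a\mid a\in\F_q^*\}\rangle$ on $\C^*$: the period-distribution approach sorts the nonzero codewords by their least period $t$ and divides each $B_t$ by $\mathrm{lcm}(t,q-1)$, while the idempotent approach sorts them by which sum of minimal ideals they live in and divides each block by $n(q-1)/\gcd(\cdots)$. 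So the claim is really that these two ways of organizing the same orbit count give the same total.

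First I would recall that for an irreducible cyclic code generated by $\varepsilon_t$ with $q$-cyclotomic coset $\{i_t,i_tq,\ldots,i_tq^{k-1}\}$, the check polynomial is $h(x)=\prod_{j}(x-\varsigma^{i_tq^j})$, whose period is exactly $t^\ast:=n/\gcd(n,i_t)$; hence every nonzero codeword of this component has least period $t^\ast$, and the component contributes $B_{t^\ast}=q^k-1$ to the period distribution. I would then verify the numerical identity
\[
\frac{q^k-1}{\mathrm{lcm}(t^\ast,q-1)}
=\frac{\gcd\bigl(n,(q-1)i_t\bigr)(q^k-1)}{n(q-1)},
\]
which follows from $\mathrm{lcm}(t^\ast,q-1)=\dfrac{t^\ast(q-1)}{\gcd(t^\ast,q-1)}$ together with $t^\ast=n/\gcd(n,i_t)$ and the elementary gcd manipulation $\gcd(n,i_t)\cdot\gcd\bigl(n/\gcd(n,i_t),\,q-1\bigr)=\gcd\bigl(n,(q-1)i_t\bigr)$. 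This settles the irreducible case: Corollary \ref{cor36} and Theorem \ref{thm40} give literally the same number, and the threshold conditions $per(h(x))\le b$ versus $n/\gcd(n,i_t)\le b$ are identical.

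For the general (reducible) case I would run the same comparison term by term. On the idempotent side, Theorem \ref{thm42} sums over subsets $\{j_1,\ldots,j_u\}\subseteq[v]$ a contribution built from $\gcd(n,i_{t_{j_1}},\ldots,i_{t_{j_u}})$, $\prod_l(q^{k_{t_{j_l}}}-1)$, and a further $\gcd$ with $q-1$; this is precisely the orbit count of $\langle\tau,\{\sigma_a\}\rangle$ restricted to codewords whose ``support'' among the minimal ideals is exactly $\{j_1,\ldots,j_u\}$. On the period side, Theorem \ref{thm35} groups codewords by least period $t\mid n$ and the formula $B_t=\sum_{j\mid t}\mu(t/j)q^{\sum_{t_i\mid j}s_i}$ is exactly the Möbius inversion that extracts, from the total count $q^{\sum_{t_i\mid j}s_i}$ of codewords of period dividing $j$, the number of codewords of least period exactly $t$. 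A codeword has least period $t$ if and only if the set of component indices on which it is nonzero is a set $\{j_1,\ldots,j_u\}$ with $\mathrm{lcm}(t^\ast_{j_1},\ldots,t^\ast_{j_u})=t$, where $t^\ast_{j_l}=n/\gcd(n,i_{t_{j_l}})$. So I would establish the bijection between ``subsets $S\subseteq[v]$'' and ``periods $t\mid n$ together with a partition of the period class'', check that $\gcd(n,i_{t_{j_1}},\ldots,i_{t_{j_u}})$ matches $n/\mathrm{lcm}(t^\ast_{j_1},\ldots,t^\ast_{j_u})$, and match the two divisors $\mathrm{lcm}(t,q-1)$ and $n(q-1)/\bigl(\gcd(\cdots)\cdot\gcd(q-1,\ldots)\bigr)$ by the same elementary gcd/lcm identity as in the irreducible case; summing over all $S$ (resp.\ all $t$) then gives equality of the two totals, and the ``$+1$'' correction terms match because the excluded codewords (those with no $\0$-run of length $\ge b$, equivalently all components having period $\le b$) are described by the complementary index sets in both formulations.

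The main obstacle I anticipate is bookkeeping rather than conceptual: carefully proving that the Möbius-inversion expression for $B_t$ reorganizes, after dividing by $\mathrm{lcm}(t,q-1)$ and summing over $t\mid n$, into the subset-indexed sum of Theorem \ref{thm42} — i.e.\ tracking how a codeword of least period $t$ decomposes uniquely according to which irreducible components it meets, and confirming that the $\gcd$ with $q-1$ appearing in the idempotent bound is exactly $\gcd(t,q-1)$ for the corresponding period $t$. Once that combinatorial identity is in hand, the proposition follows by comparing the two expressions termwise; the cleanest writeup is probably to phrase everything in terms of the common orbit count of $\langle\tau,\{\sigma_a\mid a\in\F_q^*\}\rangle$ and observe that both theorems merely regroup that count differently.
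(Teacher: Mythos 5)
Your proposal is correct and follows essentially the same route as the paper: both rest on the observation that each bound is a count of orbits of $\left\langle \tau, \{\sigma_a\mid a\in\F_q^*\}\right\rangle$ on $\C^*$, reduce to the irreducible case via the decomposition into minimal ideals, and verify the identity $\frac{q^k-1}{\mathrm{lcm}(per(h(x)),q-1)}=\frac{\gcd(n,(q-1)i_t)(q^k-1)}{n(q-1)}$ using $per(h(x))=n/\gcd(n,i_t)$. Your plan is in fact somewhat more thorough than the paper's, which simply asserts that the reducible case reduces to the irreducible one, whereas you sketch the M\"obius-inversion bookkeeping needed to match Theorem \ref{thm35} against the subset-indexed sum of Theorem \ref{thm42}.
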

	\begin{proof}
		Both approaches rely on determining the number of orbits of the group $\left\langle \tau, \{\sigma_a|a\in\F_q^*\}\right\rangle$ on $\C^*$. Therefore, the upper bounds obtained by the two approaches are equivalent, although they may be expressed differently due to the use of different mathematical symbols.
		
		Considering that reducible cyclic codes can always be decomposed into a series of irreducible cyclic codes, we only need to verify that the results given by Corollary \ref{cor36} and Theorem \ref{thm40} are the same. This is equivalent to proving that the equation $$\frac{q^k-1}{{\rm lcm}({\rm per}(h(x)),q-1)}=\frac{\gcd(n,(q-1)i_t)(q^k-1)}{n(q-1)}$$ holds. Since ${\rm per}(h(x))=\frac{n}{\gcd(n,i_t)}$, we have
		$$\frac{q^k-1}{{\rm lcm}({\rm per}(h(x)),q-1)}=\frac{q^k-1}{{\rm lcm}\left(\frac{n}{\gcd(n,i_t)},q-1\right)}=
		\frac{q^k-1}{n(q-1)\cdot \frac{1}{\gcd(n,(q-1)i_t)}}.$$
		This completes the proof.
	\end{proof}

	The reasons for the significant difference on the result forms between Theorem \ref{thm35} and Theorem \ref{thm42} are twofold:
	\begin{itemize}
		\item[(i)] The parameters selection for the two approaches are different. The period distributions approach needs to determine the period of each $h_i(x)$; while the primitive idempotent approach selects the corresponding $i_{t_j}$. At this point, the correspondence between ${\rm per}(h(x))$ and $i_{t_j}$ is ${\rm per}(h(x))=\frac{q^{k_{t_j}}-1}{\gcd(q^{k_{t_j}}-1,i_{t_j})}$.
		
		\item[(ii)] The second reason for the differences is the different selection of auxiliary functions. in Theorem \ref{thm35}, we can see that the M$\ddot{o}$bius function $\mu(x)$ was selected as an auxiliary function in its results; In Theorem \ref{thm42}, we can find that the Euler's totient function $\phi(x)$ was selected as an auxiliary function in its result (we can see the Euler's totient function as an auxiliary function in the mathematical derivation of \cite{ChenZhang}). The Euler's totient function and the M$\ddot{o}$bius function have the following relationship:
		$$\phi(n)=\sum_{i|n}\mu(i)\frac{n}{i}.$$
	\end{itemize}
	
	We note that Chen {\it et al.} \cite{ChenFu} further improved the upper bound given in \cite{ChenZhang} by expanding subgroup $\left\langle \tau, \{\sigma_a|a\in\F_q^*\}\right\rangle$ to $\left\langle \tau, \{\sigma_a|a\in\F_q^*\},\mu_q\right\rangle$, where $\mu_a$ is a ring automorphism of $\frac{\F_q[x]}{(x^n-1)}$, which is defined as follows:
	\begin{eqnarray*}
		\mu_q: \frac{\F_q[x]}{(x^n-1)}&\longrightarrow&\frac{\F_q[x]}{(x^n-1)}\\
		\sum_{i=0}^{n-1}a_ix^i&\longmapsto&\mu_q\left(\sum_{i=0}^{n-1}a_ix^i\right)=\sum_{i=0}^{n-1}a_ix^{qi}({\rm mod}~x^n-1).
	\end{eqnarray*}
	However, the subgroup $\left\langle \tau, \{\sigma_a|a\in\F_q^*\},\mu_q\right\rangle$ proposed in \cite{ChenFu} is not suitable for studying the $b$-symbol weight spectrum. This is because for a nonzero vector $\ccc$, $\mu_q$ may alter the $\0$'s run distribution of $\ccc$.

	\section{Conclusion and future works}
	\hspace{0.5cm}In this paper, our contributions are as follows:
	\begin{itemize}
  \setlength{\itemsep}{3pt}
\setlength{\parsep}{0pt}
\setlength{\parskip}{0pt}
		\item[(1)]We studied the $b$-symbol distance spectrum of the unrestricted codes and established an interesting connection with the run of zero in periodic sequences. When $C$ is unrestricted, we also proved that $\D_b(C)$ is closely related to the difference sets with Singer parameters and Golomb ruler (Theorem \ref{the20} and Proposition \ref{pro21}).
		\item[(2)]When the research object is cyclic codes, we propose three approaches to give the upper bound of the $b$-symbol weight spectrum for cyclic codes.  When $b=1$, the upper bound of the Hamming weight spectrum for irreducible cyclic codes given by the $b$-symbol weight formula approach will be better than the upper bound given by \cite{ChenZhang,ChenFu}.
		\item[(3)]The upper bound of the $b$-symbol weight spectrum for many classical cyclic codes, such as RS codes, BCH codes, etc., is given using the period distributions approach, and these results are presented in the appendix.
		\item[(4)]As two by-products of this paper, we answer how many symplectic weights can a linear code have at most (Corollary \ref{addbound}), and provide a basic inequality among the parameters $[n,k,d_H(\C)]_q$ of cyclic codes (Theorem \ref{the9}).
	\end{itemize}
	The future works of this topic will focus on the following points:
	\begin{itemize}
  \setlength{\itemsep}{3pt}
\setlength{\parsep}{0pt}
\setlength{\parskip}{0pt}
		\item Research on the construction of M$b$SW codes and the shortest length of M$b$SW codes.
		\item Try to give the $b$-symbol weight calculation formula for general cyclic codes. We may be able to derive inspiration from \cite{DingLiuMaZeng,FengMomi,Kasami,Machangli,WangTangQiYang,Xiong1FFA,Xiong2DCC,Xiong3FFA,ZHW} to provide the $b$-symbol weight calculation formula for reducible cyclic codes.
		\item Consider the size of the $b$-symbol distance spectrum or $b$-symbol weight spectrum of codes $C$ over $\Z_4$.
	\end{itemize}
	
	\section*{Appendix}
	\hspace{0.5cm}We further give some upper bounds on the size of $W_b(\C)$ of some cyclic codes by using these known (or easily computed) period distributions.
	
	\subsubsection*{The cyclic codes with prime length $n$}
	If $\C$ is a cyclic code with prime length $n$, then a nice upper bound of $W_b(\C)$ is given in the following.
	\begin{corollary}\label{cor48}
		Assume that $n$ is prime. Let $\C$ be a cyclic code with parameters $[n,k]_q$ and $h(x)$ be the parity-check polynomial of $\C$. Assume that $per(h(x))>b$. Then we have
		\begin{equation*}
			|W_b(\C)|\leq\left\{
			\begin{array}{ll}
				\Delta, & \hbox{$n|q-1$;} \\
				\min\left\{\Delta,1+\frac{q^k-q}{n(q-1)}\right\}, & \hbox{$x-1|h(x)$ and $n\not|~q-1$;}\\
				\min\left\{\Delta,\frac{q^k-1}{n(q-1)}\right\}, & \hbox{$x-1\not|~h(x)$ and $n\not|~q-1$.}
			\end{array}
			\right.
		\end{equation*}
	\end{corollary}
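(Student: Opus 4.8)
The plan is to specialize Theorem~\ref{thm35} to the case where the length $n$ is prime, where all the divisor sums appearing there collapse to a single term. Since $n$ is prime its only positive divisors are $1$ and $n$, so the period of every nonzero codeword of $\C$ lies in $\{1,n\}$, and likewise every irreducible factor $h_i(x)$ of $h(x)$ has period $t_i\in\{1,n\}$. As $x-1$ is the unique monic irreducible polynomial over $\F_q$ of period $1$, we have $t_i=1$ exactly for the factor $x-1$ when it divides $h(x)$, and $t_i=n$ for every other factor. Because $b\geq 1$, the hypothesis $per(h(x))>b$ forces $per(h(x))=n$, hence $n>b$; consequently $t=n$ is the only value with $b<t\mid n$, and the sum $\sum_{b<t\mid n}B_t/lcm(t,q-1)$ of Theorem~\ref{thm35} collapses to the single term $B_n/lcm(n,q-1)$.

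Next I would evaluate $B_n$ and $lcm(n,q-1)$. Substituting $t=n$ into the formula $B_t=\sum_{j\mid t}\mu(t/j)q^{\sum_{t_i\mid j}s_i}$ of Theorem~\ref{thm35}, and using $\sum_{t_i\mid n}s_i=\deg h(x)=k$ while $\sum_{t_i\mid 1}s_i=\sum_{t_i=1}s_i$, one obtains $B_n=q^k-q$ when $x-1\mid h(x)$ (the factor $x-1$ has degree $1$) and $B_n=q^k-1$ when $x-1\nmid h(x)$ (the inner sum is empty). Moreover, the condition ``$x-1\mid h(x)$'' is exactly the case ``some $t_i\leq b$'' of Theorem~\ref{thm35}, which is where the extra additive $1$ is inserted; this $1$ records the single $b$-symbol weight $n$ common to all period-$1$ codewords, which form one orbit under $\langle\tau,\{\sigma_a\mid a\in\F_q^*\}\rangle$. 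Since $n$ is prime, $\gcd(n,q-1)=n$ if $n\mid q-1$ and $\gcd(n,q-1)=1$ otherwise, so $lcm(n,q-1)=q-1$ in the first case and $lcm(n,q-1)=n(q-1)$ in the second. Feeding these into Theorem~\ref{thm35} gives $B_n/lcm(n,q-1)=(q^k-1)/(q-1)$ when $n\mid q-1$ (the two sub-cases coincide since $1+(q^k-q)/(q-1)=(q^k-1)/(q-1)$), and $(q^k-1)/(n(q-1))$, respectively $1+(q^k-q)/(n(q-1))$, when $n\nmid q-1$; these are precisely the three displayed right-hand sides.

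The only step that still needs a word is why the first line is $\Delta$ rather than $\min\{\Delta,(q^k-1)/(q-1)\}$. For this I would note $\Delta\leq (q^k-1)/(q-1)$, so that the claim follows from Theorem~\ref{bound} alone. In the relevant range $b<k$ (if $b\geq k$ then $|W_b(\C)|=1$), we have $\Delta=n+1-b\lfloor n/k\rfloor\leq n$; and when $n$ is prime with $n\mid q-1$ we have $q\geq n+1$ and $k\geq 2$, whence $(q^k-1)/(q-1)=1+q+\cdots+q^{k-1}\geq 1+q\geq n+2>n\geq\Delta$. I expect the main place an error could slip in to be the bookkeeping of this additive $1$ — making sure the period-$1$ codewords contribute exactly one orbit and all carry $b$-symbol weight $n$ — since everything else is a direct substitution into Theorem~\ref{thm35} combined with the prime-length arithmetic.
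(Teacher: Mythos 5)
Your proposal is correct and follows essentially the same route as the paper: specialize the period-distribution bound of Theorem~\ref{thm35} to prime $n$, where $B_1$ and $B_n$ are the only nonzero terms, and then split according to whether $x-1$ divides $h(x)$ and whether $n\mid q-1$. If anything, your explicit verification that $\Delta\leq\frac{q^k-1}{q-1}$ in the case $n\mid q-1$ is more careful than the paper's one-line justification of the same point.
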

	\begin{proof}
		When $n$ is prime, the period distribution of $\C$ is
		\begin{equation*}
			\left\{
			\begin{array}{ll}
				B_1=q-1, B_i=0, 2\leq i\leq n-1, B_n=q^k-q & \hbox{if $x-1|h(x)$;} \\
				B_i=0, 1\leq i\leq n-1, B_n=q^k-1 & \hbox{if $x-1\not|~ h(x)$.}
			\end{array}
			\right.
		\end{equation*}
		If $n|q-1$, we have $|W_b(\C)|\leq\min\{\Delta, \frac{q^k-1}{q-1}\}=\Delta$ since $\Gamma(k,q,b)\leq L(k,q,b)=\frac{q^k-1}{q-1}.$
		According to Inequality (\ref{bb}), we obtain the desired result.
	\end{proof}
	
	\subsubsection*{The dual of cyclic codes with two zeroes}
	Recall that $h(x)$ denotes the parity-check polynomial of $\C$.
	If $h(x)$ has only two irreducible factors, the period distribution of $\C$ is easy to compute after a brief discussion. So we omit the proof of the following result.
	\begin{corollary}
		Assume $\gcd(n,q)=1$. Let $\C$ be a cyclic code with parameters $[n,k]_q$. Let $h(x)=h_1(x)h_2(x)$ where $h_1(x)$ and $h_2(x)$ are irreducible over $\F_q$, $\deg(h_i(x))=s_i$ and $per(h_i(x))=t_i$ with $1\leq i\leq2.$
		If  $\min\{t_1,t_2\}>b$, then
		\begin{equation*}
			|W_b(\C)|\leq
			\left\{
			\begin{array}{ll}
				\min\left\{\Delta,\frac{q^k-1}{lcm(t_1,q-1)}\right\}, & \hbox{if $t_1=t_2$;} \\
				\min\left\{\Delta,\frac{q^{s_1}-1}{lcm(t_1,q-1)}+
				\frac{q^{s_2}-1}{lcm(t_2,q-1)}+
				\frac{q^k-q^{s_1}-q^{s_2}+1}{lcm(lcm(t_1,t_2),q-1)}\right\}, & \hbox{if
					$\gcd(t_1,t_2)\in\{t_1,t_2\}$;} \\
				\min\left\{\Delta,\frac{q^{s_1}-1}{lcm(t_1,q-1)}+
				\frac{q^k-q^{s_1}}{lcm(t_2,q-1)}\right\}, & \hbox{if $t_1|t_2$.}
			\end{array}
			\right.
		\end{equation*}
	\end{corollary}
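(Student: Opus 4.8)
The plan is to make precise the ``brief discussion'' alluded to in the statement: compute the period distribution $\{B_1,\dots,B_n\}$ of $\C$ explicitly in each divisibility regime for the pair $(t_1,t_2)$, then substitute it into the period-distribution estimate obtained in the proof of Theorem \ref{thm35} (inequality (\ref{bb}), namely $|W_b(\C)|\le\sum_{t\mid n}\frac{B_t}{lcm(t,q-1)}$, which applies verbatim to the $b$-symbol weight because a codeword and all its cyclic shifts and nonzero scalar multiples share the same $\mathbf{0}$'s run distribution) and intersect with the universal bound $|W_b(\C)|\le\Delta$ from Theorem \ref{bound}.

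First I would use semisimplicity of $\frac{\F_q[x]}{(x^n-1)}$, valid since $\gcd(n,q)=1$, to write $\C=\C_1\oplus\C_2$ as an internal direct sum, where $\C_i$ is the irreducible cyclic code of dimension $s_i$ with check polynomial $h_i(x)$. Each nonzero $\ccc\in\C$ is uniquely $\ccc=\ccc_1+\ccc_2$ with $\ccc_i\in\C_i$. Since $h_i(x)$ is irreducible, $\C_i\cong\F_q[x]/(h_i(x))$ is a field, so the minimal polynomial of any nonzero $\ccc_i$ is exactly $h_i(x)$ and hence $per(\ccc_i)=t_i$; moreover, because $h_1(x),h_2(x)$ are distinct irreducible factors of the squarefree polynomial $x^n-1$, they are coprime, so $per(h_1h_2)=lcm(t_1,t_2)$. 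Consequently $per(\ccc)=t_1$ when $\ccc_2=0$, $per(\ccc)=t_2$ when $\ccc_1=0$, and $per(\ccc)=lcm(t_1,t_2)$ when both $\ccc_1,\ccc_2\neq\mathbf{0}$.

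Now I would just count. If $t_1=t_2=:t$, all three subcases give period $t$, so $B_t=q^k-1$ and $B_i=0$ otherwise, yielding the first displayed bound. If neither $t_i$ divides the other, the three values $t_1$, $t_2$, $lcm(t_1,t_2)$ are pairwise distinct, and $B_{t_1}=q^{s_1}-1$, $B_{t_2}=q^{s_2}-1$, $B_{lcm(t_1,t_2)}=(q^{s_1}-1)(q^{s_2}-1)=q^k-q^{s_1}-q^{s_2}+1$ (using $s_1+s_2=k$), giving the middle bound; here the hypothesis ``$\gcd(t_1,t_2)\in\{t_1,t_2\}$'' should be read as ``neither $t_1\mid t_2$ nor $t_2\mid t_1$.'' If $t_1\mid t_2$ with $t_1\neq t_2$, then $lcm(t_1,t_2)=t_2$, so all $q^k-q^{s_1}$ codewords with $\ccc_2\neq\mathbf{0}$ have period $t_2$ while the remaining $q^{s_1}-1$ nonzero codewords have period $t_1$, giving the third bound. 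In every case the hypothesis $\min\{t_1,t_2\}>b$ forces every period occurring in the distribution to exceed $b$ (note $lcm(t_1,t_2)\ge\max\{t_1,t_2\}>b$), so the ``$+1$'' refinement in Theorem \ref{thm35} is unnecessary and substitution into (\ref{bb}) produces exactly the stated sums; taking the minimum with $\Delta$ completes the proof.

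The only mild subtlety — not really an obstacle — is the bookkeeping of overlapping divisibility cases (in particular recognizing that $t_1=t_2$ collapses all codewords to a single period, and fixing the intended reading of the middle hypothesis) together with recalling the order identity $per(fg)=lcm\big(per(f),per(g)\big)$ for coprime $f,g$; beyond that the argument is a direct dimension count plugged into results already established in the excerpt.
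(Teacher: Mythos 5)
Your argument is correct and is exactly the ``brief discussion'' the paper omits: you compute the period distribution of $\C=\C_1\oplus\C_2$ from the standard facts that every nonzero codeword of the irreducible summand with check polynomial $h_i$ has period $t_i$ and that $per(\ccc_1+\ccc_2)=lcm(t_1,t_2)$ when both components are nonzero (coprime minimal polynomials), and then substitute into inequality (\ref{bb}) and Theorem \ref{bound}, with the hypothesis $\min\{t_1,t_2\}>b$ correctly disposing of the ``$+1$'' term. Your reading of the middle hypothesis as ``neither $t_1\mid t_2$ nor $t_2\mid t_1$'' (i.e.\ $\gcd(t_1,t_2)\notin\{t_1,t_2\}$) is the right repair of what is evidently a typo in the stated case distinction.
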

	\subsubsection*{BCH codes}

	Let $\delta$ be an integer with $2\leq \delta\leq n.$ A BCH code $\C$ over $\F_q$ of length $n$ and designed distance $\delta$ is a cyclic code with defining set $$T=C_b\cup C_{b+1}\cup\cdots\cup C_{b+\delta-2},$$ where $C_i$ denotes the $q$-cyclotomic coset modulo $n$ containing $i$. The period distribution of BCH codes is given in \cite{LH} in some cases. Combining the results in \cite{LH}, we get the following result directly.
	\begin{corollary}
		Let $\C$ be a narrow-sense BCH code with length $n=q^m-1$ and designed distance $\delta$.
		We have the following.
		\begin{equation*}
			|W_b(\C)|\leq \min\left\{\Delta,
			\sum\limits_{b<t|n}\frac{\sum\limits_{j|t}
				\mu\left(\frac{t}{j}\right)A_i}{lcm(t,q-1)}
			\right\},
		\end{equation*}
		where the value of $A_i$ is as follows:
		\begin{itemize}
			\item[{\rm (1)}] If $2\leq \delta <q+1$, then
			\begin{equation*}
				A_i=\left\{
				\begin{array}{ll}
					q^{n-(\delta-1)m}, & \hbox{$i=n$;} \\
					q^{\frac{n}{2}-\lfloor\frac{\delta-1}{2}\rfloor m}, & \hbox{$i=\frac{n}{2}$;} \\
					
					A_{\frac{sn}{j}}=q^{\frac{n}{j}-\lfloor\frac{\delta-1}{j}\rfloor m}, & \hbox{$i=\frac{n}{j}$ for some $j\in [\delta-1]\backslash\{1,2\}$}\\ ~&\hbox{where $1\leq s\leq j-1$ and $\gcd(s,j)=1$;} \\
					q^{\gcd(n,i)}, & \hbox{otherwise.}
				\end{array}
				\right.
			\end{equation*}
			\item[{\rm (2)}] If
			$
			q+1\leq \delta\leq
			\left\{
			\begin{array}{ll}
				q^{\frac{m+1}{2}},  &{\rm{if}~~} m\equiv 1({\rm mod~~2}); \\
				q^{\frac{m}{2}}+2,  &{\rm{if}~~} m\equiv 0({\rm mod~~2}),\\
			\end{array}\right.
			$
			then
			\begin{equation*}
				A_i=\left\{
				\begin{array}{ll}
					q^{n-((\delta-1)-\lfloor\frac{\delta-1}{q}\rfloor)m}, & \hbox{if $i=n$;} \\
					
					A_{\frac{sn}{j}}=q^{\frac{n}{j}}-
					\left(\left\lfloor\frac{\delta-1}{j}\right\rfloor-\left\lfloor\frac{\delta-1}{qj}\right\rfloor\right)m, & \hbox{$i=\frac{n}{j}$ for some $j\in [\delta-1]$,}\\
					~&\hbox{where $1\leq s\leq j-1$ and $\gcd(s,j)=1$;} \\
					q^{\gcd(n,i)}, & \hbox{otherwise.}
				\end{array}
				\right.
			\end{equation*}
			\item[{\rm (3)}] If $q^{\frac{m}{2}}+2\leq\delta\leq2q^{\frac{m}{2}},$ then
			\begin{equation*}
				A_i=\left\{
				\begin{array}{ll}
					q^{n-\left(\delta-\frac{3}{2}-\left\lfloor\frac{\delta-1}{q}\right\rfloor\right)m}, & \hbox{$i=n$;} \\
					A_{\frac{sn}{j}}=q^{\frac{j}{n}-
						\left(\left\lfloor\frac{\delta-1}{j}\right\rfloor-\left\lfloor\frac{\delta-1}{qj}\right\rfloor\right)m}, &
					\hbox{$i=n$, $j\in [\delta-1]\backslash\{1\}$, $\gcd(s,j)=1$,} \\
					~&1\leq s\leq j-1, \left(q^{\frac{r}{2}}+1\right)\not|j;\\
					A_{\frac{sn}{j^{\prime}}}=q^{\frac{n}{j^{\prime}}-\frac{m}{2}}, & \hbox{$r=\frac{n}{j^{\prime}}, j^{\prime}\in [\delta-1], \gcd(s,j^{\prime})=1$,}\\
					~&{1\leq s\leq j^{\prime}-1, \left(q^{\frac{m}{2}}+1\right)|j^{\prime};} \\
					q^{\gcd(n,i)}, & \hbox{otherwise.}
				\end{array}
				\right.
			\end{equation*}
		\end{itemize}
	\end{corollary}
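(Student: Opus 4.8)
The plan is to invoke the period distribution approach of Theorem \ref{thm35} and substitute into it the period distribution of narrow-sense BCH codes determined in \cite{LH}. Recall from Theorem \ref{thm35} (and the $\Delta$-bound of Theorem \ref{bound}) that for an $[n,k]_q$ cyclic code with check polynomial $h(x)=\prod_i h_i(x)$, $\deg h_i=s_i$, $per(h_i)=t_i$, one has $|W_b(\C)|\le\min\bigl\{\Delta,\ \sum_{b<t|n}B_t/lcm(t,q-1)\bigr\}$, with an additional additive $1$ precisely when some $t_i\le b$, and that for $t>1$ the period-distribution coefficient has the closed form $B_t=\sum_{j|t}\mu(t/j)\,q^{\sum_{t_i|j}s_i}$. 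The exponent $\sum_{t_i|j}s_i$ is exactly $\dim_{\F_q}\{\ccc\in\C:per(\ccc)|j\}$, the dimension of the cyclic subcode of $\C$ whose codewords have period dividing $j$; setting $A_j:=q^{\sum_{t_i|j}s_i}$ rewrites the bound in the shape displayed in the statement.

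Thus the only input still needed is an explicit evaluation of $A_j$ for the narrow-sense BCH code of length $n=q^m-1$ and designed distance $\delta$. First I would note that $\dim_{\F_q}\{\ccc\in\C:per(\ccc)|j\}$ is the number of elements of $\{1,2,\dots,\delta-1\}$ whose $q$-cyclotomic coset modulo $n$ consists entirely of $j$-th roots of unity (equivalently, lies in the unique subgroup of $\F_{q^m}^{*}$ of order $j$), counted with coset sizes. This is precisely the quantity that \cite{LH} computes; it separates into the three regimes of $\delta$ appearing in the statement, namely $2\le\delta<q+1$, then $q+1\le\delta\le q^{(m+1)/2}$ (or $q^{m/2}+2$, according to the parity of $m$), and finally $q^{m/2}+2\le\delta\le 2q^{m/2}$, yielding the stated piecewise formulas for $A_j$ at $j=n$, at proper divisors written as $n/j'$ with $\gcd(s,j')=1$, and the fall-through value $q^{\gcd(n,i)}$ otherwise. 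Plugging these values of $A_j$ into the inequality of Theorem \ref{thm35} and keeping the $\min$ with $\Delta$ already present there produces the three claimed bounds directly, case by case.

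The substitution itself is routine; the only care required is bookkeeping. One must translate \cite{LH}'s parametrization of the period distribution — typically phrased in terms of how many cyclotomic cosets of a prescribed size are contained in the defining set $C_1\cup\cdots\cup C_{\delta-1}$ (or in its complement) — into the exponent $\sum_{t_i|j}s_i$ above, and then track the small divisors and the coprimality conditions $\gcd(s,j)=1$ correctly. The one genuinely substantive point, which I would address explicitly rather than by appeal to \cite{LH}, is the additive $1$ of Theorem \ref{thm35}: since $0\notin C_1\cup\cdots\cup C_{\delta-1}$ for a narrow-sense BCH code, $x-1$ divides $h(x)$, so $\C$ contains the period-$1$ constant codewords, whose $b$-symbol weight is $n$; as the $t=1$ term is excluded from the sum $\sum_{b<t|n}$, one has to verify whether this forces the bound to carry an extra $1$ in the stated form, or whether the value $n$ is already attained by some codeword of larger period. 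I expect this to be the main, though short, obstacle in writing out a fully rigorous proof.
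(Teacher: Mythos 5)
Your proposal follows essentially the same route as the paper: the paper's proof simply cites the period distribution of narrow-sense BCH codes from the reference on BCH period distributions and plugs it into the bound of Theorem \ref{thm35}, exactly as you describe. Your additional attention to the additive $1$ (arising because $x-1\mid h(x)$ for narrow-sense BCH codes, so period-$1$ codewords exist) is in fact more careful than the paper, which asserts the result ``directly'' without addressing that term.
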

	\subsubsection*{Reed-Solomon codes}
	A Reed-Solomon code, abbreviated RS code, is a BCH code of length $n=q-1$. By virtue of the period distribution of the RS code, we obtain the following result.
	\begin{corollary}\label{cor51}
		Let $\C$ be a Reed-Solomon code with parameters $[q-1,k]_q$. Then
		\begin{equation*}
			|W_b(\C)|\leq \min\left\{\Delta,1+\sum_{b<t|q-1}\frac{\sum\limits_{j|t}\mu\left(
				\frac{t}{j}\right)q^{1+
					\frac{(k-1)\gcd(q-1,j)}{q-1}}}{q-1}\right\}.
		\end{equation*}
	\end{corollary}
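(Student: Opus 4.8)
The plan is to obtain Corollary~\ref{cor51} as a direct specialization of the period distribution bound of Theorem~\ref{thm35} to the case $n=q-1$. First I would record the structural simplifications available for a Reed--Solomon code $\C=[q-1,k]_q$. Since $n=q-1$, every divisor $t\mid n$ satisfies $\mathrm{lcm}(t,q-1)=q-1$, so all denominators appearing in \eqref{sbt} collapse to the single value $q-1$. Moreover $x^{q-1}-1=\prod_{a\in\F_q^*}(x-a)$ splits into distinct linear factors over $\F_q$, so every irreducible factor $h_i(x)$ of the check polynomial $h(x)$ is linear, whence $s_i=1$ and $t_i=\mathrm{per}(h_i(x))$ equals the multiplicative order of the root of $h_i$, a divisor of $q-1$. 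Because a Reed--Solomon code is a narrow-sense BCH code, $x-1\mid h(x)$, i.e.\ some $t_i=1\le b$; this places us in the third (``otherwise'') branch of \eqref{sbt} and accounts for the leading ``$1+$'' in the claimed bound.

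Next I would compute the exponent $\sum_{t_i\mid j}s_i$ that enters the formula for $B_t$. With all factors linear this sum equals $\deg\gcd\!\big(h(x),x^j-1\big)$, which is the number of roots of $h(x)$ lying in the unique subgroup of $\F_q^*$ of order $\gcd(q-1,j)$ (the common roots of $x^j-1$ and $x^{q-1}-1$). Reading off the root set of $h(x)$ from the consecutive defining set of the code --- equivalently, specializing the BCH period distribution recalled in the preceding corollary to $m=1$, $n=q-1$, designed distance $\delta=n-k+1<q+1$ --- one finds this count is at most $1+\tfrac{(k-1)\gcd(q-1,j)}{q-1}$. Substituting $q^{\,\sum_{t_i\mid j}s_i}$ by $q^{\,1+(k-1)\gcd(q-1,j)/(q-1)}$ in $B_t=\sum_{j\mid t}\mu(t/j)\,q^{\sum_{t_i\mid j}s_i}$, dividing by $q-1$, and intersecting with the universal bound $\Delta=n+1-b\lfloor n/k\rfloor$ from Theorem~\ref{bound}, yields exactly the inequality in the statement.

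The main obstacle is the bookkeeping in the second step: pinning down precisely which powers of $\alpha$ are roots of $h(x)$, counting those in each cyclotomic subgroup of $\F_q^*$, and controlling the rounding so that the clean floor-free expression $1+(k-1)\gcd(q-1,j)/(q-1)$ is a legitimate stand-in for the exact value $\deg\gcd(h(x),x^j-1)$ inside the alternating M\"obius sum defining $B_t$. Everything else --- the collapse of $\mathrm{lcm}(t,q-1)$ to $q-1$, the linearity of the factors of $h(x)$, the identification of the correct branch of \eqref{sbt}, and the reduction to Theorem~\ref{thm35} --- is routine.
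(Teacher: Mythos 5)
Your overall route is the same as the paper's: both proofs are a specialization of the period distribution approach to $n=q-1$, using $\mathrm{lcm}(t,q-1)=q-1$ for every $t\mid q-1$ and then feeding the closed form of $B_t$ into the bound $|W_b(\C)|\leq 1+\sum_{b<t\mid n}B_t/\mathrm{lcm}(t,q-1)$ together with $\Delta$. The difference is in how the closed form
\begin{equation*}
B_t=\sum_{j\mid t}\mu\!\left(\tfrac{t}{j}\right)q^{\,1+\frac{(k-1)\gcd(q-1,j)}{q-1}}
\end{equation*}
is obtained: the paper simply quotes it as the known period distribution of Reed--Solomon codes from the reference [FFS] and is done in two lines, whereas you attempt to rederive it by computing the exponent $\sum_{t_i\mid j}s_i=\deg\gcd\bigl(h(x),x^j-1\bigr)$ from the root set of $h(x)$.

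The step you yourself flag as the ``main obstacle'' is a genuine gap in your derivation, and it sits exactly where the paper substitutes a citation. The exact count of roots of $h(x)$ in the subgroup of order $\gcd(q-1,j)$ is $1+\bigl\lfloor (k-1)\gcd(q-1,j)/(q-1)\bigr\rfloor$ (for a window of $k$ consecutive powers of $\alpha$), and you propose to replace it by the floor-free upper bound $1+(k-1)\gcd(q-1,j)/(q-1)$ \emph{inside} the M\"obius sum defining $B_t$. That substitution is not legitimate termwise: the coefficients $\mu(t/j)$ take the value $-1$, so enlarging $q^{\sum_{t_i\mid j}s_i}$ on those terms pushes the sum \emph{down}, not up, and you get neither an upper bound on $B_t$ nor, after summing over $t$, a proof of the stated inequality. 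To close the argument you would either have to keep the exact (floored) exponents and show the resulting expression is bounded by the one in the statement, or do what the paper does and invoke the formula for the period distribution of RS codes from the literature. Two smaller points: the paper's Reed--Solomon codes are BCH codes of length $q-1$ but not assumed narrow-sense, so you should not rely on $x-1\mid h(x)$; fortunately the leading ``$1+$'' does not need it, since codewords of period at most $b$ contribute only the single $b$-symbol weight $n$ regardless of whether any $t_i\leq b$ actually occurs.
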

	\begin{proof}
		The period distribution of Reed-Solomon code is
		\begin{equation*}
			B_t=\sum_{j|t}\mu\left(\frac{t}{j}\right)q^{1+\frac{(k-1)\gcd(q-1,j)}{q-1}},
		\end{equation*}
		which is given in \cite{FFS},
		and $lcm(t,q-1)=q-1$, then we obtain the desired result.
	\end{proof}
	From Corollary \ref{cor48} to Corollary \ref{cor51}, we can see that the period distributions approach has significant advantages over the primitive idempotent approach, thanks to the existing results on the period distribution of cyclic codes.
	\section*{Acknowledgement}
	This research is supported by Natural Science Foundation of China (12071001, 62171248).

\end{document}